\newdimen\proofrulebreadth \proofrulebreadth=.05em
\newdimen\proofdotseparation \proofdotseparation=1.25ex
\newdimen\proofrulebaseline \proofrulebaseline=2ex
\let\then\relax
\def\hfi{\hskip0pt plus.0001fil}
\mathchardef\squigto="3A3B
\newif\ifinsideprooftree\insideprooftreefalse
\newif\ifonleftofproofrule\onleftofproofrulefalse
\newif\ifproofdots\proofdotsfalse
\newif\ifdoubleproof\doubleprooffalse
\let\wereinproofbit\relax
\newdimen\shortenproofleft
\newdimen\shortenproofright
\newdimen\proofbelowshift
\newbox\proofabove
\newbox\proofbelow
\newbox\proofrulename
\def\shiftproofbelow{\let\next\relax\afterassignment\setshiftproofbelow\dimen0 }
\def\shiftproofbelowneg{\def\next{\multiply\dimen0 by-1 }%
\afterassignment\setshiftproofbelow\dimen0 }
\def\setshiftproofbelow{\next\proofbelowshift=\dimen0 }
\def\setproofrulebreadth{\proofrulebreadth}
\def\prooftree{
%
\ifnum  \lastpenalty=1
\then   \unpenalty
\else   \onleftofproofrulefalse
\fi
%
\ifonleftofproofrule
\else   \ifinsideprooftree
        \then   \hskip.5em plus1fil
        \fi
\fi
%
\bgroup
\setbox\proofbelow=\hbox{}\setbox\proofrulename=\hbox{}%
\let\justifies\proofover\let\leadsto\proofoverdots\let\Justifies\proofoverdbl
\let\using\proofusing\let\[\prooftree
\ifinsideprooftree\let\]\endprooftree\fi
\proofdotsfalse\doubleprooffalse
\let\thickness\setproofrulebreadth
\let\shiftright\shiftproofbelow \let\shift\shiftproofbelow
\let\shiftleft\shiftproofbelowneg
\let\ifwasinsideprooftree\ifinsideprooftree
\insideprooftreetrue
%
\setbox\proofabove=\hbox\bgroup$\displaystyle 
\let\wereinproofbit\prooftree
%
\shortenproofleft=0pt \shortenproofright=0pt \proofbelowshift=0pt
%
\onleftofproofruletrue\penalty1
}
\def\eproofbit{
%
\ifx    \wereinproofbit\prooftree
\then   \ifcase \lastpenalty
        \then   \shortenproofright=0pt  
        \or     \unpenalty\hfil         
        \or     \unpenalty\unskip       
        \else   \shortenproofright=0pt  
        \fi
\fi
%
\global\dimen0=\shortenproofleft
\global\dimen1=\shortenproofright
\global\dimen2=\proofrulebreadth
\global\dimen3=\proofbelowshift
\global\dimen4=\proofdotseparation
\global\count255=\proofdotnumber
%
$\egroup  
%
\shortenproofleft=\dimen0
\shortenproofright=\dimen1
\proofrulebreadth=\dimen2
\proofbelowshift=\dimen3
\proofdotseparation=\dimen4
\proofdotnumber=\count255
}
\def\proofover{
\eproofbit 
\setbox\proofbelow=\hbox\bgroup 
\let\wereinproofbit\proofover
$\displaystyle
}%
\def\proofoverdbl{
\eproofbit 
\doubleprooftrue
\setbox\proofbelow=\hbox\bgroup 
\let\wereinproofbit\proofoverdbl
$\displaystyle
}%
\def\proofoverdots{
\eproofbit 
\proofdotstrue
\setbox\proofbelow=\hbox\bgroup 
\let\wereinproofbit\proofoverdots
$\displaystyle
}%
\def\proofusing{
\eproofbit 
\setbox\proofrulename=\hbox\bgroup 
\let\wereinproofbit\proofusing
\kern0.3em$
}
\def\endprooftree{
\eproofbit 
  \dimen5 =0pt
%
\dimen0=\wd\proofabove \advance\dimen0-\shortenproofleft
\advance\dimen0-\shortenproofright
%
\dimen1=.5\dimen0 \advance\dimen1-.5\wd\proofbelow
\dimen4=\dimen1
\advance\dimen1\proofbelowshift \advance\dimen4-\proofbelowshift
%
\ifdim  \dimen1<0pt
\then   \advance\shortenproofleft\dimen1
        \advance\dimen0-\dimen1
        \dimen1=0pt
        \ifdim  \shortenproofleft<0pt
        \then   \setbox\proofabove=\hbox{%
                        \kern-\shortenproofleft\unhbox\proofabove}%
                \shortenproofleft=0pt
        \fi
\fi
%
\ifdim  \dimen4<0pt
\then   \advance\shortenproofright\dimen4
        \advance\dimen0-\dimen4
        \dimen4=0pt
\fi
%
\ifdim  \shortenproofright<\wd\proofrulename
\then   \shortenproofright=\wd\proofrulename
\fi
%
\dimen2=\shortenproofleft \advance\dimen2 by\dimen1
\dimen3=\shortenproofright\advance\dimen3 by\dimen4
%
\ifproofdots
\then
        \dimen6=\shortenproofleft \advance\dimen6 .5\dimen0
        \setbox1=\vbox to\proofdotseparation{\vss\hbox{$\cdot$}\vss}%
        \setbox0=\hbox{%
                \advance\dimen6-.5\wd1
                \kern\dimen6
                $\vcenter to\proofdotnumber\proofdotseparation
                        {\leaders\box1\vfill}$%
                \unhbox\proofrulename}%
\else   \dimen6=\fontdimen22\the\textfont2 
        \dimen7=\dimen6
        \advance\dimen6by.5\proofrulebreadth
        \advance\dimen7by-.5\proofrulebreadth
        \setbox0=\hbox{%
                \kern\shortenproofleft
                \ifdoubleproof
                \then   \hbox to\dimen0{%
                        $\mathsurround0pt\mathord=\mkern-6mu%
                        \cleaders\hbox{$\mkern-2mu=\mkern-2mu$}\hfill
                        \mkern-6mu\mathord=$}%
                \else   \vrule height\dimen6 depth-\dimen7 width\dimen0
                \fi
                \unhbox\proofrulename}%
        \ht0=\dimen6 \dp0=-\dimen7
\fi
%
\let\doll\relax
\ifwasinsideprooftree
\then   \let\VBOX\vbox
\else   \ifmmode\else$\let\doll=$\fi
        \let\VBOX\vcenter
\fi
\VBOX   {\baselineskip\proofrulebaseline \lineskip.2ex
        \expandafter\lineskiplimit\ifproofdots0ex\else-0.6ex\fi
        \hbox   spread\dimen5   {\hfi\unhbox\proofabove\hfi}%
        \hbox{\box0}%
        \hbox   {\kern\dimen2 \box\proofbelow}}\doll%
%
\global\dimen2=\dimen2
\global\dimen3=\dimen3
\egroup 
\ifonleftofproofrule
\then   \shortenproofleft=\dimen2
\fi
\shortenproofright=\dimen3
%
\onleftofproofrulefalse
\ifinsideprooftree
\then   \hskip.5em plus 1fil \penalty2
\fi
}
\newtheorem{theorem}{Theorem}[section]
\newtheorem{proposition}[theorem]{Proposition}
\newtheorem{lemma}[theorem]{Lemma}
\newtheorem{corollary}[theorem]{Corollary}
\theoremstyle{definition}
\newtheorem{defn}[theorem]{Definition}
\newtheorem{example}[theorem]{Example}
\newcommand{\pow}[1]{\mathcal{P}(#1)}
\newcommand{\defeq}{=_{\mbox{\scriptsize{def}}}}
\newcommand{\nat}{\mathbb{N}}
\newcommand{\integers}{\mathbb{Z}}
\newcommand{\partialfn}{\rightharpoonup}
\renewcommand{\implies}{\rightarrow}
\newcommand{\vars}{\mathcal{V}}
\newcommand{\monoid}{\langle R,\circ,e \rangle}
\newcommand{\cbimodel}{\langle R,\circ,e,-,\infty \rangle}
\newcommand{\mbimodel}{\langle R,\circ,\myerightarrow,e,-,\infty \rangle}
\newcommand{\abgroup}{\langle R,\circ,e,-\rangle}
\newcommand{\BI}{\mathrm{BI}}
\newcommand{\BBI}{\mathrm{BBI}}
\newcommand{\CBI}{\mathrm{CBI}}
\newcommand{\ML}{\mathrm{ML}}
\newcommand{\CLL}{\mathrm{CLL}}
\newcommand{\ILL}{\mathrm{ILL}}
\newcommand{\langs}[1]{\mathcal{L}({#1})}
\newcommand{\LAX}{\mathrm{LAX}}
\newcommand{\AX}{\mathrm{AX}}
\newcommand{\displayCBI}{\mathrm{DL}_{\mathrm{CBI}}}
\newcommand{\aemp}{\emptyset}
\newcommand{\memp}{\varnothing}
\newcommand{\ainv}{\sharp}
\newcommand{\minv}{\flat}
\newcommand{\true}{\top}
\newcommand{\false}{\bot}
\newcommand{\mtrue}{\true^*}
\newcommand{\mfalse}{\false^{\!*}}
\newcommand{\mneg}{\mathord{\sim}}
\newcommand{\mor}{\mathrel{\stackrel{\vcenter{\offinterlineskip
  \hbox{\scriptsize{$\ast$}}\vskip-1.5ex}}{\vee}}}
\newcommand{\wand}{\mathrel{\hbox{---}\llap{$\ast$}}}
\newcommand{\inv}{\mathord{-}}
\newcommand{\dlantform}[1]{\Psi_{#1}}
\newcommand{\dlconform}[1]{\Upsilon_{#1}}
\newcommand{\embed}[1]{\ulcorner{#1}\urcorner}
\newbox\gnBoxA
\newdimen\gnCornerHgt
\newdimen\gnArgHgt
\def\revembed #1{%
    \setbox\gnBoxA=\hbox{$#1$}%
    \gnArgHgt=\ht\gnBoxA%
    \ifnum \gnArgHgt<\gnCornerHgt
        \gnArgHgt=0pt%
    \else
        \advance \gnArgHgt by -\gnCornerHgt%
    \fi
    \lower\gnArgHgt\hbox{$\llcorner\!\!$} \box\gnBoxA %
        \lower\gnArgHgt\hbox{$\!\!\lrcorner$}}
\newcommand{\seq}[2]{#1 \vdash #2}
\newcommand{\drvequiv}[2]{#1 \dashv\vdash #2}
\newcommand{\sat}[2]{#1 \models_\rho #2}
\newcommand{\notsat}[2]{#1 \not\models_\rho #2}
\newcommand{\modusponens}{\mbox{(MP)}}
\newcommand{\subst}{\mbox{(Subst)}}
\newcommand{\invfalse}{\mbox{($\inv\false$)}}
\newcommand{\circfalse}{\mbox{($\circ\false$)}}
\newcommand{\lollyfalse}{\mbox{($\myerightarrow\false$)}}
\newcommand{\invdisj}{\mbox{($\inv\vee$)}}
\newcommand{\circdisj}{\mbox{($\circ\vee$)}}
\newcommand{\lollydisjl}{\mbox{($\myerightarrow\vee$L)}}
\newcommand{\lollydisjr}{\mbox{($\myerightarrow\vee$R)}}
\newcommand{\ax}{\mbox{(Id)}}
\newcommand{\cut}{\mbox{(Cut)}}
\newcommand{\truel}{\mbox{($\true$L)}}
\newcommand{\truer}{\mbox{($\true$R)}}
\newcommand{\falsel}{\mbox{($\false$L)}}
\newcommand{\falser}{\mbox{($\false$R)}}
\newcommand{\mtruel}{\mbox{($\mtrue$L)}}
\newcommand{\mtruer}{\mbox{($\mtrue$R)}}
\newcommand{\mfalsel}{\mbox{($\mfalse$L)}}
\newcommand{\mfalser}{\mbox{($\mfalse$R)}}
\newcommand{\negr}{\mbox{($\neg$R)}}
\newcommand{\negl}{\mbox{($\neg$L)}}
\newcommand{\andr}{\mbox{($\wedge$R)}}
\newcommand{\andl}{\mbox{($\wedge$L)}}
\newcommand{\orr}{\mbox{($\vee$R)}}
\newcommand{\orl}{\mbox{($\vee$L)}}
\newcommand{\impr}{\mbox{($\rightarrow$R)}}
\newcommand{\impl}{\mbox{($\rightarrow$L)}}
\newcommand{\starl}{\mbox{($*$L)}}
\newcommand{\starr}{\mbox{($*$R)}}
\newcommand{\wandl}{\mbox{($\wand$L)}}
\newcommand{\wandr}{\mbox{($\wand$R)}}
\newcommand{\morl}{\mbox{($\,\mor$L)}}
\newcommand{\morr}{\mbox{($\mor$R)}}
\newcommand{\mnegl}{\mbox{($\mneg$L)}}
\newcommand{\mnegr}{\mbox{($\mneg$R)}}
\newcommand{\myerightarrow}{\mathbin{- \! {\bullet}}}
\newcommand{\linimp}{\mathbin{- \! {\circ}}}
\newcommand{\adl}[2]{\mbox{(AD{#1}{#2})}}
\newcommand{\mdl}[2]{\mbox{(MD{#1}{#2})}}
\newcommand{\display}{\mbox{($\displayeq$)}}
\newcommand{\displayeq}{\equiv_{D}}
\newcommand{\weakr}{\mbox{(WkR)}}
\newcommand{\weakl}{\mbox{(WkL)}}
\newcommand{\contrr}{\mbox{(CtrR)}}
\newcommand{\contrl}{\mbox{(CtrL)}}
\newcommand{\aassocl}{\mbox{(AAL)}}
\newcommand{\aassocr}{\mbox{(AAR)}}
\newcommand{\massocl}{\mbox{(MAL)}}
\newcommand{\massocr}{\mbox{(MAR)}}
\newcommand{\aunitl}{\mbox{($\aemp$L)}}
\newcommand{\aunitr}{\mbox{($\aemp$R)}}
\newcommand{\munitl}{\mbox{($\memp$L)}}
\newcommand{\munitr}{\mbox{($\memp$R)}}
\def\doi{6 (3:3) 2010}
\begin{document}

\title[Classical $\BI$]{Classical $\BI$: Its Semantics and Proof Theory}

\author[J.~Brotherston]{James Brotherston\rsuper a}
\address{Dept.\ of Computing, Imperial College London, UK}
\email{J.Brotherston@imperial.ac.uk, ccris@doc.ic.ac.uk}
\thanks{{\lsuper a}Research supported by an EPSRC Postdoctoral Fellowship.}

\author[C.~Calcagno]{Cristiano Calcagno\rsuper b}
\address{\vskip-6 pt}
\thanks{{\lsuper b}Research supported by an EPSRC Advanced Fellowship.}

\keywords{Classical $\BI$, bunched logic, resource models,
display logic, completeness} 
\subjclass{F.4.1}

\begin{abstract}
We present \emph{Classical $\BI$} ($\CBI$), a new addition to the
family of \emph{bunched logics} which originates in O'Hearn and Pym's
logic of bunched implications $\BI$. $\CBI$ differs from
existing bunched logics in that its multiplicative connectives
behave classically rather than intuitionistically (including in
particular a multiplicative version of classical negation). At
the semantic level, $\CBI$-formulas have the normal bunched
logic reading as declarative statements about resources, but
its resource models necessarily feature more structure than
those for other bunched logics; principally, they satisfy the
requirement that every resource has a unique dual.  At the
proof-theoretic level, a very natural formalism for $\CBI$ is
provided by a display calculus \emph{\`a la} Belnap, which can
be seen as a generalisation of the bunched sequent calculus
for $\BI$. In this paper we formulate the aforementioned
model theory and proof theory for $\CBI$, and prove some
fundamental results about the logic, most notably completeness
of the proof theory with respect to the semantics.
\end{abstract}

\maketitle

\section{Introduction}
\label{sec:introduction} Substructural logics, whose best-known
varieties include linear logic, relevant logic and the Lambek
calculus, are characterised by their restriction of the use of
the so-called \emph{structural} proof principles of classical
logic~\cite{Restall:00}. These may be roughly characterised as
those principles that are insensitive to the syntactic form of
formulas, chiefly weakening (which permits the introduction of
redundant premises into an argument) and contraction (which
allows premises to be arbitrarily duplicated). For example, in
linear logic, only formulas prefixed with a special
``exponential'' modality are subject to weakening and
contraction, while in relevant logic it is usual for
contraction but not weakening to be permitted.

\emph{Bunched logic} is a relatively new area of substructural logic, but one that has
been receiving increasing attention amongst the logical and
computer science research communities in recent years.
In bunched logic, the restriction on the use of structural proof principles is achieved by allowing
the connectives of a standard ``additive'' propositional logic,
which admits weakening and contraction, to be freely combined
with those of a second ``multiplicative'' propositional logic,
which does not. In contrast to linear logic, whose restricted
treatment of additive connectives yields a natural
constructive reading of proofs as
computations~\cite{Abramsky:93}, the inclusion of unrestricted
additives in bunched logics gives rise to a simple Kripke-style
truth interpretation according to which formulas can be
understood as declarative statements about
\emph{resource}~\cite{Pym-OHearn-Yang:04}. This resource
reading of bunched logic has found substantial application in
computer science, most notably in the shape of \emph{separation
logic}, which is a Hoare logic for program verification based upon various bunched logic
models of heap memory~\cite{Reynolds:02}.  The proof theory of bunched logic also differs markedly from the proof theory of linear logic, which is typically formulated in terms of sequent calculi whose sequents have the usual flat context structure based on lists or (multi)sets.  However, since bunched logics contain both an (unrestricted) additive logic and a multiplicative one, proof systems for bunched logic employ both additive and multiplicative structural connectives for  forming contexts (akin to the comma in standard sequent calculus).  This gives rise to proof judgements whose contexts are \emph{trees} --- originally termed ``bunches'' --- built from structural connectives and formulas.

Although the main ideas necessary to develop bunched logic can
retrospectively be seen to have been present in earlier work on
relevant logics, it first emerged fairly recently with the
introduction of $\BI$, O'Hearn and Pym's \emph{logic of bunched
implications}~\cite{OHearn-Pym:99}. Semantically, $\BI$ can be
seen to arise by considering the structure of cartesian doubly
closed categories --- i.e.\ categories with one cartesian closed
structure and one symmetric monoidal closed
structure~\cite{Pym:02}. Concretely, such categories
correspond to a combination of standard intuitionistic logic
with multiplicative intuitionistic linear logic\footnote{We
refer here to linear logic without the exponentials.}
($\mathrm{MILL}$), and thus one has the following propositional
connectives\footnote{$\mtrue$, which is the unit of $*$, is
often elsewhere written $I$.} for $\BI$:

\[\begin{array}{l@{\hspace{0.5cm}}c@{\hspace{0.5cm}}c@{\hspace{0.5cm}}c@{\hspace{0.5cm}}c@{\hspace{0.5cm}}c@{\hspace{0.5cm}}c}
\mbox{Additive:} & \true & \false & \neg & \wedge & \vee & \rightarrow \\
\mbox{Multiplicative:} & \;\mtrue & & & * & & \wand
\end{array}\]
(where $\neg$ is the intuitionistic negation defined by $\neg F
= F \implies \false$). As well as the semantics based on the
aforementioned categories, $\BI$ can be given an algebraic
semantics: one simply requires that the algebraic structure for
$\BI$ has both the Heyting algebra structure required to
interpret intuitionistic logic, and the residuated commutative
monoid structure required to interpret $\mathrm{MILL}$.  By
requiring a Boolean algebra instead of the Heyting algebra, one
obtains the variant logic Boolean $\BI$ ($\BBI$), which can be
seen as a combination of classical logic and
$\mathrm{MILL}$~\cite{Pym-OHearn-Yang:04,Pym:02}.  Most of the
computer science applications of bunched logic are in fact
based on $\BBI$ rather than $\BI$; for example, the heap model
used in separation logic is a model of
$\BBI$~\cite{Ishtiaq-OHearn:01}.

A natural question from a logician's standpoint is whether
bunched logics exist in which the multiplicative connectives
behave classically, rather than intuitionistically (and do not
simply collapse into their additive equivalents).  A computer
scientist might also enquire whether such a logic could, like
its siblings, be understood semantically in terms of resource.
In this paper, we address these questions by presenting a new
addition to the bunched logic family, which we call
\emph{Classical $\BI$} ($\CBI$), and whose additives and
multiplicatives both behave classically.  In particular, $\CBI$
features multiplicative analogues of the additive falsity,
negation, and disjunction, which are absent in the other
bunched logics. Thus $\CBI$ can be seen as a combination of
classical logic and multiplicative classical linear logic
($\mathrm{MLL}$). We examine $\CBI$ both from the
model-theoretic and the proof-theoretic perspective, each of
which we describe below.

\paragraph{\em Model-theoretic perspective:}  From the point of view of
computer science, the main interest of bunched logic stems from
its Kripke-style frame semantics based on relational
commutative monoids, which can be understood as an abstract
representation of
resource~\cite{Galmiche-Mery-Pym:05,Galmiche-Larchey-Wendling:06}.
In such models, formulas of bunched logic have a natural
declarative reading as statements about resources (i.e.\ monoid
elements). Thus the multiplicative unit $\mtrue$ denotes the
empty resource (i.e.\ the monoid identity element) and a
multiplicative conjunction $F * G$ of two formulas denotes
those resources which divide, via the monoid operation, into
two component resources satisfying respectively $F$ and $G$.
The multiplicative implication $\wand$ then comes along
naturally as the right-adjoint of the multiplicative
conjunction $*$, so that $F \wand G$ denotes those resources with the property that, when they are extended with a resource satisfying $F$, this extension satisfies $G$.

The difference between intuitionistic and classical logics can
be seen as a matter of the differing strengths of their
respective negations~\cite{Prawitz:65}. From this viewpoint the
main obstacle to formulating a bunched logic like $\CBI$ is in
giving a convincing account of classical multiplicative
negation; multiplicative falsity can then be obtained as the
negation of $\mtrue$ and multiplicative disjunction as the de
Morgan dual of $*$.  We show that multiplicative negation can
be given a declarative resource reading just as for the usual
bunched logic connectives, provided that we enrich the
relational commutative monoid structure of $\BBI$-models with
an involutive operator (which interacts with the binary monoid
operation in a suitable fashion). Thus every resource in a
$\CBI$-model is required to have a unique dual. In particular, every
Abelian group can be seen as a $\CBI$-model by taking the dual of an element to be its group inverse. Our interpretation
of multiplicative negation $\mneg$ is then in the tradition of
Routley's interpretation of negation in relevant
logic~\cite{Routley-Routley:72,Dunn:93}: a resource satisfies
$\mneg F$ iff its dual fails to satisfy $F$. This
interpretation, which at first sight may seem unusual, is
justified by the desired semantic equivalences between
formulas. For example, under our interpretation $F \wand G$ is
semantically equivalent to $\mneg F \mor G$, where $\mor$
denotes the multiplicative disjunction.

In Section 2 we state the additional conditions on $\BBI$-models
qualifying them as $\CBI$-models and examine some fundamental
properties of these models.  We then give the forcing semantics
for $\CBI$-formulas with respect to our models, and compare the
resulting notion of validity with that for $\BBI$. Our most
notable result about validity is that $\CBI$ is a non-conservative extension
of $\BBI$, which indicates that $\CBI$ is intrinsically
different in character to its bunched logic siblings, and
justifies independent consideration.

\paragraph{\em Proof-theoretic perspective:} The proof theory of $\BI$ (cf.\
\cite{Pym:02,OHearn-Pym:99}) can be motivated by the
observation that the presence of two implications $\rightarrow$
and $\wand$ should give rise to two context-forming operations,
which correspond to the conjunctions $\wedge$ and $*$ at the
meta-level.  This situation is illustrated by the following
(intuitionistic) sequent calculus right-introduction rules for
the implications:
\[\begin{array}{c@{\hspace{1.0cm}}c}
\begin{prooftree}
  \seq{\Gamma; F_1}{F_2}
  \justifies
  \seq{\Gamma}{F_1 \rightarrow F_2}
  \using \impr
\end{prooftree}
&
\begin{prooftree}
  \seq{\Gamma,F_1}{F_2}
  \justifies
  \seq{\Gamma}{F_1 \wand F_2}
  \using \wandr
\end{prooftree}
\end{array}\]
For similar reasons, there should also be two different ``empty contexts'' or structural units,
which are the structural equivalents of $\true$ and $\mtrue$ respectively.
Accordingly, the contexts $\Gamma$ on the left-hand side of the
sequents in the rules above are not sets or sequences, as in
standard sequent calculi, but rather \emph{bunches}: trees
whose leaves are formulas or structural units and whose internal nodes are either
semicolons or commas. The crucial
difference between the latter two operations is that weakening and
contraction are possible for the additive semicolon but not for
the multiplicative comma.  Since $\BI$ is intuitionistic
in both its additive and multiplicative components,
bunches arise only on the left-hand side of sequents, with a
single formula on the right.  In order to take into account the
bunched contexts in $\BI$ sequents, the left-introduction rules
for logical connectives are then formulated so as to apply at
arbitrary positions within a bunch\footnote{In this respect,
the $\BI$ sequent calculus resembles calculi for \emph{deep
inference}~\cite{Brunnler:06}. However, deep inference calculi
differ substantially from sequent calculi in that they
abandon the distinction between logical and structural connectives, and thus technically they are more akin to term rewriting systems.}. E.g., the
left-introduction rules for the two implications can be
formulated as:
\[\begin{array}{c@{\hspace{1cm}}c}
\begin{prooftree}
\seq{\Delta}{F_1}
\phantom{w}
\seq{\Gamma(F_2)}{F}
\justifies
\seq{\Gamma(\Delta ; F_1 \implies F_2)}{F}
\using \impl
\end{prooftree}
&
\begin{prooftree}
\seq{\Delta}{F_1}
\phantom{w}
\seq{\Gamma(F_2)}{F}
\justifies
\seq{\Gamma(\Delta , F_1 \wand F_2)}{F}
\using \wandl
\end{prooftree}
\end{array}\]
where $\Gamma(\Delta)$ denotes a bunch $\Gamma$ with a
distinguished sub-bunch occurrence $\Delta$.  In contrast, the
right-introduction rules need take into account only the top
level of bunches, as in the right-introduction rules above for
the implications.

For a classical bunched logic like $\CBI$, it would appear
natural from a proof-theoretic perspective to consider a full
two-sided sequent calculus, in which semicolon and comma in
bunches on the right of sequents correspond to the additive and multiplicative
disjunctions.  Unfortunately, it is far from
clear whether there exists such a sequent calculus admitting
cut-elimination, or a similar natural deduction system
satisfying normalisation
(see~\cite{Brotherston:10,Pym:02} for some discussion
of the difficulties).

In Section~\ref{sec:display_BI}, we address this rather
unsatisfactory situation by formulating a \emph{display
calculus} proof system for $\CBI$ that satisfies
cut-elimination, with an attendant subformula property for
cut-free proofs.  Display calculi were first introduced in the setting of
Belnap's \emph{display logic}~\cite{Belnap:82}, which is a generalised
framework that can be instantiated to give consecution calculi
\emph{\`a la} Gentzen for a wide class
of logics. Display calculi are characterised by the fact that any proof judgement may always be rearranged so that a chosen structure occurrence appears alone on one side of the proof turnstile. Remarkably, Belnap also showed that cut-elimination is guaranteed for any display calculus whose proof rules satisfy 8 simple syntactic conditions.  It is a straightforward matter to instantiate Belnap's display logic so as to obtain a display calculus for $\CBI$, and to show that it meets the conditions for cut-elimination. Moreover, our display calculus
is sound and complete with respect to validity in our class of
$\CBI$-models. Soundness follows by showing directly that each
of the proof rules preserves $\CBI$-validity. The proof of completeness, which is presented in
Section~\ref{sec:completeness}, is by reduction to a
completeness result for modal logic due to Sahlqvist.

\paragraph{\em Applications:} Bunched logic (especially $\BBI$) and its
resource semantics has found application in several areas of
computer science, including polymorphic
abstraction~\cite{Collinson-Pym-Robinson:08}, type systems for
reference update and disposal~\cite{Berdine-OHearn:06}, context
logic for tree update~\cite{Calcagno-Gardner-Zarfaty:07} and,
most ubiquitously, separation logic~\cite{Reynolds:02} which
forms the basis of many contemporary approaches to reasoning
about pointer programs (recent examples include
\cite{Parkinson-Bierman:08,Chin-etal:08,Chang-Rival:08}).

Unfortunately, the fact that $\CBI$ is a non-conservative
extension of $\BBI$ appears to rule out the naive use of $\CBI$
for reasoning directly about some $\BBI$-models such as the
separation logic heap model, which is not a $\CBI$-model. On
the other hand, non-conservativity indicates that $\CBI$ can
reasonably be expected to have different applications to those
of $\BI$ and $\BBI$. In Section~\ref{sec:examples} we
consider a range of example $\CBI$-models drawn from quite
disparate areas of mathematics and computer science, including
bit arithmetic, regular languages, money, generalised heaps
 and fractional permissions. In Section~\ref{sec:conclusion} we suggest some
directions for future applications of $\CBI$, and discuss some
related work.

This paper is a revised and expanded version
of~\cite{Brotherston-Calcagno:09}, including several new
results. We have endeavoured to include detailed proofs where
space permits.

\section{Frame semantics and validity for $\CBI$}
\label{sec:models}

In this section we define $\CBI$, a fully classical bunched
logic featuring additive and multiplicative versions of all the
usual propositional connectives (cf.~\cite{Pym:02}), via a
class of Kripke-style frame models.  We also compare the resulting notion of $\CBI$-validity with validity in $\BBI$.

Our $\CBI$-models are based on the relational commutative
monoids used to model
$\BBI$~\cite{Galmiche-Larchey-Wendling:06,Calcagno-Gardner-Zarfaty:07}.
In fact, they are special cases of these monoids, containing
extra structure: an involution operation `$\inv$' on elements
and a distinguished element $\infty$ that characterises the
result of combining an element with its involutive dual. We
point the reader to Section~\ref{sec:examples} for a range of
examples of such models.

In the following, we first recall the usual frame models of $\BBI$, and then give
the additional conditions required for such models to be $\CBI$-models.
Note that we write $\pow{X}$ for the powerset of a
set $X$.

\begin{defn}[$\BBI$-model]
\label{defn:BBI_model} A \emph{$\BBI$-model} is a relational
commutative monoid, i.e.\ a tuple $\monoid$, where $e \in R$
and $\circ: R \times R \rightarrow \pow{R}$ are such that
$\circ$ is commutative and associative, with $r \circ e =
\{r\}$ for all $r \in R$.   Associativity of $\circ$ is
understood with respect to its pointwise extension to $\pow{R}
\times \pow{R} \rightarrow \pow{R}$, given by $X \circ Y \defeq
\bigcup_{x \in X,y \in Y} x \circ y$.
\end{defn}

Note that we could equally well represent the operation $\circ$
in a $\BBI$-model $\monoid$ as a ternary relation, i.e.\ $\circ
\subseteq R \times R \times R$, as is typical for the frame
models used for modal logic~\cite{Blackburn-deRijke-Venema:01}
and relevant logic~\cite{Restall:00}. We view $\circ$ as a
binary function with type $R \times R \rightarrow \pow{R}$
because $\BBI$-models are typically understood as abstract
models of \emph{resource}, in which $\circ$ is understood as a
(possibly non-deterministic) way of combining resources from
the set $R$.

\begin{defn}[$\CBI$-model]
\label{defn:CBI_model} A \emph{$\CBI$-model} is given by a
tuple $\cbimodel$, where $\monoid$ is a $\BBI$-model and $\inv
: R \rightarrow R$ and $\infty \in R$ are such that, for each
$x \in R$, $\inv x$ is the unique element of $R$ satisfying
$\infty \in x \circ \inv x$. We extend `$\inv$' pointwise to
$\pow{R} \rightarrow \pow{R}$ by $\inv X
\defeq \{\inv x \mid x \in X\}$.
\end{defn}

We remark that, in our original definition of
$\CBI$-models~\cite{Brotherston-Calcagno:09}, both $\infty$ and
$\inv{x}$ for $x \in R$ were defined as subsets of $R$, rather
than elements of $R$. However, under such circumstances both
$\inv{x}$ and $\infty$ are forced to be singleton sets by the
other conditions on $\CBI$-models\footnote{In fact, $\infty$ is
forced to be a singleton set because our models employ a single
unit $e$ and we have $\infty = \inv e$ (see
Prop~\ref{prop:cbimodel_properties}).  It is, however, possible
to generalise our $\BBI$-models to multi-unit models employing
a set of units $E \subseteq R$ such that $x \circ E = \{x\}$
(cf.~\cite{Dockins-etal:09,Brotherston-Kanovich:10}).
Then, in the corresponding definition of $\CBI$-model, we have
$\infty \subseteq R$ is not a singleton in general and $\inv x$
is required to be the unique element in $R$ with $\infty \cap
(x \circ \inv x) \neq \emptyset$. However, as we shall show in
Section~\ref{sec:completeness}, $\CBI$ is already complete with
respect to the class of single-unit models provided by our
Definition~\ref{defn:CBI_model}.
}.
Thus there is no loss of generality in requiring $\inv{x}$ and
$\infty$ to be elements of $R$.

\begin{proposition}[Properties of $\CBI$-models]
\label{prop:cbimodel_properties} If $\cbimodel$ is a
$\CBI$-model then:
\begin{enumerate}
\item\label{propitem:doubleinv} $\forall x \in R.\;
    \inv\inv x = x$;

\item\label{propitem:infty} $\inv e=\infty$;

\item\label{propitem:cancel} $\forall x,y,z \in R.\; z \in
    x \circ y$ iff $\inv x \in y \circ \inv z$ iff $\inv y
    \in x \circ \inv z$.
\end{enumerate}
\end{proposition}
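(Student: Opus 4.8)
The plan is to derive all three items directly from the defining clause of a $\CBI$-model, the workhorse being its \emph{uniqueness} half. I would first isolate the following remark: since $\inv x$ is \emph{the unique} $w \in R$ with $\infty \in x \circ w$, it follows that for any $w \in R$, if $\infty \in x \circ w$ then $w = \inv x$. Everything else is then bookkeeping with commutativity, associativity, the unit law $r \circ e = \{r\}$, and monotonicity of the pointwise extension of $\circ$ (if $A \subseteq B$ then $A \circ C \subseteq B \circ C$).

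For item~\ref{propitem:infty}: commutativity and the unit law give $\infty \in e \circ \inv e = \{\inv e\}$, hence $\inv e = \infty$. For item~\ref{propitem:doubleinv}: from $\infty \in x \circ \inv x$ and commutativity we get $\infty \in \inv x \circ x$, so applying the uniqueness remark with $\inv x$ in the role of ``$x$'' yields $x = \inv\inv x$.

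Item~\ref{propitem:cancel} carries what little content there is. By commutativity of $\circ$ it suffices to prove the single biconditional $z \in x \circ y$ iff $\inv x \in y \circ \inv z$; the equivalence with $\inv y \in x \circ \inv z$ then follows by interchanging the roles of $x$ and $y$ (using $x \circ y = y \circ x$). For the forward direction: $z \in x \circ y$ gives $z \circ \inv z \subseteq (x \circ y) \circ \inv z = x \circ (y \circ \inv z)$, and since $\infty \in z \circ \inv z$ there is some $w \in y \circ \inv z$ with $\infty \in x \circ w$; the uniqueness remark forces $w = \inv x$, i.e.\ $\inv x \in y \circ \inv z$. For the converse: $\inv x \in y \circ \inv z$ gives $x \circ \inv x \subseteq x \circ (y \circ \inv z) = (x \circ y) \circ \inv z$, and since $\infty \in x \circ \inv x$ there is some $v \in x \circ y$ with $\infty \in v \circ \inv z = \inv z \circ v$; the uniqueness remark applied to $\inv z$, together with item~\ref{propitem:doubleinv} to rewrite $\inv\inv z = z$, forces $v = z$, i.e.\ $z \in x \circ y$.

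The only subtlety is procedural rather than mathematical: before invoking the uniqueness remark one must commute $\circ$ so that the element whose dual is at stake sits on the left, and one must keep track of which element that is (it is $x$ in one direction of item~\ref{propitem:cancel} and $\inv z$ in the other). I do not anticipate any genuine obstacle.
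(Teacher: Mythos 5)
Your proof is correct and follows essentially the same route as the paper's: both rest on the uniqueness clause of the $\CBI$-model axiom together with the computation $\infty \in z \circ \inv z \subseteq (x \circ y) \circ \inv z = x \circ (y \circ \inv z)$. The only (cosmetic) difference is in part~\ref{propitem:cancel}, where the paper proves three implications cyclically, reusing the first implication plus double involution to get the other two, whereas you prove one biconditional directly in both directions and obtain the other by the $x \leftrightarrow y$ symmetry; both organizations are equally routine.
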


\begin{proof}\noindent
\begin{enumerate}
\item By definition of $\CBI$-models, and using
    commutativity of $\circ$, we have $\infty \in \inv x
    \circ x$. However, again by definition, $\inv\inv x$ is
    the unique $y \in R$ such that $\infty \in \inv x \circ
    y$. Thus we must have $\inv\inv x = x$.

\item We have that $\inv e$ is the unique $y \in R$ such
    that $\infty \in e \circ y$.  Since $\infty \in
    \{\infty\} = e \circ \infty$ by definition, we have
    $\inv e = \infty$.

\item We prove that the two bi-implications hold by showing
    three implications.  Suppose first that $z \in x \circ
    y$. Using associativity of $\circ$, we have:
    \[\infty \in z \circ \inv z \subseteq (x \circ y)
    \circ \inv z = x \circ (y \circ \inv z)\] Since $\inv
    x$ is the unique $w \in R$ such that $\infty \in x
    \circ w$, we must have $\inv x \in y \circ \inv z$.

    For the second implication, suppose that $\inv x \in y
    \circ \inv z$.  By the first implication and
    part~\ref{propitem:doubleinv} above and commutativity
    of $\circ$, we then have as required:
    \[\inv y \in \inv z \circ \inv\inv x = \inv\inv x \circ
    \inv z = x \circ \inv z\]

    Finally, for the third implication, suppose that $\inv
    y \in x \circ \inv z$. Using the first and second
    implications together we obtain $\inv\inv z \in y \circ
    \inv\inv x$, i.e.\ $z \in x \circ y$ as required. This
    completes the proof.
\end{enumerate}
\end{proof}

We note that for any $\CBI$-model $\cbimodel$ based on a fixed
underlying $\BBI$-model $\monoid$, part~\ref{propitem:infty} of
Proposition~\ref{prop:cbimodel_properties} implies that the
element $\infty$ is determined by the choice of `$\inv$', while
the $\CBI$-model axiom in Definition~\ref{defn:CBI_model}
ensures that, conversely, `$\inv$' is determined by the choice of
$\infty$.  We include both `$\inv$' and $\infty$ in our model
definition only for convenience.

We now define the syntax of formulas of $\CBI$, and their
interpretation inside our $\CBI$-models.  We assume a fixed, countably infinite set
$\vars$ of propositional variables.

\begin{defn}[$\CBI$-formula]
\label{defn:CBI_formula} \emph{Formulas} of $\CBI$ are given by
the following grammar:
\[\begin{array}{r@{\hspace{0.2cm}}l}
F ::= & P \mid \true \mid \false \mid \neg F \mid F \wedge F \mid F \vee F \mid
F \rightarrow F \mid \\
& \mtrue \mid \mfalse \mid \mneg F \mid F * F \mid F \mor F \mid F \wand F\;
\end{array}\]
where $P$ ranges over $\vars$.  We treat the negations $\neg$
and $\mneg$ as having greater precedence than the other
connectives, and use parentheses to disambiguate where
necessary. As usual, we write $F \leftrightarrow G$ as an
abbreviation for $(F \implies G) \wedge (G \implies F)$.
\end{defn}

We remark that the connectives of $\CBI$-formulas are the
standard connectives of $\BBI$-formulas, plus a multiplicative
falsity $\mfalse$, negation $\mneg$ and disjunction $\mor$. In
order to define the interpretation of our formulas in a given
model, we need as usual environments which interpret the
propositional variables, and a satisfaction or ``forcing''
relation which interprets formulas as true or false relative to
model elements in a given environment.

\begin{defn}[Environment]
\label{defn:environment} An \emph{environment} for either a
$\CBI$-model $\cbimodel$ or a $\BBI$-model $\monoid$ is a
function $\rho: \vars \rightarrow \pow{R}$ interpreting
propositional variables as subsets of $R$.  An environment for
a model $M$ will sometimes be called an \emph{$M$-environment}.
\end{defn}

\begin{defn}[$\CBI$ satisfaction relation]
\label{defn:CBI_satisfaction} Let $M=\cbimodel$ be a
$\CBI$-model.  \emph{Satisfaction} of a $\CBI$-formula $F$ by
an $M$-environment $\rho$ and an element $r \in R$ is denoted
$\sat{r}{F}$ and defined by structural induction on $F$ as
follows:
\[\begin{array}{r@{\hspace{0.5cm}}c@{\hspace{0.5cm}}l}
\sat{r}{P} & \Leftrightarrow & r \in \rho(P) \\
\sat{r}{\true} & \Leftrightarrow & \mbox{always} \\
\sat{r}{\false} & \Leftrightarrow & \mbox{never} \\
\sat{r}{\neg F} & \Leftrightarrow & \notsat{r}{F} \\
\sat{r}{F_1 \wedge F_2} & \Leftrightarrow & \sat{r}{F_1} \mbox{ and } \sat{r}{F_2} \\
\sat{r}{F_1 \vee F_2} & \Leftrightarrow & \sat{r}{F_1} \mbox{ or } \sat{r}{F_2} \\
\sat{r}{F_1 \rightarrow F_2} & \Leftrightarrow & \sat{r}{F_1} \mbox{ implies } \sat{r}{F_2} \\
\sat{r}{\mtrue} & \Leftrightarrow & r = e \\
\sat{r}{\mfalse} & \Leftrightarrow & r \neq \infty \\
\sat{r}{\mneg F} & \Leftrightarrow & \notsat{\inv r}{F} \\
\sat{r}{F_1 * F_2} & \Leftrightarrow & \exists r_1,r_2 \in R.\ r \in r_1 \circ r_2 \mbox{ and } \sat{r_1}{F_1} \mbox{ and } \sat{r_2}{F_2} \\
\sat{r}{F_1 \mor F_2} & \Leftrightarrow & \forall r_1,r_2 \in R.\ \inv r \in r_1 \circ r_2 \mbox{ implies } \sat{{\inv r}_1}{F_1} \mbox{ or } \sat{{\inv r}_2}{F_2} \\
\sat{r}{F_1 \wand F_2} & \Leftrightarrow & \forall r',r'' \in R.\ r''\in r \circ r' \mbox{ and } \sat{r'}{F_1} \mbox{ implies } \sat{r''}{F_2}
\end{array}\]
\end{defn}

We remark that the above satisfaction relation for $\CBI$ is
just an extension of the standard satisfaction relation for
$\BBI$ with the clauses for $\mfalse$, $\mneg$ and $\mor$.  The
interpretations of $\mfalse$ and $\mor$, however, may be
regarded as being determined by the interpretation of the
multiplicative negation $\mneg$ since, as we expect the
classical relationships between multiplicative connectives to
hold, we may simply define $\mfalse$ to be $\mneg\mtrue$ and $F
\mor G$ to be $\mneg(\mneg F * \mneg G)$.  The interpretation
of $\mneg$ itself will not surprise readers familiar with
relevant logics, since negation there is usually semantically
defined by the clause:
\[
x \models \mneg A \; \Leftrightarrow \;x^* \not\models A
\]
where $x$ and $x^*$ are points in a model related by the
somewhat notorious ``Routley star'', the philosophical
interpretation of which has been the source of some angst for
relevant logicians (see e.g.~\cite{Restall:99} for a
discussion).  In the setting of $\CBI$, the involution
operation `$\inv$' in a $\CBI$-model plays the role of the
Routley star.  A more prosaic reason for our interpretation of
$\mneg$ is that it yields the expected semantic equivalences
between formulas.  Other definitions such as, e.g., the superficially
appealing $\sat{r}{\mneg F} \Leftrightarrow
\sat{\inv r}{F}$ do not work, because the model operation `$\inv$' does not itself behave like a negation (it is not antitonic with respect to entailment, for instance). For example, in analogy to
ordinary classical logic, we would expect that $\sat{r}{F \wand
G}$ iff $\sat{r}{\mneg(F
* \mneg G)}$.  However, satisfaction of $\wand$ involves
universal quantification while satisfaction of $*$ involves
existential quantification, strongly suggesting that the
incorporation of a Boolean negation into $\mneg$ is necessary
to ensure such an outcome.  One can also observe that the
following is true in any $\CBI$-model:
\[\begin{array}{rcl}
\sat{\inv r}{F} & \Leftrightarrow & \infty \in r \circ \inv r \mbox{ and } \sat{\inv r}{F}  \\
& \Leftrightarrow & \exists r',r''.\ r'' \in r \circ r' \mbox{ and } \sat{r'}{F} \mbox{ and } r'' = \infty \\
\mbox{i.e. }\notsat{\inv r}{F} & \Leftrightarrow & \forall r',r''.\ r'' \in r \circ r' \mbox{ and } \sat{r'}{F} \mbox{ implies } r'' \neq \infty
\end{array}\]
By interpreting $\mfalse$ and $\mneg$
as we do in Definition~\ref{defn:CBI_satisfaction}, we
immediately obtain $\sat{r}{\mneg F}$ iff $\sat{r}{F \wand
\mfalse}$, another desired equivalence.

\begin{defn}[Formula validity]
\label{defn:formula_validity} We say that a $\CBI$-formula $F$
is \emph{true} in a $\CBI$-model $M=\cbimodel$ iff $\sat{r}{F}$
for any $M$-environment $\rho$ and $r \in R$. $F$ is said to be
($\CBI$)-\emph{valid} if it is true in all $\CBI$-models.

Truth of $\BBI$-formulas in $\BBI$-models, and $\BBI$-validity
of formulas, is defined similarly.
\end{defn}

\begin{lemma}[$\CBI$ equivalences]\label{lem:cbieq}
\label{lem:CBI_equivalences} The following formulas are all
$\CBI$-valid:
\[\begin{array}{r@{\hspace{0.2cm}}c@{\hspace{0.2cm}}l@{\hspace{1cm}}r@{\hspace{0.2cm}}c@{\hspace{0.2cm}}l}
\mneg\true & \leftrightarrow & \false & F \mor G & \leftrightarrow & \mneg(\mneg F * \mneg G) \\
\mneg\mtrue & \leftrightarrow & \mfalse & (F \wand G) & \leftrightarrow & \mneg F \mor G \\
\mneg\mneg F & \leftrightarrow & F & (F \wand G) & \leftrightarrow & (\mneg G \wand \mneg F) \\
\neg\mneg F & \leftrightarrow & \mneg\neg F & (F \wand G) & \leftrightarrow & \mneg(F * \mneg G) \\
\mneg F & \leftrightarrow & (F \wand \false) & F \mor \mfalse & \leftrightarrow & F
\end{array}\]
\end{lemma}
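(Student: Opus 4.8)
The plan is to verify each of the ten claimed equivalences directly by unfolding the satisfaction relation of Definition~\ref{defn:CBI_satisfaction} and using the basic structural facts about $\CBI$-models collected in Proposition~\ref{prop:cbimodel_properties}, chiefly that $\inv$ is an involution ($\inv\inv x = x$), that $\inv e = \infty$, and the ``cancellativity'' bi-implication relating $z \in x \circ y$ to $\inv x \in y \circ \inv z$. Since $F \leftrightarrow G$ abbreviates $(F \implies G) \wedge (G \implies F)$, it suffices to show for each pair that $\sat{r}{F}$ iff $\sat{r}{G}$ for an arbitrary $\CBI$-model $M$, environment $\rho$ and element $r$; the additive connectives $\leftrightarrow, \wedge, \implies$ then behave classically by their clauses.

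I would group the equivalences by difficulty. The purely definitional ones come first: $\mneg\true \leftrightarrow \false$ unfolds to $\notsat{\inv r}{\true}$ iff never, which holds since $\sat{\inv r}{\true}$ always; $\mneg\mtrue \leftrightarrow \mfalse$ unfolds to $\notsat{\inv r}{\mtrue}$ iff $r \neq \infty$, i.e.\ $\inv r \neq e$ iff $r \neq \infty$, which is exactly part~\ref{propitem:infty} of Proposition~\ref{prop:cbimodel_properties} together with the involution property; $\mneg\mneg F \leftrightarrow F$ unfolds to $\notsat{\inv r}{\mneg F}$, i.e.\ $\mathrm{not}\ \notsat{\inv\inv r}{F}$, i.e.\ $\sat{r}{F}$ using $\inv\inv r = r$; and $\neg\mneg F \leftrightarrow \mneg\neg F$ unfolds on both sides to $\sat{\inv r}{F}$ after pushing the two negations through (both classical and ``Routley''), since the Boolean negations cancel and the $\inv$ is unaffected. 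The equivalence $\mneg F \leftrightarrow (F \wand \false)$ was in fact essentially established in the running text just before the lemma (``we immediately obtain $\sat{r}{\mneg F}$ iff $\sat{r}{F \wand \mfalse}$''); I would instead note $\sat{r}{F \wand \false}$ iff $\forall r', r''.\ r'' \in r\circ r'$ and $\sat{r'}{F}$ implies $\sat{r''}{\false}$, i.e.\ iff there are no such $r', r''$, i.e.\ iff $\forall r'.\ (r\circ r' \neq \emptyset$ and $\sat{r'}{F})$ fails — and since $r \circ r' $ is nonempty whenever $\inv r = r'$ would give $\infty \in r \circ \inv r$, one reduces to $\notsat{\inv r}{F}$, i.e.\ $\sat{r}{\mneg F}$; the displayed computation in the text handles the key direction.

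The more substantial cases are the ones involving $\mor$ and the three characterizations of $\wand$. For $F \mor G \leftrightarrow \mneg(\mneg F * \mneg G)$: unfold the right side to $\notsat{\inv r}{\mneg F * \mneg G}$, i.e.\ $\mathrm{not}\ \exists r_1, r_2.\ \inv r \in r_1 \circ r_2$ and $\sat{r_1}{\mneg F}$ and $\sat{r_2}{\mneg G}$, i.e.\ $\forall r_1, r_2.\ \inv r \in r_1 \circ r_2$ implies $\notsat{r_1}{\mneg F}$ or $\notsat{r_2}{\mneg G}$, i.e.\ $\sat{\inv r_1}{F}$ or $\sat{\inv r_2}{G}$ — which is exactly the clause for $\sat{r}{F \mor G}$. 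For $(F \wand G) \leftrightarrow \mneg F \mor G$: unfold $\sat{r}{\mneg F \mor G}$ to $\forall r_1, r_2.\ \inv r \in r_1 \circ r_2$ implies $\sat{\inv r_1}{\mneg F}$ or $\sat{\inv r_2}{G}$, i.e.\ $\notsat{\inv\inv r_1}{F} = \notsat{r_1}{F}$ or $\sat{\inv r_2}{G}$; rename $r_1 \mapsto r'$ and $\inv r_2 \mapsto r''$ (legitimate since $\inv$ is a bijection) and use the cancellativity clause of Proposition~\ref{prop:cbimodel_properties}\ref{propitem:cancel} to rewrite $\inv r \in r' \circ \inv r''$ as $r'' \in r \circ r'$, obtaining $\forall r', r''.\ r'' \in r \circ r'$ implies $\notsat{r'}{F}$ or $\sat{r''}{G}$, which is the $\wand$ clause. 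The remaining two, $(F \wand G) \leftrightarrow (\mneg G \wand \mneg F)$ and $(F \wand G) \leftrightarrow \mneg(F * \mneg G)$, then follow most cleanly by chaining the equivalences already proved (e.g.\ $F \wand G \leftrightarrow \mneg F \mor G \leftrightarrow \mneg\mneg\mneg F \mor \mneg\mneg G$ and commutativity of $\mor$, or via $\mneg(\mneg\cdot * \mneg\cdot)$), rather than re-deriving from scratch; and $F \mor \mfalse \leftrightarrow F$ reduces, via $\mfalse = \mneg\mtrue$ and the $\mor$--$*$ law, to $\mneg(\mneg F * \mneg\mneg\mtrue) \leftrightarrow \mneg(\mneg F * \mtrue)$, and $\sat{r_2}{\mtrue}$ iff $r_2 = e$ with $r_1 \circ e = \{r_1\}$ collapses the existential, giving $\mneg\mneg F \leftrightarrow F$.

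The main obstacle, and the only place requiring care, is the bookkeeping in the $\wand$ equivalences: getting the quantifier pattern and the applications of Proposition~\ref{prop:cbimodel_properties}\ref{propitem:cancel} to line up correctly under the variable renamings induced by $\inv$, and making sure the direction of the cancellativity bi-implication is the one actually needed. Everything else is a routine unfolding. I would therefore present the first batch tersely, do $F \mor G \leftrightarrow \mneg(\mneg F * \mneg G)$ and $F \wand G \leftrightarrow \mneg F \mor G$ in full detail, and obtain the rest by combining these with the involution and de~Morgan facts already in hand.
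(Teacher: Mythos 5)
Your overall strategy is the same as the paper's: fix a model, environment and element, unfold the satisfaction clauses, and invoke Proposition~\ref{prop:cbimodel_properties} (involutivity, $\inv e = \infty$, and the cancellation bi-implications). The paper works out $(F \wand G) \leftrightarrow \mneg F \mor G$, $(F \wand G) \leftrightarrow (\mneg G \wand \mneg F)$ and $F \mor \mfalse \leftrightarrow F$ in detail and declares the rest routine; you instead work out the de~Morgan law for $\mor$ and the $\wand$/$\mor$ law in full and obtain the remaining $\wand$ equivalences by chaining, which is a sound (and arguably tidier) variation, since each satisfaction clause depends only on the satisfaction sets of its subformulas. Your detailed computations for those cases are correct and line up with the paper's.

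There is, however, one genuine gap: your treatment of $\mneg F \leftrightarrow (F \wand \false)$. Unfolding gives $\sat{r}{F \wand \false}$ iff no $r'$ with $r \circ r' \neq \emptyset$ satisfies $F$. Taking $r' = \inv r$ (for which $r \circ \inv r \ni \infty$) yields only the direction from $F \wand \false$ to $\mneg F$; the converse fails, since $\notsat{\inv r}{F}$ says nothing about other elements $r'$ with $r \circ r' \neq \emptyset$. Concretely, in $\langle \integers, +, 0, \inv, 0\rangle$ with $\rho(P) = \{1\}$ one has $\sat{0}{\mneg P}$ but $\notsat{0}{P \wand \false}$. The equivalence that actually holds --- and the one computed in the displayed derivation immediately preceding the lemma --- is $\mneg F \leftrightarrow (F \wand \mfalse)$: there the consequent $r'' \neq \infty$, together with the uniqueness clause in Definition~\ref{defn:CBI_model}, collapses the universally quantified condition to the single witness $r' = \inv r$. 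So the lemma entry must be read with $\mfalse$ (the additive $\false$ is evidently a typo), and your ``instead'' argument for the additive version cannot be closed; the right move is to use the $\mfalse$ computation from the text rather than attempt to replace it.
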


\begin{proof}
We fix an arbitrary $\CBI$-model $M$ and $M$-environment
$\rho$. For each of the equivalences $F \leftrightarrow G$ we
require to show $\sat{r}{F} \Leftrightarrow \sat{r}{G}$. These
follow directly from the definition of satisfaction, plus the
properties of $\CBI$-models given by
Proposition~\ref{prop:cbimodel_properties}.  We show three
of the cases in detail. \\

\noindent{\em Case $(F \wand G) \leftrightarrow \mneg F \mor
G$:}
\[\begin{array}{rcl}
\sat{r}{\mneg F \mor G} & \Leftrightarrow & \forall r_1,r_2 \in R.\ \inv r \in r_1 \circ r_2
\mbox{ implies } \sat{\inv r_1}{\mneg F} \mbox{ or } \sat{\inv r_2}{G} \\
\mbox{(by Prop~\ref{prop:cbimodel_properties}, pt.~\ref{propitem:doubleinv})}
& \Leftrightarrow & \forall r_1,r_2 \in R.\ \inv r \in r_1
\circ r_2
\mbox{ implies } \notsat{r_1}{F} \mbox{ or } \sat{\inv r_2}{G} \\
& \Leftrightarrow & \forall r_1,r_2 \in R.\ \inv r \in r_1
\circ r_2
\mbox{ and } \sat{r_1}{F} \mbox{ implies } \sat{\inv r_2}{G} \\
\mbox{(by Prop~\ref{prop:cbimodel_properties}, pt.~\ref{propitem:doubleinv})}
& \Leftrightarrow & \forall r_1,r_2 \in R.\ \inv r \in r_1 \circ \inv r_2
\mbox{ and } \sat{r_1}{F} \mbox{ implies } \sat{r_2}{G} \\
\mbox{(by Prop~\ref{prop:cbimodel_properties}, pt.~\ref{propitem:cancel})}
& \Leftrightarrow & \forall r_1,r_2 \in R.\ r_2 \in r \circ r_1
\mbox{ and } \sat{r_1}{F} \mbox{ implies } \sat{r_2}{G} \\
& \Leftrightarrow & \sat{r}{F \wand G}
\end{array}\]

\paragraph{\em Case $(F \wand G) \leftrightarrow (\mneg G \wand \mneg F)$:}
\[\begin{array}{rcl}
\sat{r}{\mneg G \wand \mneg F} & \Leftrightarrow & \forall r',r'' \in R.\ r'' \in r \circ r'
\mbox{ and } \sat{r'}{\mneg G} \mbox{ implies } \sat{r''}{\mneg F} \\
& \Leftrightarrow & \forall r',r'' \in R.\ r'' \in r \circ r'
\mbox{ and } \notsat{\inv r'}{G} \mbox{ implies } \notsat{\inv r''}{F} \\
\mbox{(by Prop~\ref{prop:cbimodel_properties}, pt.~\ref{propitem:doubleinv})}
& \Leftrightarrow & \forall r',r'' \in R.\ \inv r'' \in r \circ \inv r'
\mbox{ and } \notsat{r'}{G} \mbox{ implies } \notsat{r''}{F} \\
& \Leftrightarrow & \forall r',r'' \in R.\ \inv r'' \in r \circ \inv r'
\mbox{ and } \sat{r''}{F} \mbox{ implies } \sat{r'}{G}   \\
\mbox{(by Prop~\ref{prop:cbimodel_properties}, pt.~\ref{propitem:cancel})}
& \Leftrightarrow & \forall r',r'' \in R.\ r' \in r \circ r''
\mbox{ and } \sat{r''}{F} \mbox{ implies } \sat{r'}{G}   \\
& \Leftrightarrow & \sat{r}{F \wand G}
\end{array}\] \\

\noindent{\em Case $F \mor \mfalse \leftrightarrow F$:}
\[\begin{array}{rcl}
\sat{r}{F \mor \mfalse} & \Leftrightarrow & \forall r_1,r_2 \in R.\ \inv r \in r_1 \circ r_2
\mbox{ implies } \sat{\inv r_1}{F} \mbox{ or } \sat{\inv r_2}{\mfalse} \\
 & \Leftrightarrow & \forall r_1,r_2 \in R.\ \inv r \in r_1 \circ r_2
\mbox{ implies } \sat{\inv r_1}{F} \mbox{ or } \inv r_2 \neq \infty \\
\mbox{(by Prop~\ref{prop:cbimodel_properties}, pt.~\ref{propitem:infty})} & \Leftrightarrow
& \forall r_1,r_2 \in R.\ \inv r \in r_1 \circ r_2
\mbox{ implies } \sat{\inv r_1}{F} \mbox{ or } r_2 \neq e \\
& \Leftrightarrow & \forall r_1 \in R.\ \inv r \in r_1 \circ e
\mbox{ implies } \sat{\inv r_1}{F} \\
& \Leftrightarrow & \forall r_1 \in R.\ \inv r = r_1
\mbox{ implies } \sat{\inv r_1}{F} \\
& \Leftrightarrow & \sat{r}{F}
\end{array}\]
\end{proof}

We remark that there is nevertheless at least one important classical
equivalence whose multiplicative analogue does \emph{not} hold
in $\CBI$ in the strong sense of
Lemma~\ref{lem:CBI_equivalences}: the law of excluded middle,
$\mtrue \leftrightarrow F \mor \mneg F$, which (using the
lemma) is equivalent to
 the law of contradiction, $\mfalse
\leftrightarrow F * \mneg F$. This equivalence certainly holds
in one direction, since if $\sat{r}{F * \mneg F}$ then $r \in
r_1 \circ r_2$, $\sat{r_1}{F}$ and $\notsat{\inv r_2}{F}$, so
$r_1 \neq \inv r_2$ and thus $r \neq \infty$ by the
$\CBI$-model axiom, i.e.\ $\sat{r}{\mfalse}$.  The converse
implication does not hold as, given $\sat{r}{\mfalse}$ and some
formula $F$, it clearly is not the case in general that
$\sat{r}{F * \mneg F}$ (e.g., take $F = \false$).  However, the
law does hold in the weak sense that $\mfalse$ is true in a
model $M$ iff $F * \mneg F$ is true in $M$.  One direction of
the implication follows by the argument above, and the other
from the fact that $\mfalse$ is never true in $M$
(because $\notsat{\infty}{\mfalse}$ for any $\rho$).

One might be tempted to think that, since $\CBI$-models are $\BBI$-models and the definition of
satisfaction for $\CBI$ coincides with that of $\BBI$ when
restricted to $\BBI$-formulas, $\CBI$ and $\BBI$ might well be
indistinguishable under such a restriction. Our next result
establishes that this is by no means the case.

\begin{proposition}[Non-conservative extensionality]
\label{prop:nonconservative} $\CBI$ is a non-conservative
extension of $\BBI$.  That is, every $\BBI$-valid formula is
also $\CBI$-valid, but there is a $\BBI$-formula that is
$\CBI$-valid but not $\BBI$-valid.
\end{proposition}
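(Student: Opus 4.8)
The plan is to dispose of the ``conservative'' inclusion in one line and then to spend the real effort on constructing a single discriminating formula for non-conservativity. For the inclusion of $\BBI$-validity in $\CBI$-validity, note that every $\CBI$-model $\cbimodel$ has an underlying $\BBI$-model $\monoid$, and that the clauses of Definition~\ref{defn:CBI_satisfaction} defining $\sat{r}{F}$ agree with the usual $\BBI$-satisfaction clauses on all $\BBI$-formulas (they differ only on $\mfalse$, $\mneg$ and $\mor$, which never occur in a $\BBI$-formula). Hence a $\BBI$-formula true in all $\BBI$-models is in particular true in every $\BBI$-model that arises as the reduct of a $\CBI$-model, i.e.\ it is $\CBI$-valid.

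For the second part I would exhibit the $\BBI$-formula
\[
\varphi \;=\; \bigl((F \wand \false) \wand \false\bigr) \rightarrow F ,
\]
which uses only $\BBI$-connectives, for a convenient instance such as $F = \mtrue$. That $\varphi$ is $\CBI$-valid follows directly from Lemma~\ref{lem:CBI_equivalences}: the lemma gives $\mneg G \leftrightarrow (G \wand \false)$ for every $G$, so $(F \wand \false) \leftrightarrow \mneg F$ and hence, replacing the first argument of the outer $\wand$ by a semantic equivalent (sound by inspection of the $\wand$-clause) and then instantiating the same equivalence at $G := \mneg F$, we get $(F \wand \false) \wand \false \leftrightarrow (\mneg F) \wand \false \leftrightarrow \mneg\mneg F$, which is $\leftrightarrow F$ by the lemma again. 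Thus in fact $((F \wand \false)\wand\false) \leftrightarrow F$ is $\CBI$-valid, which is stronger than $\varphi$. Intuitively, $\varphi$ asserts that the operation $(-\wand\false)$ --- which behaves like a negation in any bunched logic --- is involutive, and this holds precisely because $\CBI$'s multiplicative negation is classical.

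It then remains to refute $\varphi$ in some $\BBI$-model, the one genuinely constructive step. I would take any $\BBI$-model containing a point $r \neq e$ --- for instance $(\nat,+,0)$, or the two-element monoid $R = \{e,a\}$ with $a \circ a = \{a\}$ (which one can check does not extend to a $\CBI$-model, since no choice of $\infty$ and $\inv a$ satisfies the $\CBI$-model axiom). With $F = \mtrue$: since $x \circ e = \{x\}$ in every $\BBI$-model and $\sat{e}{\mtrue}$, the formula $\mtrue \wand \false$ is satisfied by no element; consequently $(\mtrue \wand \false) \wand \false$ is satisfied vacuously by \emph{every} element, whereas $\mtrue$ fails at $r$. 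Hence $\notsat{r}{\varphi}$, so $\varphi$ is $\CBI$-valid but not $\BBI$-valid, which completes the argument.

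I expect the only real obstacle to be \emph{finding} $\varphi$. That $\CBI$-models carry strictly more structure than $\BBI$-models (Proposition~\ref{prop:cbimodel_properties}) is evident, but one must distil this into a sentence in the shared $\BBI$-vocabulary whose validity actually detects the involution; simply pointing at the extra structure is not enough. The guiding observation is that $(-\wand\false)$ is the bunched-logic surrogate for multiplicative negation in every model, yet double-negation elimination for it holds exactly when that negation is classical --- and double-negation elimination is plainly expressible using only $\wand$ and $\false$.
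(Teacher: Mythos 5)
Your first half (every $\BBI$-valid formula is $\CBI$-valid) is exactly the paper's argument and is fine. The second half, however, contains a fatal gap: the formula $\varphi = ((F\wand\false)\wand\false)\rightarrow F$ with $F=\mtrue$ is \emph{not} $\CBI$-valid. The equivalence you need from Lemma~\ref{lem:CBI_equivalences} is really $\mneg G \leftrightarrow (G \wand \mfalse)$ with the \emph{multiplicative} falsity $\mfalse$ (this is the version derived in the discussion immediately preceding the lemma; the occurrence of the additive $\false$ in the lemma's display is a misprint). With the additive $\false$, unwinding the satisfaction clause gives $\sat{r}{G\wand\false}$ iff $r\circ r'=\emptyset$ for every $r'$ satisfying $G$; this is strictly stronger than $\sat{r}{\mneg G}$ (it implies $\notsat{\inv r}{G}$ because $\infty\in r\circ\inv r$, but not conversely) and is not involutive. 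Indeed your own refutation already destroys the claim: it uses only that $x\circ e=\{x\}$ and that the model has an element $r\neq e$, and both facts hold in $\CBI$-models, e.g.\ in the Abelian group model $\langle\integers,+,0,\inv,0\rangle$. There $\mtrue\wand\false$ is satisfied by nothing, $(\mtrue\wand\false)\wand\false$ is satisfied vacuously by everything, and $\mtrue$ fails at $1$, so $\varphi$ fails in a $\CBI$-model. The involutive negation $(-\wand\mfalse)$ cannot be substituted into your scheme because $\mfalse$ is not in the $\BBI$ vocabulary.

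The paper's witness has to work harder for precisely this reason. It takes $I=\neg\mtrue\wand\false$ --- note that this combination of $\wand$ with additive $\false$ is exactly the ``nonextensibility'' predicate your unwinding produces --- and observes that in a $\CBI$-model a nonextensible $r$ forces $\inv r=e$ and hence $r=\infty$, so such an element is unique; it then packages this uniqueness into the $\BBI$-formula $I\wedge J\rightarrow P$ with $J=\true*(\mtrue\wedge\neg(P\wand\neg I))$, refuted in a three-element $\BBI$-model having two distinct nonextensible elements. To salvage your approach you would need a $\BBI$-expressible consequence of the uniqueness of duals, not double-negation elimination for $(-\wand\false)$.
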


\begin{proof}
To see that $\BBI$-valid formulas are also $\CBI$-valid, let
$M=\cbimodel$ be a $\CBI$-model, whence $M'=\monoid$ is a $\BBI$-model.
For any $\BBI$-valid formula $F$ we have that $F$ is true in $M'$, and thus $F$ is also true in $M$ (because the definition of satisfaction of $F$ coincides in $\CBI$ and $\BBI$ for
$\BBI$-formulas).  Since $M$ was arbitrarily chosen, $F$ is $\CBI$-valid as required.

Now let $P$ be a propositional variable and let $I$ and $J$ be
abbreviations for $\BBI$-formulas defined as follows:
\[\begin{array}{rcl}
I & \defeq & \neg\mtrue \wand \false\\
J & \defeq & \true * (\mtrue \wedge \neg(P \wand \neg I))
\end{array}\]

In a $\BBI$-model $\monoid$, the formula $I$ denotes ``nonextensible'' elements of $R$, i.e.\ those elements
$r \in R$ such that $r \circ r' = \emptyset$ for all $r' \neq
e$:
\[\begin{array}{rcl}
\sat{r}{I} & \Leftrightarrow & \forall r',r'' \in R.\; r'' \in r \circ r' \mbox{ and }
\sat{r'}{\neg\mtrue} \mbox{ implies } \sat{r''}{\false} \\
& \Leftrightarrow & \forall r',r'' \in R.\; r'' \in r \circ r' \mbox{ implies }
\notsat{r'}{\neg\mtrue} \\
& \Leftrightarrow & \forall r',r'' \in R.\; r'' \in r \circ r' \mbox{ implies }
r' = e \\
& \Leftrightarrow & \forall r' \in R.\; r' \neq e \mbox{ implies } r \circ r' = \emptyset
\end{array}\]

The formula $J$ is satisfied by an arbitrary element of $R$ iff
there exists \emph{some} element of $R$ that satisfies the
proposition $P$ and is nonextensible:
\[\begin{array}{rcl}
\sat{r}{J} & \Leftrightarrow & \exists r_1,r_2 \in R.\; r \in r_1 \circ r_2 \mbox{ and }
\sat{r_1}{\true} \mbox{ and } \sat{r_2}{\mtrue \wedge \neg(P \wand \neg I)} \\
& \Leftrightarrow & \exists r_1,r_2 \in R.\; r \in r_1 \circ r_2 \mbox{ and }
\sat{r_2}{\mtrue} \mbox{ and } \notsat{r_2}{P \wand \neg I} \\
& \Leftrightarrow & \notsat{e}{P \wand \neg I} \\
& \Leftrightarrow & \exists r',r'' \in R.\; r'' \in e \circ r' \mbox{ and } \sat{r'}{P}
\mbox{ but } \notsat{r''}{\neg I} \\
& \Leftrightarrow & \exists r' \in R.\; r' \in \rho(P)
\mbox{ and } \sat{r'}{I} \\
\end{array}\]
Note that in any $\CBI$-model $\cbimodel$, for any \mbox{$r \in R$} we
have ${r \circ \inv r} \neq \emptyset$ since $\infty \in {r
\circ \inv r}$ by definition.  Since $\infty$ is the unique
element \mbox{$x \in R$} such that $\inv x = e$ by
Proposition~\ref{prop:cbimodel_properties}, it follows that if
$\sat{r}{I}$ then $r = \infty$. Thus, in $\CBI$-models, if
$\sat{r}{I}$ and $\sat{r}{J}$ then $r = \infty \in \rho(P)$, so
the $\BBI$-formula $I \wedge J \rightarrow P$ is $\CBI$-valid.

To see that $I \wedge J \rightarrow P$ is not $\BBI$-valid,
consider the three-element model $\langle \{e,a,b\} , \circ ,
e\rangle$, where $\circ$ is defined by: $e \circ x = x \circ e
= \{x\}$ for all $x \in \{e,a,b\}$, and $x \circ y = \emptyset$
for all other $x,y \in \{e,a,b\}$.  It is easy to verify that
$\circ$ is both commutative and associative and that $e$ is a
unit for $\circ$, so $\langle \{e,a,b\} , \circ , e\rangle$ is
indeed a $\BBI$-model.  Now define an environment $\rho$ for
this model by $\rho(P) = \{a\}$.  We have both $\sat{a}{I}$ and $\sat{b}{I}$ because $a$ and $b$
are both nonextensible in the model, and $\sat{b}{J}$ because $\sat{a}{I}$ and $a \in \rho(P)$. Then we have $\sat{b}{I
\wedge J}$ but $\notsat{b}{P}$, so $I \wedge J \rightarrow P$
is false in this model and hence not $\BBI$-valid.
\end{proof}

If $\cbimodel$ is a $\CBI$-model and the cardinality of $x
\circ y$ is $\leq 1$ for all $x,y \in R$, then we understand
$\circ$ as a partial function $R \times R \rightharpoonup R$ in
the obvious way. The following proposition shows that, if we
were to restrict our class of $\CBI$-models to those in which
the binary operation is a partial function rather than a
relation, we would obtain a different notion of validity.  In
other words, $\CBI$ is sufficiently expressive to distinguish
between partial functional and relational $\CBI$-models.

\begin{proposition}[Distinction of partial functional and
relational $\CBI$-models] \label{prop:partial_fn_validity}
$\CBI$-validity does not coincide with validity in the class of
partial functional $\CBI$-models.  That is, there is a
$\CBI$-formula that is not generally valid, but is true in
every $\CBI$-model $\cbimodel$ in which $\circ$ is a partial
function.
\end{proposition}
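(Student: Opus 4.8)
The plan is to exhibit one $\CBI$-formula $\Psi$ together with a single finite $\CBI$-model $M$ in which $\circ$ is relational (some product is a two-element set), such that $\Psi$ holds throughout every \emph{partial functional} $\CBI$-model but fails at a point of $M$. The driving observation is that partial functionality collapses an existential and a universal statement about a \emph{definable} product into one another. Since $\neg\mfalse$ is true at exactly the element $\infty$, we have $\sat{r}{\neg\mfalse \wand B}$ iff $\sat{z}{B}$ for every $z \in r \circ \infty$; taking $r = \infty$, this speaks about $\infty \circ \infty$. If $\circ$ is a partial function then $\infty \circ \infty$ is empty or a singleton, hence cannot simultaneously contain $e$ and some element distinct from $e$; so at every point either ``$\neg\mfalse \wand \mtrue$'' or ``$\neg\mfalse \wand \neg\mtrue$'' must hold. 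This suggests taking the formula
\[\Psi \;\defeq\; (\neg\mfalse \wand \mtrue) \vee (\neg\mfalse \wand \neg\mtrue).\]

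The first step is to check that $\Psi$ is true in every partial functional $\CBI$-model. This is the routine direction: fix such a model and a point $r$; unfolding the satisfaction clauses for $\wand$, $\neg$, $\mfalse$ and $\mtrue$ shows that $\sat{r}{\Psi}$ says precisely ``$r \circ \infty \subseteq \{e\}$ or $e \notin r \circ \infty$'', which is automatic as soon as $|r \circ \infty| \le 1$.

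The second step is to build the relational model. A minimal choice has carrier $\{e,a,\infty\}$ (three distinct elements), unit $e$, involution $\inv e = \infty$, $\inv a = a$, $\inv \infty = e$, and operation $e \circ x = \{x\}$, $a \circ a = \{e,\infty\}$, $a \circ \infty = \infty \circ a = \{e,a\}$, $\infty \circ \infty = \{e,a\}$. Verifying that this is a $\CBI$-model amounts to checking commutativity and associativity of $\circ$ (only the finitely many triples over $\{a,\infty\}$ are non-trivial, the cases involving $e$ being immediate), that $e$ is a unit, and that for each $x$ the stated $\inv x$ is the \emph{unique} $y$ with $\infty \in x \circ y$ in the sense of Definition~\ref{defn:CBI_model}. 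Since $\infty \circ \infty = \{e,a\}$ contains both $e$ and the element $a \neq e$, neither disjunct of $\Psi$ holds at $\infty$, so $\notsat{\infty}{\Psi}$ for any environment; thus $\Psi$ is not $\CBI$-valid, while $|a \circ a| = 2$ witnesses that $\circ$ is genuinely relational.

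The main obstacle is precisely this second step: the $\CBI$-model axioms are tight enough that one cannot simply inflate an arbitrary product to a non-singleton set — in particular the uniqueness-of-dual requirement of Definition~\ref{defn:CBI_model} must be kept compatible with associativity — so a suitable counter-model has to be found and checked by hand. It is also worth being explicit about why $\Psi$ is anchored at $\infty$ rather than at a generic product $x \circ x$: a propositional variable cannot be forced to denote a singleton, so the naive formula expressing ``$P * P$ holds at at most one point'' would already fail in partial functional models under environments with $|\rho(P)| \geq 2$; working with the definable point $\infty$ — which is exactly where the multiplicative negation and $\mfalse$ are used — avoids this difficulty.
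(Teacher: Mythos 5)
Your proposal is correct and follows essentially the same route as the paper: your formula $\Psi = (\neg\mfalse \wand \mtrue) \vee (\neg\mfalse \wand \neg\mtrue)$ is (additively) equivalent to the paper's $K \implies L$ with $K = \neg(\neg\mfalse \wand \neg\mtrue)$ and $L = \neg\mfalse \wand \mtrue$, both expressing ``$r\circ\infty \subseteq \{e\}$ or $e \notin r\circ\infty$'', which holds automatically when $|r\circ\infty| \le 1$. Your three-element counter-model is exactly the paper's; the only cosmetic difference is that you exhibit the failure at the point $\infty$ (where $\infty\circ\infty = \{e,a\}$) rather than at $a$.
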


\begin{proof}
Let $K$ and
$L$ be abbreviations for $\CBI$-formulas defined as follows:
\[\begin{array}{rcl}
K & \defeq & \neg(\neg\mfalse \wand \neg\mtrue) \\
L & \defeq & \neg\mfalse \wand \mtrue
\end{array}\]
In a $\CBI$-model $\cbimodel$, the formula $K$ is satisfied by
those model elements that can be extended by $\infty$ to obtain
$e$:
\[\begin{array}{rcl}
\sat{r}{K} & \Leftrightarrow & \exists r',r'' \in R.\; r'' \in r \circ r' \mbox{ and }
\sat{r'}{\neg\mfalse} \mbox{ but } \notsat{r''}{\neg\mtrue} \\
& \Leftrightarrow & \exists r',r'' \in R.\; r'' \in r \circ r' \mbox{ and }
r' = \infty \mbox{ and } r'' = e \\
& \Leftrightarrow & e \in r \circ \infty
\end{array}\]
Similarly, the formula $L$ is satisfied by those elements that,
\emph{whenever} they are extended by $\infty$, always yield
$e$:
\[\begin{array}{rcl}
\sat{r}{L} & \Leftrightarrow & \forall r',r'' \in R.\; r'' \in r \circ r' \mbox{ and }
\sat{r'}{\neg\mfalse} \mbox{ implies } \sat{r''}{\mtrue} \\
& \Leftrightarrow & \forall r',r'' \in R.\; r'' \in r \circ r' \mbox{ and }
r' = \infty \mbox{ implies } r'' = e \\
& \Leftrightarrow & r \circ \infty \subseteq \{e\} \\
\end{array}\]
Let $M=\cbimodel$ be a $\CBI$-model in which $\circ$ is a
partial function, let $\rho$ be an $M$-environment and let $r
\in R$.  Suppose that $\sat{r}{K}$, so that $e \in r \circ
\infty$ by the above.  Since $\circ$ is a partial function, the
cardinality of $r \circ \infty$ is at most 1, so we must have
$r \circ \infty = \{e\}$, i.e., $\sat{r}{L}$. Thus the formula
$K \implies L$ is true in $M$, and so valid with respect to
partial functional $\CBI$-models.

To see that $K \implies L$ is not generally valid, we must
provide a $\CBI$-model in which it is false.  Consider the
three-element model $\langle \{e,a,\infty\}, \circ, e, \inv,
\infty \rangle$, where $\inv$ is defined by $\inv e = \infty,
\inv a = a, \inv\infty = e$ and $\circ$ is defined as follows:
\[\begin{array}{l}
e \circ x = x \circ e = \{x\} \mbox{  for all $x \in \{e,a,\infty\}$} \\
a \circ a = \{e,\infty\} \\
a \circ \infty = \infty \circ a = \infty \circ \infty = \{e,a\}
\end{array}\]
In this model $e$ is a unit for $\circ$ and $\circ$ is
commutative by construction.  It can also easily be verified
that $\circ$ is associative (e.g., $a \circ (a \circ \infty) =
\{e,a,\infty\} = (a \circ a) \circ \infty$) and that $\inv x$
is the unique element such that $\infty \in x \circ \inv x$ for
all $x \in \{e,a,\infty\}$. Thus $\langle \{e,a,\infty\},
\circ, e, \inv, \infty \rangle$ is indeed a $\CBI$-model (and
we note that $\circ$ is not a partial function).  Now for any
environment $\rho$ we have $\sat{a}{K}$ since $e \in a \circ
\infty$, but $\notsat{a}{L}$ since $a \in a \circ \infty$. Thus
$K \implies L$ is false in this model, and hence invalid.
\end{proof}

Our
proof of Proposition~\ref{prop:partial_fn_validity} does not
transfer straightforwardly to $\BBI$ because it crucially
relies upon the fact that, in $\CBI$, we can write down a
formula ($\neg\mfalse$) that is satisfied by exactly one model
element ($\infty$), which is not the unit $e$ in general.
Subsequent to submission of this paper, however, it has been shown by Larchey-Wendling and Galmiche that $\BBI$ is indeed incomplete with respect to partial
functional models~\cite{Larchey-Wendling-Galmiche:10}.

\section{$\displayCBI$: a display calculus proof system for $\CBI$}
\label{sec:display_BI}

In this section, we present $\displayCBI$, a display calculus
for $\CBI$ based on Belnap's general \emph{display logic}~\cite{Belnap:82}, which
provides a generic framework for obtaining formal Gentzen-style consecution calculi for a
large class of logics.  Display calculi
are akin to sequent calculi in that logical connectives are
specified by a pair of introduction rules introducing the
connective on the left and right of proof judgements
respectively.  However, the proof judgements of display calculi
have a richer structure than an ordinary sequent, and thus we require a
corresponding set of meta-level rules (called \emph{display
postulates}) for manipulating this structure.  This ensures the
characteristic, and very useful \emph{display property} of
display calculi: any proof judgement may be rearranged so that
any given part of the judgement appears alone on one side of
the turnstile (without loss of information). In addition to its
conceptual elegance, this property ensures that cut-elimination
holds for any display calculus whose structural rules obey a
few easily verified conditions (cf.~\cite{Belnap:82}).  Our display calculus
$\displayCBI$ indeed satisfies these cut-elimination conditions. Furthermore, it is sound and complete with
respect to our $\CBI$-models.

Belnap's original formulation of display logic treats an
arbitrary number of ``families'' of propositional connectives.
The necessary structural connectives, display postulates and
logical introduction rules are then ascribed automatically to
each family, with only the structural rules governing the
family chosen freely.  For $\CBI$, it is obvious that there are
two complete families of propositional connectives, one
additive and one multiplicative.  Thus the formulation of
$\displayCBI$ can be viewed as arising more or less directly
from Belnap's general schema.

The proof judgements of $\displayCBI$, called consecutions, are
built from structures which generalise the bunches used in
existing proof systems for $\BI$ (cf.~\cite{Pym:02}).

\begin{defn}[Structure / consecution]
\label{defn:consecution} A \emph{$\displayCBI$-structure} $X$
is constructed according to the following grammar:
\[
X ::= F \mid \aemp \mid \ainv X \mid X ; X \mid \memp \mid \minv X \mid X , X
\]
where $F$ ranges over $\CBI$-formulas. If $X$ and $Y$ are
structures then $\seq{X}{Y}$ is said to be a
\emph{consecution}.
\end{defn}

\begin{figure}

\[\begin{array}{cccc}
\mbox{\em Connective} & \mbox{\em Arity} & \mbox{\em Antecedent meaning} & \mbox{\em Consequent meaning} \\
\aemp & 0 & \true & \false \\
\memp & 0 & \mtrue & \mfalse \\
\ainv & 1 & \neg & \neg \\
\minv & 1 & \mneg & \mneg \\
; & 2 & \wedge & \vee \\
, & 2 & * & \mor
\end{array}\]
\caption{The structural connectives of $\displayCBI$.
\label{fig:structural_connectives}}
\end{figure}

Figure~\ref{fig:structural_connectives} gives a summary of the
structural connectives of our display calculus and their
semantic reading as antecedents (or premises) and consequents
(or conclusions) in a consecution.  However, the presence of
the meta-level negations $\ainv$ and $\minv$ in our structures
leads to a subtler notion of antecedent and consequent parts of
consecutions than the simple left-right division of sequent
calculus.  Informally, moving inside a meta-level negation
flips the interpretation of its immediate substructure.  For
example, if $\ainv X$ or $\minv X$ is an antecedent part then
the substructure $X$ should be interpreted as a consequent part,
and vice versa.  This notion is made formal by the following
definition.

\begin{defn}[Antecedent part / consequent part]
\label{defn:parts} A structure occurrence $W$ is said to be a \emph{part}
of another structure $Z$ if $W$ occurs as a substructure of $Z$ (in
the obvious sense).  $W$ is said to be a \emph{positive part}
of $Z$ if $W$ occurs inside an even number of occurrences of
$\ainv$ and $\minv$ in $Z$, and a \emph{negative part} of $Z$
otherwise.

A structure occurrence $W$ is said to be an \emph{antecedent part} of a
consecution $\seq{X}{Y}$ if it is a positive part of $X$ or a
negative part of $Y$.  $W$ is said to be a \emph{consequent
part} of $\seq{X}{Y}$ if it is a negative part of $X$ or a
positive part of $Y$.
\end{defn}

To give the formal interpretation of our consecutions in the following definition,
we employ a pair of mutually recursive functions to capture the
dependency between antecedent and consequent interpretations.

\begin{defn}[Consecution validity]
\label{defn:DLBI_validity} For any structure $X$ we mutually
define two formulas $\dlantform{X}$ and $\dlconform{X}$ by
induction on the structure of $X$ as follows:
\[\begin{array}{rcl@{\hspace{1.0cm}}rcl}
\dlantform{F} & = & F & \dlconform{F} & = & F \\
\dlantform{\aemp} & = & \true & \dlconform{\aemp} & = & \false \\
\dlantform{\ainv X} & = & \neg \dlconform{X} & \dlconform{\ainv X} & = & \neg \dlantform{X} \\
\dlantform{X_1 ; X_2} & = & \dlantform{X_1} \wedge \dlantform{X_2} & \dlconform{X_1 ; X_2} & = & \dlconform{X_1} \vee \dlconform{X_2}\\
\dlantform{\memp} & = & \mtrue & \dlconform{\memp} & = & \mfalse \\
\dlantform{\minv X} & = & \mneg\dlconform{X} & \dlconform{\minv X} & = & \mneg\dlantform{X} \\
\dlantform{X_1 , X_2} & = & \dlantform{X_1} * \dlantform{X_2} & \dlconform{X_1 , X_2} & = & \dlconform{X_1} \mor \dlconform{X_2}
\end{array}\]
A consecution $\seq{X}{Y}$ is then \emph{valid} if
$\dlantform{X} \rightarrow \dlconform{Y}$ is a valid formula
(cf.~Defn.~\ref{defn:formula_validity}).
\end{defn}

We write a proof rule with a double line between premise and
conclusion to indicate that it is \emph{bidirectional}, i.e., that
the roles of premise and conclusion may be reversed. A figure
with three consecutions separated by two double lines is used
to abbreviate two bidirectional rules in the obvious way.

\begin{defn}[Display-equivalence]
\label{defn:displayeq} Two consecutions $\seq{X}{Y}$ and
$\seq{X'}{Y'}$ are said to be \emph{display-equivalent},
written $\seq{X}{Y} \displayeq \seq{X'}{Y'}$, if there is a
derivation of one from the other using only the \emph{display
postulates} given in Figure~\ref{fig:display_postulates}.
\end{defn}

\begin{figure}
{\small\[\begin{array}{c@{\hspace{1cm}}c@{\hspace{1cm}}c}
\begin{prooftree}
\[\seq{X ; Y}{Z}
\Justifies
\seq{X}{\ainv Y ; Z} \using \adl{1}{a} \]
\Justifies
\seq{Y ; X}{Z} \using \adl{1}{b}
\end{prooftree}
&
\begin{prooftree}
\[\seq{X}{Y ; Z}
\Justifies
\seq{X ; \ainv Y}{Z} \using \adl{2}{a} \]
\Justifies
\seq{X}{Z ; Y} \using \adl{2}{b}
\end{prooftree}
&
\begin{prooftree}
\[\seq{X}{Y}
\Justifies
\seq{\ainv Y}{\ainv X} \using \adl{3}{a} \]
\Justifies
\seq{\ainv \ainv X}{Y} \using \adl{3}{b}
\end{prooftree}
\\ & \\
\begin{prooftree}
\[\seq{X , Y}{Z}
\Justifies
\seq{X}{\minv Y , Z} \using \mdl{1}{a} \]
\Justifies
\seq{Y , X}{Z} \using \mdl{1}{b}
\end{prooftree}
&
\begin{prooftree}
\[\seq{X}{Y , Z}
\Justifies
\seq{X , \minv Y}{Z} \using \mdl{2}{a} \]
\Justifies
\seq{X}{Z , Y} \using \mdl{2}{b}
\end{prooftree}
&
\begin{prooftree}
\[\seq{X}{Y}
\Justifies
\seq{\minv Y}{\minv X} \using \mdl{3}{a} \]
\Justifies
\seq{\minv \minv X}{Y} \using \mdl{3}{b}
\end{prooftree} \\
\end{array}\]}
\caption{The display postulates for $\displayCBI$.
\label{fig:display_postulates}}
\end{figure}

The display postulates for $\displayCBI$ are essentially
Belnap's original display postulates, instantiated (twice) to
the additive and multiplicative connective families of $\CBI$.
The only difference is that our postulates build commutativity
of the comma and semicolon into the notion of
display-equivalence, since in $\CBI$ both the conjunctions and
both the disjunctions are commutative.

The fundamental characteristic of
display calculi is their ability to ``display'' structures occurring
in a consecution by rearranging it using the display
postulates.

\begin{theorem}[Display theorem (Belnap~\cite{Belnap:82})]
\label{thm:display} For any antecedent part $W$ of a
consecution $\seq{X}{Y}$ there exists a structure $Z$ such that
$\seq{W}{Z} \displayeq \seq{X}{Y}$.  Similarly, for any
consequent part $W$ of $\seq{X}{Y}$ there exists a structure
$Z$ such that $\seq{Z}{W} \displayeq \seq{X}{Y}$.
\end{theorem}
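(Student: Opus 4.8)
The plan is to prove Belnap's display theorem by a single induction that establishes a symmetric, slightly stronger statement, from which both halves of the theorem fall out immediately.

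\textbf{Reduction.} For structures $U,V$ write $|U|$ for the number of structural connective occurrences in $U$. I would prove, by strong induction on $|U|$, the following pair of claims for all structures $U$ and $V$: (a) for every part occurrence $W$ of $U$ there is a consecution display-equivalent to $\seq{U}{V}$ in which $W$ is the whole antecedent, if $W$ is a positive part of $U$, and the whole consequent, if $W$ is a negative part of $U$; and (b) for every part occurrence $W$ of $U$ there is a consecution display-equivalent to $\seq{V}{U}$ in which $W$ is the whole consequent, if $W$ positive in $U$, and the whole antecedent, if $W$ negative in $U$. The theorem is then immediate: an antecedent part $W$ of $\seq{X}{Y}$ is, by Definition~\ref{defn:parts}, a positive part of $X$ --- and (a) supplies a $Z$ with $\seq{W}{Z} \displayeq \seq{X}{Y}$ --- or a negative part of $Y$, and then (b) supplies such a $Z$; the consequent-part case is dual.

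\textbf{Base and binary cases.} If $W = U$ we are done (take $Z = V$), which covers the base cases where $U$ is a formula, $\aemp$ or $\memp$. Otherwise $U$ has a principal structural connective. If $U = U_1 ; U_2$, then after possibly commuting the semicolon --- a display-equivalence derivable from $\adl{1}{a}$ and $\adl{1}{b}$ --- we may assume $W$ occurs in $U_1$, and $\adl{1}{a}$ gives $\seq{U_1 ; U_2}{V} \displayeq \seq{U_1}{\ainv U_2 ; V}$ for claim (a), respectively $\seq{V}{U_1 ; U_2} \displayeq \seq{V ; \ainv U_2}{U_1}$ (using $\adl{2}{a}$ and commutativity) for claim (b). Since $W$ has the same polarity in $U_1$ as in $U$ and $|U_1| < |U|$, applying the induction hypothesis to $U_1$ with the appropriately enlarged ``other side'' and chaining the display-equivalences yields the claim. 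The case $U = U_1 , U_2$ is identical, using $\mdl{1}{a}$, $\mdl{1}{b}$ and $\mdl{2}{a}$.

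\textbf{Negation cases (the delicate point).} The only real work is in the unary cases $U = \ainv U_1$ and $U = \minv U_1$, since peeling a negation reverses polarity and forces $W$ to the opposite side of the turnstile. I would first record, as consequences of $\adl{3}{a}$ and $\adl{3}{b}$, double-negation elimination on \emph{both} sides, $\seq{\ainv\ainv A}{B} \displayeq \seq{A}{B}$ and $\seq{A}{\ainv\ainv B} \displayeq \seq{A}{B}$, and hence the ``swap under a negation'' moves $\seq{\ainv A}{B} \displayeq \seq{\ainv B}{A}$ and $\seq{A}{\ainv B} \displayeq \seq{B}{\ainv A}$, together with their $\minv$-analogues obtained from $\mdl{3}{a}$, $\mdl{3}{b}$. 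Then, for claim (a) with $U = \ainv U_1$ and $W \neq U$, the move $\seq{\ainv U_1}{V} \displayeq \seq{\ainv V}{U_1}$ exhibits the strictly smaller structure $U_1$ as the whole consequent; since the polarity of $W$ in $\ainv U_1$ is the reverse of its polarity in $U_1$, an application of claim (b) of the induction hypothesis to $U_1$ (with ``other side'' $\ainv V$), followed by the display-equivalence just used, gives the result. The case $U = \minv U_1$, and the negation cases of claim (b) --- which reduce dually to claim (a) for a strictly smaller structure --- are handled the same way; because every such reduction decreases $|U|$, the simultaneous induction is well-founded. The whole argument is Belnap's; there is no conceptual obstacle, and the only thing demanding care is keeping the polarity bookkeeping (positive/negative, equivalently antecedent/consequent, parts) straight through the rewriting, and organising the induction so that the negation cases always recurse on a structure of strictly smaller size rather than appealing to the statement at the current stage.
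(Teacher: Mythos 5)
Your proof is correct and follows essentially the same strategy as the paper's argument, which is only sketched there: peel off the structure surrounding $W$ one layer at a time using the display postulates, relying on the fact that these postulates preserve antecedent/consequent polarity so that $W$ lands on the correct side. Your explicit simultaneous induction on the size of the side containing $W$, with the polarity-reversing reduction in the $\ainv$/$\minv$ cases, is exactly the bookkeeping the paper leaves to the reader as "straightforward".
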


\begin{proof}
Essentially, one uses the display postulates to move any
structure surrounding $W$ to the opposite side of the
consecution, or to eliminate any preceding occurrences of
$\ainv$ and $\minv$ (note that for each possible position of
$W$ in $\seq{X}{Y}$ there are display postulates allowing the
topmost level of structure above $W$ to be moved away or
eliminated). Moreover, each of the display postulates preserves
antecedent and consequent parts of consecutions, so that $W$
must end up on the correct side of the consecution at the end
of this process.  The details are straightforward.
\end{proof}

\begin{example}
The antecedent part $Y$ of the consecution $\seq{\minv(X ,
\ainv Y)}{Z ; \minv W}$ can be displayed as follows:

\[\begin{prooftree}
\[\[\[\[\[\[\seq{\minv(X , \ainv Y)}{Z ; \minv W}
\justifies
\seq{\minv(Z ; \minv W)}{\minv\minv(X , \ainv Y)} \using \mdl{3}{a} \]
\justifies
\seq{\minv\minv\minv(Z ; \minv W)}{\minv\minv(X , \ainv Y)} \using \mdl{3}{a,b}\]
\justifies
\seq{\minv(X , \ainv Y)}{\minv\minv(Z ; \minv W)} \using \mdl{3}{a} \]
\justifies
\seq{\minv(Z ; \minv W)}{X , \ainv Y} \using \mdl{3}{a} \]
\justifies
\seq{\minv(Z ; \minv W) , \minv X}{\ainv Y} \using \mdl{2}{b} \]
\justifies
\seq{\ainv\ainv Y}{\ainv (\minv(Z ; \minv W) , \minv X)} \using \adl{3}{a}\]
\justifies
\seq{Y}{\ainv (\minv(Z ; \minv W) , \minv X)} \using \adl{3}{a,b}
\end{prooftree}\]
\end{example}

The proof rules of $\displayCBI$ are given in
Figure~\ref{fig:logical_rules}. The identity rules consist of the usual
identity axiom for propositional variables, a cut rule and a
rule for display equivalence.  The logical rules follow the
division between left and right introduction rules familiar
from sequent calculus.  Note that, since we can appeal to Theorem~\ref{thm:display}, the
formula introduced by a logical rule is always displayed in its conclusion.
Both the identity rules and the logical
rules are the standard ones for display logic, instantiated to
$\CBI$.
The structural rules of $\displayCBI$ implement suitable
associativity and unitary laws on both sides of consecutions,
plus weakening and contraction for the (additive) semicolon.

\begin{figure}[t]
\flushleft{\bf Identity rules:} \\
{\small\[\begin{array}{c@{\hspace{1cm}}c@{\hspace{1cm}}c}
\begin{prooftree}
\phantom{X}
\justifies
\seq{P}{P}
\using \ax
\end{prooftree}
&
\begin{prooftree}
\seq{X}{F}
\phantom{w}
\seq{F}{Y}
\justifies
\seq{X}{Y} \using \cut
\end{prooftree}
&
\begin{prooftree}
\seq{X'}{Y'}
\justifies
\seq{X}{Y}
\using \;\;\seq{X}{Y} \displayeq \seq{X'}{Y'}\;\;\display
\end{prooftree}
\end{array}\]} \\ \vspace{0.8cm}

\mbox{\bf Logical rules:} \\
{\small
\[\begin{array}{c@{\hspace{0.7cm}}c@{\hspace{0.7cm}}c@{\hspace{0.7cm}}c}
\begin{prooftree}
\seq{\aemp}{X}
\justifies
\seq{\true}{X}
\using \truel
\end{prooftree}
&
\begin{prooftree}
\phantom{w}
\justifies
\seq{\aemp}{\true}
\using \truer
\end{prooftree}
&
\begin{prooftree}
\seq{\memp}{X}
\justifies
\seq{\mtrue}{X}
\using \mtruel
\end{prooftree}
&
\begin{prooftree}
\phantom{w}
\justifies
\seq{\memp}{\mtrue}
\using \mtruer
\end{prooftree}
\\ & \\
\begin{prooftree}
\phantom{w}
\justifies
\seq{\false}{\aemp}
\using \falsel
\end{prooftree}
&
\begin{prooftree}
\seq{X}{\aemp}
\justifies
\seq{X}{\false}
\using \falser
\end{prooftree}
&
\begin{prooftree}
\phantom{w}
\justifies
\seq{\mfalse}{\memp}
\using \mfalsel
\end{prooftree}
&
\begin{prooftree}
\seq{X}{\memp}
\justifies
\seq{X}{\mfalse}
\using \mfalser
\end{prooftree}
\\ & \\
\begin{prooftree}
\seq{\ainv F}{X}
\justifies
\seq{\neg F}{X}
\using \negl
\end{prooftree}
&
\begin{prooftree}
\seq{X}{\ainv F}
\justifies
\seq{X}{\neg F}
\using \negr
\end{prooftree}
&
\begin{prooftree}
\seq{\minv F}{X}
\justifies
\seq{\mneg F}{X}
\using \mnegl
\end{prooftree}
&
\begin{prooftree}
\seq{X}{\minv F}
\justifies
\seq{X}{\mneg F}
\using \mnegr
\end{prooftree}
\\ & \\
\begin{prooftree}
\seq{F ; G}{X}
\justifies
\seq{F \wedge G}{X}
\using \andl
\end{prooftree}
&
\begin{prooftree}
\seq{X}{F}
\phantom{w}
\seq{Y}{G}
\justifies \seq{X ; Y}{F \wedge G}
\using \andr
\end{prooftree}
&
\begin{prooftree}
\seq{F , G}{X}
\justifies
\seq{F * G}{X}
\using \starl
\end{prooftree}
&
\begin{prooftree}
\seq{X}{F}
\phantom{w}
\seq{Y}{G}
\justifies \seq{X , Y}{F * G} \using \starr
\end{prooftree}
\\ & \\
\begin{prooftree}
\seq{F}{X}
\phantom{w}
\seq{G}{Y}
\justifies
\seq{F \vee G}{X ; Y} \using \orl
\end{prooftree}
&
\begin{prooftree}
\seq{X}{F ; G}
\justifies
\seq{X}{F \vee G} \using \orr
\end{prooftree}
&
\begin{prooftree}
\seq{F}{X}
\phantom{w}
\seq{G}{Y}
\justifies
\seq{F \mor G}{X , Y} \using \morl
\end{prooftree}
&
\begin{prooftree}
\seq{X}{F , G}
\justifies
\seq{X}{F \mor G} \using \morr
\end{prooftree}
\\ & \\
\begin{prooftree}
\seq{X}{F} \phantom{w} \seq{G}{Y}
\justifies
\seq{F \rightarrow G}{\ainv X ; Y} \using \impl
\end{prooftree}
&
\begin{prooftree}
\seq{X ; F}{G}
\justifies
\seq{X}{F \rightarrow G}\using \impr
\end{prooftree}
&
\begin{prooftree}
\seq{X}{F}
\phantom{w}
\seq{G}{Y}
\justifies
\seq{F \wand G}{\minv X , Y} \using \wandl
\end{prooftree}
&
\begin{prooftree}
\seq{X , F}{G}
\justifies
\seq{X}{F \wand G}\using \wandr
\end{prooftree}
\end{array}\]} \\ \vspace{0.8cm}

{\bf Structural rules:} \\
{\small
\[\begin{array}{c@{\hspace{0.7cm}}c@{\hspace{0.7cm}}c@{\hspace{0.7cm}}c}
\begin{prooftree}
\seq{W ; (X ; Y)}{Z}
\Justifies
\seq{(W ; X) ; Y}{Z}
\using \aassocl\!\!\!
\end{prooftree}
&
\begin{prooftree}
\seq{W}{(X ; Y) ; Z}
\Justifies
\seq{W}{X ; (Y ; Z)}
\using \aassocr\!\!\!
\end{prooftree}
&
\begin{prooftree}
\seq{W , (X , Y)}{Z}
\Justifies
\seq{(W , X) , Y}{Z}
\using \massocl\!\!\!
\end{prooftree}
&
\begin{prooftree}
\seq{W}{(X , Y) , Z}
\Justifies
\seq{W}{X , (Y , Z)}
\using \massocr\!\!\!
\end{prooftree}
\\ & \\
\begin{prooftree}
\seq{\aemp ; X}{Y}
\Justifies
\seq{X}{Y}
\using \aunitl
\end{prooftree}
&
\begin{prooftree}
\seq{X}{Y ; \aemp}
\Justifies
\seq{X}{Y}
\using \aunitr
\end{prooftree}
&
\begin{prooftree}
\seq{\memp , X}{Y}
\Justifies
\seq{X}{Y}
\using \munitl
\end{prooftree}
&
\begin{prooftree}
\seq{X}{Y , \memp}
\Justifies
\seq{X}{Y}
\using \munitr
\end{prooftree}
\\ & \\
\begin{prooftree}
\seq{X}{Z}
\justifies
\seq{X ; Y}{Z}
\using \weakl
\end{prooftree}
&
\begin{prooftree}
\seq{X}{Z}
\justifies
\seq{X}{Y ; Z}
\using \weakr
\end{prooftree}
&
\begin{prooftree}
\seq{X ; X}{Z}
\justifies
\seq{X}{Z}
\using \contrl
\end{prooftree}
&
\begin{prooftree}
\seq{X}{Z ; Z}
\justifies
\seq{X}{Z}
\using \contrr
\end{prooftree}
\end{array}\]} 

\caption{The proof rules of $\displayCBI$.  $W,X,Y,Z$ range
over structures, $F,G$ range over $\CBI$-formulas and $P$
ranges over $\vars$.\label{fig:logical_rules}}
\end{figure}

The identity axiom of $\displayCBI$ is postulated only for
propositional variables\footnote{This is standard in display logic, and slightly simplifies the
proof of cut-elimination.}, but can be
recovered for arbitrary formulas.  We say a consecution is
\emph{cut-free provable} if it has a $\displayCBI$ proof
containing no instances of $\cut$.

\begin{proposition}
\label{prop:DLBI_identity} $\seq{F}{F}$ is cut-free provable in
$\displayCBI$ for any formula $F$.
\end{proposition}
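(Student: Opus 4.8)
The plan is to prove this by structural induction on the formula $F$, building in each case a cut-free $\displayCBI$ derivation of $\seq{F}{F}$ out of cut-free derivations of $\seq{G}{G}$ for the immediate subformulas $G$ of $F$ supplied by the induction hypothesis.

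The base cases are immediate. If $F$ is a propositional variable then $\seq{F}{F}$ is the axiom $\ax$. If $F$ is one of the four constants, I pair the corresponding left and right logical rules: e.g.\ for $F = \true$, the rule $\truer$ gives $\seq{\aemp}{\true}$, and then $\truel$ (taking the schematic $X$ to be $\true$) gives $\seq{\true}{\true}$; the cases $F = \false$, $F = \mtrue$, $F = \mfalse$ are handled identically using $\falsel/\falser$, $\mtruel/\mtruer$ and $\mfalsel/\mfalser$ respectively.

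For the inductive step I group the connectives by the structural machinery each needs. For the negations, given $\seq{G}{G}$ by the hypothesis, the display postulate $\adl{3}{a}$ (resp.\ $\mdl{3}{a}$) yields $\seq{\ainv G}{\ainv G}$ (resp.\ $\seq{\minv G}{\minv G}$), after which $\negl$ then $\negr$ (resp.\ $\mnegl$ then $\mnegr$) give $\seq{\neg G}{\neg G}$ (resp.\ $\seq{\mneg G}{\mneg G}$). For the multiplicative binary connectives $*$ and $\mor$, which carry no structural rules, the two copies of the hypothesis feed directly into $\starr$ followed by $\starl$ to give $\seq{G_1 * G_2}{G_1 * G_2}$, and dually into $\morl$ followed by $\morr$ to give $\seq{G_1 \mor G_2}{G_1 \mor G_2}$. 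For additive conjunction, $\weakl$ applied to $\seq{G_1}{G_1}$ and to $\seq{G_2}{G_2}$ (using commutativity of `;', which is built into display-equivalence) gives $\seq{G_1; G_2}{G_1}$ and $\seq{G_1; G_2}{G_2}$; then $\andl$ yields $\seq{G_1 \wedge G_2}{G_1}$ and $\seq{G_1 \wedge G_2}{G_2}$, $\andr$ yields $\seq{(G_1 \wedge G_2);(G_1 \wedge G_2)}{G_1 \wedge G_2}$, and $\contrl$ finishes; the case of $\vee$ is exactly dual, using $\weakr$, $\orr$, $\orl$, $\contrr$. Finally, for $\rightarrow$, rule $\impl$ applied to the two hypotheses gives $\seq{G_1 \rightarrow G_2}{\ainv G_1; G_2}$; reading the display postulate $\adl{1}{a}$ upwards rearranges this to $\seq{(G_1 \rightarrow G_2); G_1}{G_2}$, and $\impr$ then delivers $\seq{G_1 \rightarrow G_2}{G_1 \rightarrow G_2}$; the case of $\wand$ is the multiplicative mirror image, using $\wandl$, the postulate $\mdl{1}{a}$, and $\wandr$.

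No instance of $\cut$ occurs anywhere in these derivations, and the derivations imported from the induction hypotheses are cut-free, so the constructed proof of $\seq{F}{F}$ is cut-free. I do not expect any genuine obstacle; the only point requiring care is choosing the correct display postulate and direction in the negation and implication cases to move a negated or misplaced substructure to the appropriate side of the turnstile, but the Display Theorem (Theorem~\ref{thm:display}) guarantees that such rearrangements are always possible, so every case is routine.
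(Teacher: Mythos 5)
Your proof is correct and is exactly the structural induction the paper intends (its own proof reads simply ``By structural induction on $F$''); every case you describe checks against the rules of Figure~\ref{fig:logical_rules} and the display postulates, and no instance of \cut{} appears. The only cosmetic remark is that your $\wedge$ and $\vee$ cases are more roundabout than necessary: they can be handled exactly like your $*$ and $\mor$ cases, e.g.\ $\andr$ applied to the two induction hypotheses gives $\seq{G_1 ; G_2}{G_1 \wedge G_2}$ and then $\andl$ finishes, with no need for weakening, contraction, or commutativity of the semicolon.
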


\begin{proof}
By structural induction on $F$.
\end{proof}

\begin{theorem}[Cut-elimination]
\label{thm:DLBI_cut_elim} If a consecution $\seq{X}{Y}$ is
provable in $\displayCBI$ then it is also cut-free provable.
\end{theorem}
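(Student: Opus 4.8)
The plan is to invoke Belnap's general cut-elimination metatheorem for display calculi~\cite{Belnap:82}: any display calculus whose proof rules satisfy the eight syntactic conditions (C1)--(C8) admits cut-elimination. Hence it suffices to check that $\displayCBI$ meets all eight conditions, and this is done by direct inspection of the rules in Figure~\ref{fig:logical_rules} together with the display postulates of Figure~\ref{fig:display_postulates}.

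Conditions (C1)--(C7) are essentially ``design'' constraints and hold because our logical and structural rules are exactly the standard display-logic schemata instantiated to the additive and multiplicative connective families of $\CBI$. Concretely: (C1) every formula appearing in a premise of a rule is a subformula of a formula in its conclusion; (C2) congruent parameters are occurrences of one and the same structure; (C3) no two constituents of the conclusion of a rule are congruent to each other (non-proliferation); (C4) congruent parameters are all antecedent parts, or all consequent parts, of their respective consecutions; (C5) a formula that is principal in the conclusion of a logical rule constitutes the whole antecedent or the whole consequent of that conclusion --- immediate here, since each logical rule displays its principal formula; and (C6)/(C7) every rule is closed under substitution of arbitrary structures for congruent formulas occurring as consequent parts (resp.\ antecedent parts). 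The only slightly delicate point is the bookkeeping of antecedent versus consequent occurrences in the presence of the meta-level negations $\ainv$ and $\minv$, but Definition~\ref{defn:parts} is tailored precisely so that the display postulates and the logical rules respect it, and (C4), (C6), (C7) then go through routinely.

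The substantive content is condition (C8): for each connective, a cut whose cut-formula is principal in the last rule of both premise-derivations must be replaceable by a derivation using only cuts on proper subformulas of the cut-formula (together with structural and display manipulations). I would carry this out connective by connective. For the additive connectives $\true,\false,\neg,\wedge,\vee,\rightarrow$ the required reductions are exactly the familiar principal-cut reductions of the classical sequent calculus $\LK$; for the multiplicative connectives $\mtrue,\mfalse,\mneg,*,\mor,\wand$ the reductions are formally identical, obtained by replacing each additive structural connective by its multiplicative counterpart. The genuinely display-calculus-specific ingredient is that after one of the two introduction rules the cut-formula may occur inside a structural context rather than alone on its side of the turnstile --- for instance $\wandl$ has conclusion $\seq{F \wand G}{\minv X , Y}$ --- so before applying the subformula cuts one first uses the Display Theorem (Theorem~\ref{thm:display}), i.e.\ the display postulates of Figure~\ref{fig:display_postulates}, to re-display the relevant substructures on the appropriate side, recording each such rearrangement by an application of the display-equivalence rule. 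The cases where the cut-formula is a propositional variable bottom out at the identity axiom $\ax$, and wherever a reflexive consecution $\seq{F}{F}$ is needed it is supplied cut-free by Proposition~\ref{prop:DLBI_identity}.

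The step I expect to be the main obstacle is making the (C8) analysis precise for $\wand$ and $\mor$: these are the connectives whose left and right introduction rules genuinely interleave the two sides of the consecution via the meta-negation $\minv$, so the re-display steps are non-trivial, and one must verify both that the display postulates chosen really do bring the needed substructure to the side on which the subformula cut can fire and that position-alikeness of parameters is preserved throughout, so that the resulting object is a legal $\displayCBI$ derivation. Once all (C8) cases are checked, the result follows at once from Belnap's metatheorem.
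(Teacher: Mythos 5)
Your proposal follows exactly the route the paper takes: appeal to Belnap's metatheorem, verify conditions (C1)--(C7) by inspection of the rule schemata, and discharge (C8) via principal-cut reductions that use the display postulates to re-display subformulas before cutting on them (the paper's appendix works the $\mor$ case explicitly, just as you single out $\mor$ and $\wand$ as the delicate ones). The only cosmetic difference is that the paper, following Kracht, states a single combined (C6/7) condition rather than treating antecedent and consequent substitution separately; the argument is otherwise the same.
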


\begin{proof}
The $\displayCBI$ proof rules satisfy the conditions
shown by Belnap in~\cite{Belnap:82} to be sufficient for
cut-elimination to hold. We state these conditions and indicate how
they are verified in Appendix~\ref{app:DLBI_cut_elim}.
\end{proof}

The following corollary of Theorem~\ref{thm:DLBI_cut_elim} uses
the notion of a \emph{subformula} of a $\CBI$-formula, defined
in the usual way.

\begin{corollary}[Subformula property]
\label{cor:subformula} If $\seq{X}{Y}$ is
$\displayCBI$-provable then there is a $\displayCBI$ proof of
$\seq{X}{Y}$ in which every formula occurrence is a subformula
of a formula occurring in $\seq{X}{Y}$.
\end{corollary}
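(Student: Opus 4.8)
The plan is to derive the subformula property from cut-elimination together with a routine inspection of the remaining proof rules. Given a $\displayCBI$-proof of $\seq{X}{Y}$, Theorem~\ref{thm:DLBI_cut_elim} yields a cut-free proof $\pi$ of $\seq{X}{Y}$, so it suffices to show that every formula occurrence in $\pi$ is a subformula of some formula occurring in $\seq{X}{Y}$, where the subformula relation is taken to be reflexive.

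The key lemma is that every proof rule of $\displayCBI$ \emph{other than} $\cut$ is ``subformula-preserving'': every formula occurring in a premise is a subformula of some formula occurring in the conclusion. I would verify this by case analysis over Figure~\ref{fig:logical_rules}. For the logical introduction rules, the formulas appearing in the premises are precisely the immediate subformulas of the principal formula, together with the formulas occurring in the side structures, all of which recur in the conclusion. For the structural rules and for the display rule $\display$, no new formulas are created at all: the associativity, unit, weakening and contraction rules, and the display postulates of Figure~\ref{fig:display_postulates}, only rearrange, delete, duplicate or (reading weakening downwards) insert structure, so the set of formula occurrences in a premise is always contained in that of the conclusion. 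In particular, moving a substructure inside a structural negation $\ainv$ or $\minv$ does not change which formulas occur in it.

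Granting the lemma, the corollary follows by induction on the structure of the cut-free proof $\pi$. The base cases are the zero-premise rules $\ax$, $\truer$, $\falsel$, $\mtruer$, $\mfalsel$, whose sole formula occurrence (if any) obviously occurs in their conclusion. For the inductive step, suppose the last rule of $\pi$ has conclusion $C = \seq{X}{Y}$ and premises $C_1,\dots,C_n$, proved by subproofs $\pi_1,\dots,\pi_n$. By the induction hypothesis, every formula occurrence in $\pi_i$ is a subformula of a formula occurring in $C_i$; by the lemma, every formula occurring in $C_i$ is a subformula of a formula occurring in $C$; and the formulas occurring in $C$ are subformulas of themselves. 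Since the subformula relation is transitive, every formula occurrence in $\pi$ is a subformula of a formula occurring in $C = \seq{X}{Y}$, as required.

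The only step that needs more than a glance — the ``main obstacle'', such as it is — is the display rule $\display$, because display-equivalence is defined via arbitrary derivations using the postulates of Figure~\ref{fig:display_postulates} rather than a single rule application. Here one observes that each display postulate leaves the collection of formula occurrences of a consecution entirely unchanged, since the structural connectives $\aemp$, $\memp$, $\ainv$, $\minv$, ``$;$'' and ``$,$'' carry no formulas of their own; hence display-equivalent consecutions contain exactly the same formula occurrences and $\display$ is subformula-preserving (indeed in both directions). Everything else is a direct inspection of the finitely many rule schemes.
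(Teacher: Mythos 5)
Your proof is correct and follows essentially the same route as the paper: invoke cut-elimination (Theorem~\ref{thm:DLBI_cut_elim}) and then observe by inspection that every remaining rule, including the display postulates, has only subformulas of its conclusion's formulas in its premises. The paper states this more tersely; your explicit induction and your check of the display postulates are just a fuller write-up of the same argument.
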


\begin{proof} If $\seq{X}{Y}$ is provable then it has a cut-free proof
by Theorem~\ref{thm:DLBI_cut_elim}. By inspection of the
$\displayCBI$ rules, no rule instance in this proof can have in
its premises any formula that is not a subformula of a formula
occurring in its conclusion. Thus a cut-free proof of
$\seq{X}{Y}$ cannot contain any formulas which are not
subformulas of formulas in $\seq{X}{Y}$.
\end{proof}

\begin{corollary}[Consistency]
\label{cor:consistency} Neither $\seq{\memp}{\memp}$ nor
$\seq{\aemp}{\aemp}$ is provable in $\displayCBI$.
\end{corollary}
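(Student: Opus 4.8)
The plan is to derive \textbf{Corollary~\ref{cor:consistency}} directly from the subformula property (Corollary~\ref{cor:subformula}) --- this is the natural tool to reach for immediately after establishing it, and is presumably why the corollary is titled ``Consistency''. I would argue by contradiction. Suppose $\seq{\memp}{\memp}$ is $\displayCBI$-provable; then by Corollary~\ref{cor:subformula} it has a $\displayCBI$ proof $\mathcal{D}$ in which every formula occurrence is a subformula of a formula occurring in the endconsecution $\seq{\memp}{\memp}$. The decisive observation is that $\memp$ is a \emph{structural} constant, not a $\CBI$-formula (compare the grammar of Definition~\ref{defn:consecution} with that of Definition~\ref{defn:CBI_formula}); hence $\seq{\memp}{\memp}$ contains \emph{no} formula occurrences at all, and therefore $\mathcal{D}$ contains no formula occurrences whatsoever.

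Next I would show that no such ``formula-free'' $\displayCBI$ proof can exist. Every $\displayCBI$ proof is a finite derivation tree, and its leaves must be instances of the zero-premise rules of Figure~\ref{fig:logical_rules}, namely $\ax$, $\truer$, $\falsel$, $\mtruer$ and $\mfalsel$ (every other rule has at least one premise, so cannot label a leaf). Inspecting the figure, the conclusion of each of these axioms carries at least one formula occurrence --- respectively an arbitrary $P \in \vars$, $\true$, $\false$, $\mtrue$ and $\mfalse$. Hence every $\displayCBI$ proof contains at least one formula occurrence, contradicting the conclusion of the previous paragraph. So $\seq{\memp}{\memp}$ is not provable, and the argument for $\seq{\aemp}{\aemp}$ is verbatim the same with $\aemp$ playing the role of the offending structural constant.

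I do not anticipate a genuine obstacle here: the only point that needs a moment's care is keeping the structural constants $\aemp,\memp$ rigorously distinct from the logical constants $\true,\false,\mtrue,\mfalse$ (so that the endconsecution really is formula-free), together with the trivial syntactic check that no $\displayCBI$ axiom has a formula-free conclusion. As a cross-check one can also obtain the result from soundness: since $\dlantform{\memp} = \mtrue$ and $\dlconform{\memp} = \mfalse$, provability of $\seq{\memp}{\memp}$ would make $\mtrue \rightarrow \mfalse$ valid, yet in the one-element $\CBI$-model (where $e = \infty$) we have $\sat{e}{\mtrue}$ and $\notsat{e}{\mfalse}$; likewise $\seq{\aemp}{\aemp}$ would force validity of $\true \rightarrow \false$, which fails in every nonempty model.
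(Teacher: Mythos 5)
Your proof is correct and its core is exactly the paper's argument: by the subformula property a proof of $\seq{\memp}{\memp}$ would have to be formula-free, yet every zero-premise rule of $\displayCBI$ carries a formula occurrence in its conclusion, so no proof can be formula-free. The only divergence is in the second half: the paper does not repeat this argument for $\seq{\aemp}{\aemp}$, but instead reduces that case to the first one, deriving $\seq{\memp}{\memp}$ from a hypothetical proof of $\seq{\aemp}{\aemp}$ via $\weakl$, $\weakr$, $\aunitl$ and $\aunitr$. Your direct treatment works just as well, since $\seq{\aemp}{\aemp}$ is likewise formula-free, and is arguably the more uniform of the two; your soundness cross-check is a further valid (if redundant) route to the same conclusion.
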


\begin{proof}
If $\seq{\memp}{\memp}$ were $\displayCBI$-provable then, by
the subformula property (Corollary~\ref{cor:subformula}) there
is a proof of $\seq{\memp}{\memp}$ containing no formula
occurrences anywhere.  But every axiom of $\displayCBI$
contains a formula occurrence, so this is impossible. Then $\seq{\aemp}{\aemp}$ cannot be provable either, otherwise
$\seq{\aemp ; \memp}{\memp ; \aemp}$ is provable by applying
$\weakl$ and $\weakr$, whence $\seq{\memp}{\memp}$ is provable
by applying $\aunitl$ and $\aunitr$, which is a contradiction.
\end{proof}

\begin{figure}
\[\begin{prooftree}
\[\[\[\[\[\[\[\[\[\[\[\[\[\[\[
\mbox{(Proposition~\ref{prop:DLBI_identity})}
\leadsto
\seq{F}{F} \]
\justifies
\seq{\ainv F}{\ainv F} \using \display \]
\justifies
\seq{\ainv F}{\neg F} \using \negr \]
\justifies
\seq{\minv\neg F}{\minv\ainv F} \using \display \]
\justifies
\seq{\mneg\neg F}{\minv\ainv F} \using \mnegl \]
\justifies
\seq{\mneg\neg F ; \mneg F}{\minv\ainv F} \using \weakl \]
\justifies
\seq{\minv F}{\minv\ainv\minv(\mneg\neg F ; \mneg F)} \using \display \]
\justifies
\seq{\mneg F}{\minv\ainv\minv(\mneg\neg F ; \mneg F)} \using \mnegl \]
\justifies
\seq{\mneg\neg F ; \mneg F}{\minv\ainv\minv(\mneg\neg F ; \mneg F)} \using \weakl \]
\justifies
\seq{\ainv\minv(\mneg\neg F ; \mneg F)}{\minv(\mneg\neg F ; \mneg F)} \using \display \]
\justifies
\seq{\minv\aemp ; \ainv\minv(\mneg\neg F ; \mneg F)}{\minv(\mneg\neg F ; \mneg F)} \using \weakl \]
\justifies
\seq{\minv\aemp}{\minv(\mneg\neg F ; \mneg F) ; \minv(\mneg\neg F ; \mneg F)} \using \display \]
\justifies
\seq{\minv\aemp}{\minv(\mneg\neg F ; \mneg F)} \using \contrr \]
\justifies
\seq{\mneg\neg F}{\ainv\mneg F ; \aemp} \using \display \]
\justifies
\seq{\mneg\neg F}{\ainv\mneg F} \using \aunitr \]
\justifies
\seq{\mneg\neg F}{\neg\mneg F} \using \negr
\end{prooftree}\]
\caption{A cut-free $\displayCBI$ proof of $\seq{\mneg\neg F}{\neg\mneg
F}$. \label{fig:displayBI_proof}}
\end{figure}

Our main technical results concerning $\displayCBI$ are the
following.

\begin{proposition}[Soundness]
\label{prop:DLBI_sound} If $\seq{X}{Y}$ is
$\displayCBI$-provable then it is valid.
\end{proposition}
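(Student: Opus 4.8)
The plan is to prove soundness by induction on the height of a $\displayCBI$ derivation of $\seq{X}{Y}$, showing that every proof rule \emph{preserves validity}: whenever all premises of a rule instance are valid consecutions, so is its conclusion. Recall from Definition~\ref{defn:DLBI_validity} that $\seq{X}{Y}$ is valid iff the formula $\dlantform{X} \rightarrow \dlconform{Y}$ is $\CBI$-valid, so for each rule I would unfold the antecedent/consequent translation of its premises and conclusion and reduce the required implication to an elementary semantic fact about $\CBI$-models.

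The identity rules are immediate: the axiom $\seq{P}{P}$ corresponds to the valid formula $P \rightarrow P$; the cut rule reduces to transitivity of $\rightarrow$, which is immediate from Definition~\ref{defn:CBI_satisfaction}; and the display rule $\display$ is handled by the sub-lemma described below. For the logical and structural rules, most cases are purely mechanical. In $\truel$, $\falser$, $\negl$, $\negr$, $\mtruel$, $\mfalser$, $\mnegl$, $\mnegr$, $\andl$, $\starl$, $\orr$, $\morr$ (and similarly $\aunitl$, $\aunitr$, $\munitl$) the premise and conclusion have the \emph{same} translation into $\dlantform{\cdot}$/$\dlconform{\cdot}$ — for instance $\dlantform{\minv F} = \mneg\dlconform{F} = \mneg F = \dlantform{\mneg F}$ and $\dlantform{F , G} = \dlantform{F} * \dlantform{G}= F * G = \dlantform{F * G}$, using that $e$ is a unit for the $\munitl$ case — so validity is trivially preserved. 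The axioms $\truer$, $\falsel$, $\mtruer$, $\mfalsel$ reduce to validity of $\true \rightarrow \true$, $\false \rightarrow \false$, etc.; the two-premise rules $\andr$, $\orl$, $\impl$, $\impr$ reduce to standard classical propositional reasoning about the additive connectives, which holds at every model element by Definition~\ref{defn:CBI_satisfaction}; and the weakening and contraction rules reduce to the obvious facts that $A \rightarrow C$ entails $A \wedge B \rightarrow C$ and $A \wedge A \rightarrow C$ entails $A \rightarrow C$ (with consequent-side duals for $\vee$). The associativity rules reduce to associativity of $\wedge$, $*$ and $\mor$, and the multiplicative unit rule $\munitr$ to the equivalence $F \mor \mfalse \leftrightarrow F$ from Lemma~\ref{lem:CBI_equivalences}. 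Finally, $\starr$ and $\morl$ reduce to monotonicity of $*$ and of $\mor$ respectively (direct unfoldings of their satisfaction clauses), while $\wandr$ reduces to the residuation $\dlantform{X} * F \rightarrow G$ iff $\dlantform{X} \rightarrow (F \wand G)$, and $\wandl$ to antitonicity of $\wand$ in its first argument combined with the equivalence $F \wand G \leftrightarrow \mneg F \mor G$ of Lemma~\ref{lem:CBI_equivalences}.

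The only part of the argument that is not wholly mechanical is the treatment of the display rule $\display$, for which I would first prove a sub-lemma: display-equivalent consecutions are equivalid. By Definition~\ref{defn:displayeq} it suffices to check that each of the six bidirectional display postulates of Figure~\ref{fig:display_postulates} preserves validity in both directions. The additive postulates $\adl{1}{a}$–$\adl{3}{b}$ reduce, via the clauses $\dlantform{\ainv Y} = \neg\dlconform{Y}$ and $\dlconform{\ainv Y} = \neg\dlantform{Y}$ together with commutativity of $\wedge$ and $\vee$, to the classical equivalences $A \wedge B \rightarrow C \Leftrightarrow A \rightarrow \neg B \vee C$ and $A \rightarrow B \Leftrightarrow \neg B \rightarrow \neg A$. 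The multiplicative postulates $\mdl{1}{a}$–$\mdl{3}{b}$ reduce analogously to $A * B \rightarrow C \Leftrightarrow A \rightarrow \mneg B \mor C$ and $A \rightarrow B \Leftrightarrow \mneg B \rightarrow \mneg A$; these follow by combining the residuation $* \dashv \wand$ with the $\CBI$-valid equivalences $F \wand G \leftrightarrow \mneg F \mor G$, $(F \wand G) \leftrightarrow (\mneg G \wand \mneg F)$ and $\mneg\mneg F \leftrightarrow F$ supplied by Lemma~\ref{lem:CBI_equivalences} (and ultimately by Proposition~\ref{prop:cbimodel_properties}). Establishing this sub-lemma is the main obstacle — everything hinges on having the right stock of multiplicative equivalences — but once it is in place, the display rule is dispatched and the induction closes, yielding soundness.
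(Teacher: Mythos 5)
Your proposal is correct and takes essentially the same route as the paper: soundness is reduced to local soundness of every rule, with the non-trivial cases ($\wandl$, $\morl$, the associativity and unit rules, and the display postulates) discharged via the satisfaction clauses together with the $\CBI$-valid equivalences of Lemma~\ref{lem:CBI_equivalences} and the residuation of $*$ and $\wand$. The paper merely exhibits a few sample cases where you sketch all of them, but the underlying argument is the same.
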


\begin{proof}
It suffices to show that each proof rule of $\displayCBI$ is
\emph{locally sound} in that validity of the conclusion follows
from the validity of the premises.  In the particular case of
the display rule $\display$, local soundness follows by
establishing that each display postulate (see
Figure~\ref{fig:display_postulates}) is locally sound. We show
how to deal with some sample rule cases.

\paragraph{\em Case \wandl.}  Let $M=\cbimodel$ be a
$\CBI$-model, let $r \in R$ and suppose $\sat{r}{F \wand G}$,
whence we require to show $\sat{r}{\mneg \dlantform{X} \mor
\dlconform{Y}}$.  Using Lemma~\ref{lem:CBI_equivalences}, it
suffices to show that $\sat{r}{\dlantform{X} \wand
\dlconform{Y}}$.  So, let $r',r'' \in R$ be such that $r'' \in
r \circ r'$ and $\sat{r'}{\dlantform{X}}$, whence we require to
show $\sat{r''}{\dlconform{Y}}$.  Since the premise
$\seq{X}{F}$ is valid and $\sat{r'}{\dlantform{X}}$ by
assumption, we have $\sat{r'}{F}$.   Then, since $\sat{r}{F
\wand G}$ and $r'' \in r \circ r'$, we have $\sat{r''}{G}$.
Finally, since the premise $\seq{G}{Y}$ is valid by assumption,
we have $\sat{r''}{\dlconform{Y}}$ as required.

\paragraph{\em Case \morl.}  Let $M=\cbimodel$ be a $\CBI$-model,
let $r \in R$ and suppose $\sat{r}{F \mor G}$, whence we
require to show $\sat{r}{\dlconform{X} \mor \dlconform{Y}}$.
So, let $r_1,r_2 \in R$ be such that $\inv r \in r_1 \circ
r_2$, whence we require to show either $\sat{\inv
r_1}{\dlconform{X}}$ or $\sat{\inv r_2}{\dlconform{Y}}$.  Since
$\inv r \in r_1 \circ r_2$ and $\sat{r}{F \mor G}$, we have
either $\sat{r_1}{F}$ or $\sat{r_2}{G}$.  Then, since the
premises $\seq{F}{X}$ and $\seq{G}{Y}$ are assumed valid, we
have the required conclusion in either case.

\paragraph{\em Case $\massocr$.} Both directions of the rule 
follow by establishing that for any $\CBI$-model $M=\cbimodel$ and $r
\in R$ we have $\sat{r}{\dlconform{X} \mor (\dlconform{Y} \mor
  \dlconform{Z})}$ iff $\sat{r}{(\dlconform{X} \mor \dlconform{Y})
  \mor \dlconform{Z}}$.  Using the equivalences $F \mor G
\leftrightarrow \mneg(\mneg F * \mneg G)$ and $\mneg\mneg F
\leftrightarrow F$ given by Lemma~\ref{lem:CBI_equivalences}, it
suffices to show that $\sat{r}{\mneg(\mneg\dlconform{X} *
  (\mneg\dlconform{Y} * \mneg\dlconform{Z}))}$ iff
$\sat{r}{\mneg((\mneg\dlconform{X} * \mneg\dlconform{Y}) *
  \mneg\dlconform{Z})}$.  This follows straightforwardly from the
definition of satisfaction and the associativity of $\circ$.

\paragraph{\em Case $\mdl{1}{a}$.}  We show how to treat one
direction of this display postulate; the reverse direction is
symmetric.  Let $M=\cbimodel$ be a $\CBI$-model, let $r \in R$
and suppose that $\sat{r}{\dlantform{X}}$, whence we require to
show $\sat{r}{\mneg\dlantform{Y} \mor \dlconform{Z}}$.  By
Lemma~\ref{lem:CBI_equivalences}, it suffices to show
$\sat{r}{\dlantform{Y} \wand \dlconform{Z}}$.  So let $r',r''
\in R$ be such that $r'' \in r \circ r'$ and
$\sat{r'}{\dlantform{Y}}$, whence we require to show
$\sat{r''}{\dlconform{Z}}$.  Since $\sat{r}{\dlantform{X}}$ we
have $\sat{r''}{\dlantform{X} * \dlantform{Y}}$, whence we have
$\sat{r''}{\dlconform{Z}}$ as required because the premise
$\seq{X , Y}{Z}$ is assumed valid.
\end{proof}

\begin{theorem}[Completeness of $\displayCBI$]
\label{thm:DLBI_complete} If $\seq{X}{Y}$ is valid then it is
provable in $\displayCBI$.
\end{theorem}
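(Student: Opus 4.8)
The plan is to reduce completeness of $\displayCBI$ to a standard completeness theorem for polyadic modal logic, namely Sahlqvist's theorem. The first move is to present $\CBI$-validity by a Hilbert-style axiomatisation. We view a $\CBI$-model $\cbimodel$ as a modal frame carrying the ternary accessibility relation $R_{\circ}=\{(x,y,z)\mid z\in x\circ y\}$, the (graph of the) unary operation $\inv$, and a propositional constant interpreting $\mtrue$; under this reading $*$ is a binary ``fusion'' diamond over $R_{\circ}$, $\langle\inv\rangle$ a unary modality over $\inv$, and $\mneg F$, $\mfalse$, $\mor$, $\wand$ become definable via $\mneg F=\langle\inv\rangle\neg F$ and the $\CBI$-valid equivalences of Lemma~\ref{lem:CBI_equivalences} ($\mfalse=\mneg\mtrue$, $F\wand G=\mneg(F*\mneg G)$, $F\mor G=\mneg(\mneg F*\mneg G)$). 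Let $\mathrm{H}_{\CBI}$ be classical propositional logic extended with: the minimal normal modal axioms and rules for $*$ and $\langle\inv\rangle$; axioms making $\circ$ commutative, associative and unital (with unit $\mtrue$); axioms making $\inv$ a total involution; and an axiom rendering the $\CBI$-model condition that $\inv x$ is the unique dual of $x$ relative to $\infty=\inv e$. Soundness of $\mathrm{H}_{\CBI}$ with respect to $\CBI$-models is immediate from Proposition~\ref{prop:cbimodel_properties} and the satisfaction clauses.

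The crux is that each non-classical axiom of $\mathrm{H}_{\CBI}$ can be chosen to be a Sahlqvist formula whose first-order correspondent is the intended relational condition --- commutativity, associativity, unit laws, and the involution/duality conditions are all of Sahlqvist shape (the binary-modality case uses the polyadic extension of Sahlqvist theory). By Sahlqvist's completeness theorem, $\mathrm{H}_{\CBI}$ is then sound and strongly complete with respect to the (first-order definable) class of frames satisfying exactly these conditions; invoking Proposition~\ref{prop:cbimodel_properties} to see that this class coincides with the $\CBI$-models of Definition~\ref{defn:CBI_model}, we conclude that a $\CBI$-formula $F$ is $\CBI$-valid iff it is a theorem of $\mathrm{H}_{\CBI}$.

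It remains to transfer theoremhood from $\mathrm{H}_{\CBI}$ to $\displayCBI$. First I would establish, by structural induction on structures, the two ``read-off'' facts that for every structure $X$ the consecutions $\seq{X}{\dlantform{X}}$ and $\seq{\dlconform{X}}{X}$ are $\displayCBI$-provable, using the logical and structural rules to unfold the semantic reading of the structural connectives in Figure~\ref{fig:structural_connectives}. Next, I would show $\seq{\aemp}{F}$ is $\displayCBI$-provable whenever $\mathrm{H}_{\CBI}\vdash F$: each axiom of $\mathrm{H}_{\CBI}$ yields a provable consecution by a routine derivation using the display postulates and $\cut$ (Figure~\ref{fig:displayBI_proof} gives the flavour), and provability of consecutions $\seq{\aemp}{\cdot}$ is closed under modus ponens, necessitation and monotonicity via $\cut$ and the Display Theorem~\ref{thm:display}. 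Finally, given a valid consecution $\seq{X}{Y}$: by Definition~\ref{defn:DLBI_validity} the formula $\dlantform{X}\rightarrow\dlconform{Y}$ is $\CBI$-valid, hence an $\mathrm{H}_{\CBI}$-theorem, hence $\seq{\aemp}{\dlantform{X}\rightarrow\dlconform{Y}}$ is provable; combining this with $\seq{X}{\dlantform{X}}$ and $\seq{\dlconform{Y}}{Y}$ via the $\impl$ rule, $\cut$, and a little display-manipulation to discharge the auxiliary $\ainv$-structure, we obtain $\seq{X}{Y}$.

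The step I expect to be the real obstacle is the modal one, and specifically the handling of the multiplicative unit. ``True at exactly one point'' is not modally (a fortiori not Sahlqvist-) definable, so the unit axioms force $\mtrue$ only to denote a \emph{set} $E$ of units with $x\circ E=\{x\}$ --- a ``multi-unit'' $\CBI$-frame in the sense of the footnote to Definition~\ref{defn:CBI_model} --- and likewise $\infty$ need not be a single point. Thus Sahlqvist completeness is obtained in the first instance only for this weaker class of frames, and one must then argue separately that $\CBI$ is already complete for the genuine single-unit models of Definition~\ref{defn:CBI_model}: given a multi-unit counter-model to $F$, one extracts (e.g.\ by a generated-submodel or quotient construction) a single-unit $\CBI$-model still refuting $F$. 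Establishing this reduction, together with pinning down the precise Sahlqvist axioms for the involution and duality conditions so that their correspondents really do recover the $\CBI$-model axiom (again using Proposition~\ref{prop:cbimodel_properties}), is where the work lies; by comparison the $\displayCBI$-simulation of $\mathrm{H}_{\CBI}$ is conceptually routine though laborious.
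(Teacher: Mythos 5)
Your proposal follows essentially the same route as the paper: recast $\CBI$ as a polyadic modal logic, obtain a Hilbert-style axiomatisation that is complete by Sahlqvist's theorem, reduce the resulting multi-unit frames to the genuine single-unit $\CBI$-models of Definition~\ref{defn:CBI_model}, and then simulate the Hilbert system inside $\displayCBI$ using the read-off consecutions $\seq{X}{\dlantform{X}}$ and $\seq{\dlconform{X}}{X}$ together with closure of $\seq{\aemp}{\cdot}$-provability under the Hilbert rules. The two obstacles you flag are exactly the ones the paper addresses; the multi-unit one is handled just as you suggest, by decomposing any model of the axioms as a disjoint union of unitary generated submodels.

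The step you leave open, however --- writing ``$\inv x$ is the \emph{unique} $y$ with $\infty\in x\circ y$'' as a Sahlqvist axiom --- is a genuine missing idea in the signature you propose, not merely laborious bookkeeping. With only the fusion diamond for $\circ$ and the diamond $\langle\inv\rangle$, the uniqueness direction speaks about $x$ and $y$ ``looking backwards'' along the ternary relation from the point $\infty$, and it is not expressible as a very simple Sahlqvist formula in that signature. The paper's device is to enrich the modal language with a second primitive \emph{diamond} $\myerightarrow$ whose accessibility relation is the residual of $\circ$ (informally $A\myerightarrow B$ reads as $\neg(A\wand\neg B)$), tied to $\circ$ by the two Sahlqvist axioms $Q\wedge(R\circ P)\rightarrow(R\wedge(P\myerightarrow Q))\circ\true$ and $R\wedge(P\myerightarrow Q)\rightarrow(\true\myerightarrow(Q\wedge(R\circ P)))$; the duality condition then becomes the pair of very simple Sahlqvist axioms $\inv P\rightarrow(P\myerightarrow\infty)$ and $(P\myerightarrow\infty)\rightarrow\inv P$, with $\infty$ kept as a primitive nullary diamond rather than the defined term $\inv e$. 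Without this auxiliary backward-looking modality (or an equivalent device, e.g.\ escalating to full Sahlqvist formulas with boxes in a larger signature), your axiom set cannot be completed as described, so the proof needs this addition before the rest of your plan goes through.
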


We give the proof of Theorem~\ref{thm:DLBI_complete} in
Section~\ref{sec:completeness}.

We remark that, although cut-free proofs in $\displayCBI$ enjoy
the subformula property, they do not enjoy the analogous
``substructure property'', and cut-free proof search in our
system is still highly non-deterministic due to the presence of
the display postulates and structural rules, the usage of which
cannot be straightforwardly constrained in general. In
Figure~\ref{fig:displayBI_proof} we give a sample cut-free
proof of the consecution $\seq{\mneg\neg F}{\neg\mneg F}$,
which illustrates the problems.  The applications of
display-equivalence are required in order to apply the logical
rules, as one would expect, but our derivation also makes 
essential use of contraction, weakening and a unitary law. It
is plausible that the explicit use of at least some of these
structural rules can be eliminated by suitable reformulations
of the logical rules.  However, the inherent nondeterminism in
proof search cannot be removed by refining $\displayCBI$
without loss of power since, by soundness and completeness,
provability in $\displayCBI$ is equivalent to validity in
$\CBI$, which has been recently shown \emph{undecidable} by the
first author and
Kanovich~\cite{Brotherston-Kanovich:10}.  This is not
fundamentally surprising, since at least some displayable logics are
known to be undecidable; indeed, one of Belnap's original applications of
display logic was in giving a display calculus for the full
relevant logic $\mathbf{R}$, which was famously proven
undecidable by Urquhart~\cite{Urquhart:84}.  (Unfortunately, we
cannot distinguish decidable display calculi from undecidable
ones in general; the decidability of an arbitrary displayable
logic was \emph{itself} shown undecidable by
Kracht~\cite{Kracht:96}.)

Nonetheless, we argue that there are good reasons to prefer our
$\displayCBI$ over arbitrary complete proof systems (e.g.\
Hilbert systems) without cut-elimination. Display calculi
inherit the main virtues of traditional Gentzen systems: they
distinguish structural principles from logical ones, and make
explicit the considerable proof burden that exists at the
meta-level, but nevertheless retain a theoretically very
elegant and symmetric presentation. Furthermore, as a result of
the subformula property one has in display calculi what might
be called a property of ``finite choice'' for proof search: for
any consecution there are only finitely many ways of applying
any rule to it in a backwards fashion\footnote{In fact, this is
not quite true as it stands because for any consecution there
are infinitely many consecutions that are display-equivalent to
it, obtained by ``stacking'' occurrences of $\ainv$ and
$\minv$. However, by identifying structures such as $\ainv\ainv
X$ and $X$, one obtains only finitely many display-equivalent
consecutions. See e.g.~\cite{Restall:99}.}.

\section{Completeness of $\displayCBI$}
\label{sec:completeness}

In this section we prove completeness of our display calculus
$\displayCBI$ with respect to validity in $\CBI$-models.  As in
the case of the analogous result for $\BBI$
in~\cite{Calcagno-Gardner-Zarfaty:07}, our result hinges on a
general completeness theorem for modal logic due to Sahlqvist.
However, we also require an extra layer of translation between
Hilbert-style proofs and proofs in $\displayCBI$.

Our proof is divided into three main parts.  First, in
subsection~\ref{subsec:completeness Sahlqvist}, we reinvent
$\CBI$ as a modal logic by defining a class of standard modal
frames, with associated modalities corresponding to the
standard $\CBI$-model operations, that satisfy a certain
set of modal logic axioms. By appealing to Sahlqvist's
completeness theorem, we obtain a complete Hilbert-style proof
theory for this class of frames. It then remains to connect the
modal presentation of $\CBI$ to our standard presentation. In
subsection~\ref{subsec:completeness_models}, we show that the
aforementioned class of modal frames is exactly the class of
$\CBI$-models given by Definition~\ref{defn:CBI_model}. Then,
in subsection~\ref{subsec:completeness_proofs}, we show how to
translate any modal logic proof into a $\displayCBI$ proof.
Thus we obtain the $\displayCBI$-provability of any valid
consecution.

\subsection{$\CBI$ as a modal logic}
\label{subsec:completeness Sahlqvist}
\renewcommand{\thetheorem}{\thesubsection.\arabic{theorem}}
\setcounter{theorem}{0}

In this subsection we define the semantics of a modal logic
corresponding to $\CBI$, and obtain a complete proof theory
with respect to this semantics, all using standard modal
techniques (see e.g.~\cite{Blackburn-deRijke-Venema:01}).

We first define $\ML_\CBI$ \emph{frames}, which are standard
modal frames with associated modalities corresponding to the
$\CBI$-model operations in Definition~\ref{defn:CBI_model}.

\begin{defn}[Modal logic frames]\label{def:mbiprem}
An \emph{$\ML_\CBI$ frame} is a tuple $\langle
R,\circ,\myerightarrow,e,-,\infty \rangle$, where \mbox{$\circ
: R \times R \rightarrow \pow{R}$}, \mbox{$\myerightarrow : \pow{R}
\times \pow{R} \rightarrow \pow{R}$},
 $e\subseteq R$, \mbox{$\inv : R \rightarrow \pow{R}$},
and $\infty \subseteq R$.  We extend $\circ$ 
to $\pow{R} \times \pow{R} \rightarrow
\pow{R}$, and $\inv$ to $\pow{R} \rightarrow \pow{R}$, in the
same pointwise manner as in Definition~\ref{defn:CBI_model}.
If $e$ is a singleton set then the frame is said to be {\em
unitary}.
\end{defn}

\begin{defn}[Modal logic formulas]
Modal logic formulas $A$ are defined by: 
\[
A ::= P \mid \true \mid \false \mid \neg A \mid A \wedge A \mid A \vee A \mid
A \rightarrow A \mid e \mid \infty \mid \inv A \mid A \circ A \mid A \myerightarrow A
\]
where $P$ ranges over $\vars$.   We remark that we read $e,
\infty,\inv, \circ, \myerightarrow$ as \emph{modalities} (with
the obvious arities). We regard $\rightarrow$ as having weaker
precedence than these modalities, and use parentheses to
disambiguate where necessary.
\end{defn}

The satisfaction relation for modal logic formulas in
$\ML_\CBI$ frames is defined exactly as in
Definition~\ref{defn:CBI_satisfaction} for the additive
connectives, and the modalities are given a ``diamond''
possibility interpretation:
\[\begin{array}{rcl}
\sat{r}{e} & \Leftrightarrow & r\in e\\
\sat{r}{\infty} & \Leftrightarrow & r\in \infty\\
\sat{r}{\inv A} & \Leftrightarrow & \exists r'\in R.\ r \in \inv(r') \mbox{ and } \sat{r'}{A}\\
\sat{r}{A_1 \circ A_2} & \Leftrightarrow & \exists r_1,r_2 \in R.\ r \in r_1 \circ r_2 \mbox{ and } \sat{r_1}{A_1} \mbox{ and } \sat{r_2}{A_2} \\
\sat{r}{A_1 \myerightarrow A_2} & \Leftrightarrow & \exists r_1,r_2 \in R.\ r \in r_1 \myerightarrow r_2 \mbox{ and } \sat{r_1}{A_1} \mbox{ and } \sat{r_2}{A_2} \\
\end{array}\]
We remark that the $\myerightarrow$ modality --- which does not
correspond directly to a $\CBI$-model operation but should be
read informally as $\neg(A_1 \wand \neg A_2)$ --- will be
helpful later in giving a modal axiomatisation of
$\CBI$-models; see Defn.~\ref{defn:ML_axioms}.  We could
alternatively employ a modality corresponding directly to
$\wand$, but it is much more technically convenient to work
exclusively with ``diamond'' modalities.

Given any set $\mathcal{A}$ of modal logic axioms, we define
\emph{$\mathcal{A}$-models} to be those $\ML_\CBI$ frames in
which every axiom in $\mathcal{A}$ holds. The standard modal
logic proof theory corresponding to the class of
$\mathcal{A}$-models is given by
the following definition
(cf.~\cite{Blackburn-deRijke-Venema:01}).

\begin{defn}[Modal logic proof theory]
\label{defn:ML_proof_theory} The modal logic proof theory
generated by a set $\mathcal{A}$ of modal logic axioms, denoted
by $\mathrm{L}\mathcal{A}$, consists of some fixed finite
axiomatisation of 
propositional classical logic, extended with the following
axioms and proof rules:

\[\begin{array}{ll}
(\mathcal{A}): & {A}  \hspace{2cm} \mbox{ for each $A \in \mathcal{A}$} \\
\invfalse: & {\inv\false \implies \false} \\
\circfalse: & {P \circ \false \implies \false} \\
\lollyfalse: & {(\false \myerightarrow P) \vee (P \myerightarrow \false) \implies \false} \\
\invdisj: & {\inv(P \vee Q) \leftrightarrow \inv P \vee \inv Q} \\
\circdisj: & {(P \vee Q) \circ R \leftrightarrow (P \circ R) \vee (Q \circ R)} \\
\lollydisjl: & {(P \vee Q) \myerightarrow R \leftrightarrow
(P \myerightarrow R) \vee (Q \myerightarrow R)} \\
\lollydisjr: & {P \myerightarrow (Q \vee R) \leftrightarrow
(P \myerightarrow Q) \vee (P \myerightarrow R)} \\ \\
\end{array}\]

\[\begin{array}{ccc}
\begin{prooftree}
{A \rightarrow B} \quad {A}
\justifies
{B} \using \modusponens
\end{prooftree}
&
\begin{prooftree}
{A}
\justifies A[B/P] \using \subst
\end{prooftree}
&
\begin{prooftree}
{A \implies B}
\justifies {(\inv A) \implies (\inv B)}
\using \mbox{ ($\diamond\inv$)}
\end{prooftree}
\\ & \\
\begin{prooftree}
{A \implies B}
\justifies {(A \circ C) \implies (B \circ C)}
\using \mbox{ ($\diamond\circ$)}
\end{prooftree}
&
\begin{prooftree}
{A \implies B}
\justifies {(C \myerightarrow A) \implies (C \myerightarrow B)}
\using \mbox{ ($\diamond\myerightarrow$L)}
\end{prooftree}
&
\begin{prooftree}
{A \implies B}
\justifies {(A \myerightarrow C) \implies (B \myerightarrow C)}
\using \mbox{ ($\diamond\myerightarrow$L)}
\end{prooftree}
\\ \\
\end{array}\]
where $A,B,C$ range over modal logic formulas, $P,Q,R$ are
propositional variables, and $A \leftrightarrow B$ is as usual
an abbreviation for $(A \rightarrow B) \wedge (B \rightarrow
A)$.
\end{defn}

Note that the axioms and rules for the modalities which are
added to $\mathcal{A}$ by Definition~\ref{defn:ML_proof_theory}
are just the axioms and rules of the standard modal logic
$\mathrm{K}$, instantiated to each of our ``diamond''-type
modalities $e$, $\infty$, $\inv$, $\circ$ and $\myerightarrow$. We emphasise that, by definition, the latter are diamond modalities rather than logical connectives.  In particular, the modality `$\inv$' is not a negation ($\inv A$ should be understood informally as the $\CBI$-formula $\mneg\neg A$), and is monotonic rather than antitonic with respect to entailment, as embodied by the rule \mbox{($\diamond\inv$)}.  Similarly, the $\myerightarrow$ modality is monotonic in its left-hand argument because it is a diamond modality and not an implication.

We now state a sufficient condition, due to Sahlqvist, for
completeness of $\mathrm{L}\mathcal{A}$ to hold with respect to
the class of $\mathcal{A}$-models.

\begin{defn}[Very simple Sahlqvist formulas]
\label{defn:Sahlqvist_formulas} A \emph{very simple Sahlqvist
antecedent} $S$ is a formula given by the grammar:
\[
S ::= \true \mid \false \mid P \mid S \wedge S \mid e \mid \infty \mid \inv S
\mid S \circ S \mid S \myerightarrow S
\]
where $P$ ranges over $\vars$. A \emph{very simple Sahlqvist
formula} is a modal logic formula of the form $S \implies A^+$,
where $S$ is a very simple Sahlqvist antecedent and $A^+$ is a
modal logic formula which is \emph{positive} in that no
propositional variable $P$ in $A^+$ may occur inside the scope
of an odd number of occurrences of $\neg$.
\end{defn}

\begin{theorem}[Sahlqvist~\cite{Blackburn-deRijke-Venema:01}]
\label{thm:Sahlqvist} Let $\mathcal{A}$ be a set of modal
logic axioms consisting only of very simple Sahlqvist formulas. Then
the modal logic proof theory $\mathrm{L}\mathcal{A}$ is
complete with respect to the class of $\mathcal{A}$-models.
That is, if a modal logic formula $F$ is valid with respect to
$\mathcal{A}$-models then it is provable in
$\mathrm{L}\mathcal{A}$.
\end{theorem}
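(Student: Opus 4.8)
The plan is to run the standard canonical-model argument for normal (polyadic) modal logics, since this is Sahlqvist's classical theorem specialised to the similarity type of $\ML_\CBI$ frames. First I would form the canonical model $\mathfrak{M}^c$ for $\mathrm{L}\mathcal{A}$: its points are the $\mathrm{L}\mathcal{A}$-maximal consistent sets of modal logic formulas; the nullary modalities $e$ and $\infty$ are interpreted as the sets of MCSs containing the corresponding constant; the unary modality $\inv$ and the binary modalities $\circ$ and $\myerightarrow$ are interpreted by their canonical accessibility relations (``$R_\nabla$ holds among a tuple of MCSs iff every $\nabla$-consequent of the consequent sets belongs to the antecedent set''); and a propositional variable $P$ is sent to $\{w : P \in w\}$. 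The monotonicity rules $(\diamond\inv)$, $(\diamond\circ)$, $(\diamond\myerightarrow)$ together with the $\false$- and $\vee$-distribution axioms listed in Definition~\ref{defn:ML_proof_theory} are exactly what make every modality a normal diamond in each argument, so the Existence/Truth Lemma goes through by the routine induction: $(\mathfrak{M}^c, w) \models \psi$ iff $\psi \in w$, for every MCS $w$ and every formula $\psi$. Lindenbaum's lemma then gives: if $F$ is not provable in $\mathrm{L}\mathcal{A}$, then $\neg F$ extends to some MCS $w_0$, so $F$ is refuted at $w_0$ in $\mathfrak{M}^c$. Hence it only remains to check that the underlying frame $\mathfrak{F}^c$ is an $\mathcal{A}$-model.

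The crux is therefore the \emph{canonicity} of each axiom $A \in \mathcal{A}$, i.e.\ that $\mathfrak{F}^c$ validates $A$ under \emph{every} valuation, not just the canonical one. Write $A = S \implies B^+$ with $S$ a very simple Sahlqvist antecedent and $B^+$ positive. Given any valuation $V$ on $\mathfrak{F}^c$ and any point $w$ with $(\mathfrak{F}^c, V), w \models S$, I would exploit that $S$ is built only from $\true$, $\false$, propositional variables, $\wedge$ and the diamond modalities: its truth at $w$ is witnessed by a finite tree of points $w = v_0, v_1, \dots$ linked by the canonical relations, with each variable occurrence in $S$ forced (under $V$) at the corresponding leaf. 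Fixing one such witnessing tree, let $V_0$ be the \emph{minimal valuation} making each variable true at exactly those leaves. Then $V_0 \subseteq V$ pointwise, $(\mathfrak{F}^c, V_0), w \models S$ still holds, and since $B^+$ is positive (hence monotone in the valuation) it suffices to show $(\mathfrak{F}^c, V_0), w \models B^+$.

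Establishing this last point is where the real content lies, and it is exactly the ingredient I would import from Blackburn--de Rijke--Venema rather than reprove: each $V_0(P)$ is a finite union of singleton MCSs, and although a single MCS is not modally definable in general, the witnessing tree for $w \models S$ is itself produced by formulas true at $w$, so the effect of $V_0$ on $B^+$ can be simulated by substituting suitable modal formulas for the variables of $A$ and using that $A$, being an axiom, is an $\mathrm{L}\mathcal{A}$-theorem closed under the substitution rule --- equivalently, that the canonical general frame is descriptive, hence compact, so that the minimal valuations extracted from the first-order Sahlqvist correspondent of $A$ are realisable by admissible sets. I expect this transfer --- from the purely mechanical correspondence and minimal-valuation bookkeeping for the modalities $e, \infty, \inv, \circ, \myerightarrow$, which works on arbitrary frames, to its realisation on the canonical frame --- to be the main obstacle; it is the genuinely delicate step separating mere frame-definability from canonicity, and the reason the statement is cited here rather than proved in full. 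Once canonicity is secured, $\mathfrak{F}^c$ validates all of $\mathcal{A}$ and so is an $\mathcal{A}$-model, and the refutation of any non-theorem $F$ at $w_0 \in \mathfrak{F}^c$ witnesses completeness of $\mathrm{L}\mathcal{A}$ with respect to the whole class of $\mathcal{A}$-models.
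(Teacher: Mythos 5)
The paper offers no proof of this theorem at all --- it is stated purely as a citation to Blackburn--de Rijke--Venema --- so there is nothing internal to compare against; your outline is an accurate rendering of the standard canonical-model argument behind that cited result, and you correctly isolate canonicity of very simple Sahlqvist axioms (minimal valuations, monotonicity of the positive consequent, and the descriptive-frame/compactness step) as the one ingredient that must be imported from the reference rather than reproved. Nothing in your sketch conflicts with the paper's setup of $\ML_\CBI$ frames or of the proof theory $\mathrm{L}\mathcal{A}$, so your proposal is consistent with (and considerably more detailed than) the paper's treatment.
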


\begin{defn}[Modal logic axioms for $\CBI$]
\label{defn:ML_axioms} The axiom set $\AX_{\CBI}$ consists of
the following modal logic formulas, where $P,Q,R$ are
propositional variables:
\begin{center}
\parbox[t]{7cm}{\begin{enumerate}
\item $e \circ P \rightarrow P$
\item $P \rightarrow e \circ P$
\item $P\circ Q \rightarrow Q \circ P$
\item $(P \circ Q) \circ R  \rightarrow P \circ  (Q \circ
    R)$
\item $P \circ (Q \circ R) \rightarrow (P \circ Q) \circ R$
\item\label{ax:wand1} $Q \wedge (R \circ P) \rightarrow (R
    \wedge (P\myerightarrow Q)) \circ \true$
\end{enumerate}}
\parbox[t]{7cm}{\begin{enumerate} \setcounter{enumi}{6}
\item\label{ax:wand2} $R \wedge (P \myerightarrow Q)
    \rightarrow (\true \myerightarrow (Q \wedge (R \circ P)))$
\item $\inv\inv P \rightarrow P$
\item $P \rightarrow \inv\inv P$
\item $\inv P \rightarrow (P \myerightarrow \infty)$
\item $(P \myerightarrow \infty) \rightarrow \inv P$
\end{enumerate}}
\end{center}
\end{defn}

By inspection we can observe that the $\AX_{\CBI}$ axioms (cf.\
Definition~\ref{defn:ML_axioms}) are all very simple Sahlqvist
formulas, whence we obtain from Theorem~\ref{thm:Sahlqvist}:

\begin{corollary}
\label{cor:Sahlqvist} If a modal logic formula $F$ is valid
with respect to $\AX_{\CBI}$-models then it is
provable in $\LAX_{\CBI}$.
\end{corollary}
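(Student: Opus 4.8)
The plan is to invoke Sahlqvist's completeness theorem (Theorem~\ref{thm:Sahlqvist}) essentially off the shelf. That theorem already delivers completeness of $\mathrm{L}\mathcal{A}$ with respect to the class of $\mathcal{A}$-models whenever $\mathcal{A}$ consists only of very simple Sahlqvist formulas, so the sole thing requiring verification is that every axiom in $\AX_{\CBI}$ (Definition~\ref{defn:ML_axioms}) has the syntactic shape prescribed by Definition~\ref{defn:Sahlqvist_formulas} --- namely $S \rightarrow A^+$ with $S$ a very simple Sahlqvist antecedent and $A^+$ positive.

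I would carry this out by going through the eleven axioms and checking two things for each: that the left-hand side is generated by the grammar $S ::= \true \mid \false \mid P \mid S \wedge S \mid e \mid \infty \mid \inv S \mid S \circ S \mid S \myerightarrow S$, and that the right-hand side contains no occurrence of $\neg$ (so that it is trivially positive). For axioms (1)--(5) and (8)--(11) this is immediate: the antecedents are expressions such as $e \circ P$, $P$, $P \circ Q$, $(P \circ Q) \circ R$, $\inv\inv P$, and $P \myerightarrow \infty$, each plainly a very simple Sahlqvist antecedent, while the consequents $P$, $e \circ P$, $Q \circ P$, $\inv P$, $P \myerightarrow \infty$, etc.\ are $\neg$-free. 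The only axioms that merit a second glance are (\ref{ax:wand1}) and (\ref{ax:wand2}): their antecedents $Q \wedge (R \circ P)$ and $R \wedge (P \myerightarrow Q)$ are conjunctions of very simple Sahlqvist antecedents, hence themselves very simple Sahlqvist antecedents, and their consequents $(R \wedge (P \myerightarrow Q)) \circ \true$ and $\true \myerightarrow (Q \wedge (R \circ P))$ again contain no $\neg$ and are therefore positive. A subtle point worth recording is that the operator `$\inv$', although read informally as $\mneg\neg(\cdot)$, is here a genuine diamond modality and not the connective $\neg$, so its appearances inside antecedents (as in axioms (8)--(11)) do not violate the Sahlqvist condition.

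Once this check is complete the corollary is immediate: by Theorem~\ref{thm:Sahlqvist} the proof theory $\LAX_{\CBI}$ is complete with respect to the class of $\AX_{\CBI}$-models, so any modal logic formula $F$ that is valid with respect to $\AX_{\CBI}$-models is provable in $\LAX_{\CBI}$. There is really no mathematical obstacle here; the only care needed is bookkeeping --- matching each axiom against the grammar of Definition~\ref{defn:Sahlqvist_formulas}, and remembering that the positivity requirement constrains only the occurrences of $\neg$, which are absent from all the consequents in question.
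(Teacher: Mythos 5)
Your proposal is correct and matches the paper's argument exactly: the paper likewise just observes by inspection that all eleven $\AX_{\CBI}$ axioms are very simple Sahlqvist formulas and then invokes Theorem~\ref{thm:Sahlqvist}. Your explicit case-by-case check (and the remark that `$\inv$' is a diamond modality rather than a negation, so it does not affect positivity) simply spells out the details the paper leaves implicit.
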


We show that the completeness result transfers to unitary
$\AX_{\CBI}$-models.

\begin{lemma}
Let $M=\mbimodel$ be an $\AX_{\CBI}$ model. Then there exist
unitary $\AX_{\CBI}$-models $M_{x}$ for each $x\in e$ such that
the following hold:
\begin{enumerate}
\item $M$ is the disjoint union of the models $M_{x}$ for
    $x\in e$.
\item A formula $A$ is true in $M$ iff it is true in
    $M_x$ for all $x\in e$.
\end{enumerate}
\end{lemma}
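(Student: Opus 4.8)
The plan is to split $M$ into its ``connected components'', one for each unit. For $x\in e$ I would set
\[
R_{x} \defeq \{\, r\in R \mid r\in x\circ r \,\},
\]
let $M_{x}$ be the tuple obtained by restricting $\circ$, $\myerightarrow$ and $\inv$ of $M$ to $R_{x}$, with unit set $\{x\}$ and ``infinity set'' $\infty\cap R_{x}$, and then carry out three tasks: (i) show the $R_{x}$ partition $R$ and that every operation of $M$ is block diagonal with respect to this partition; (ii) verify each $M_{x}$ is a well-formed unitary $\AX_{\CBI}$-model; (iii) deduce the truth-transfer statement.

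For (i), the first step is to convert each $\AX_{\CBI}$ axiom --- which holds in $M$ under \emph{all} environments --- into a first-order frame condition by instantiating the propositional variables with singleton sets. Axioms~1--5 yield that $\circ$ is commutative and associative and that $x\circ r\subseteq\{r\}$ whenever $x\in e$; hence any two distinct units annihilate ($x\circ x'=\emptyset$ for $x\neq x'$ in $e$, by commutativity and Axiom~1), and then Axiom~2 (every $r$ lies in $x\circ r$ for some $x\in e$) forces every unit to be idempotent, so $x\in R_{x}$. If $s\in R_{x}\cap R_{x'}$ with $x\neq x'$ then $\{s\}=x\circ(x'\circ s)=(x\circ x')\circ s=\emptyset$, using $s\in R_{x}$, $x'\circ s=\{s\}$, associativity and $x\circ x'=\emptyset$ --- a contradiction --- so the $R_{x}$ are pairwise disjoint, and with Axiom~2 they partition $R$. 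A short absorption argument (Axiom~1 plus associativity) shows $t\in a\circ b$ and $a\in R_{x}$ imply $t\in R_{x}$, which with commutativity gives that $\circ$ stays inside a block and annihilates across blocks. Unwinding Axioms~8--11 in the same way --- using the $\myerightarrow$ modality only as an intermediary, since it stands for $\neg(\cdot\wand\neg\cdot)$ --- shows $\inv$ is single-valued and involutive, that $a\myerightarrow b=\{\,r\mid b\in r\circ a\,\}$, and that $\inv s$ is the unique $r$ with $(r\circ s)\cap\infty\neq\emptyset$; from these I can extract a cancellation law ($z\in x\circ y$ implies $\inv x\in y\circ\inv z$) exactly as in Proposition~\ref{prop:cbimodel_properties}, and use it to prove $\inv s\in R_{x}$ whenever $s\in R_{x}$ and that $\infty\cap R_{x}=\{\inv x\}$ is a singleton.

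The remaining tasks are then routine. Each $M_{x}$ is a well-formed $\ML_\CBI$ frame because all its operations land in $R_{x}$, and it is unitary because $e_{M_{x}}=\{x\}$. The key observation for (ii) and (iii) alike is that the truth value of any modal formula at a point $r\in R_{x}$ is the same in $M_{x}$ as in $M$: the clauses for $\circ$, $\inv$ and $\myerightarrow$ quantify only over elements reachable from $r$ via these operations, which never leave $R_{x}$, while the constant modalities restrict correctly since $e\cap R_{x}=\{x\}$ and $\infty\cap R_{x}=\{\inv x\}$. This is proved by induction on formulas, noting that an $M$-environment restricted to $R_{x}$ is an $M_{x}$-environment and that every $M_{x}$-environment so arises. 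Consequently each $\AX_{\CBI}$ axiom, being valid in $M$, is valid in $M_{x}$, so $M_{x}$ is a unitary $\AX_{\CBI}$-model; moreover $M$ is literally the disjoint union $\biguplus_{x\in e}M_{x}$ because the $R_{x}$ partition $R$, the operations of $M$ are block diagonal, and $e=\biguplus_{x}\{x\}$, $\infty=\biguplus_{x}\{\inv x\}$; and a formula $A$ is true in $M$ iff it is true at every point of every block iff it is true in every $M_{x}$.

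The main obstacle is the frame-condition extraction of task (i), in particular deriving idempotence of units and the cancellation law from the axioms: the $\myerightarrow$ modality must be unwound carefully as a stand-in for $\neg(\cdot\wand\neg\cdot)$, and one has to keep track of exactly which singleton-environment instances of each axiom are needed. Everything downstream --- block-diagonality, the inductive truth transfer, and assembling the disjoint union --- is bookkeeping.
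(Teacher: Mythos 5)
Your proposal is correct and follows essentially the same route as the paper: restrict $M$ to the component $R_x$ of each unit $x$ (your $\{r \mid r \in x\circ r\}$ coincides with the paper's $\{r \mid \{r\}\circ\{x\}\neq\emptyset\}$ given axioms 1--3), show the blocks partition $R$ and all operations are block-diagonal, and invoke the standard invariance of modal truth under disjoint unions. The paper's own proof is only a three-sentence sketch, so your write-up simply supplies the frame-condition details (unit annihilation, absorption, the behaviour of $\inv$, $\infty$ and $\myerightarrow$) that it leaves implicit, and these all check out.
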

\begin{proof}
For each $x\in e$, the model $M_x$ is defined by restricting
$M$ to $R_x \defeq \{r\in R \mid \{r\} \circ \{x\} \neq
\emptyset\}$. Disjointness of models follows directly from the
fact that $\langle R,\circ, e\rangle$ obeys the first five
axioms of $\AX_{\CBI}$, which characterize relational
commutative monoids. Finally, $(1) \Rightarrow (2)$ is a
general result which holds in modal
logic~\cite{Blackburn-deRijke-Venema:01}.
\end{proof}

\begin{corollary}
\label{cor:completeness} If a modal logic formula $F$ is valid
with respect to \emph{unitary} $\AX_{\CBI}$-models then it is
provable in $\LAX_{\CBI}$.
\end{corollary}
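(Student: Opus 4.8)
The plan is to deduce this immediately from Corollary~\ref{cor:Sahlqvist} by reducing validity with respect to \emph{unitary} $\AX_{\CBI}$-models to validity with respect to \emph{all} $\AX_{\CBI}$-models. Concretely, suppose $F$ is a modal logic formula valid in every unitary $\AX_{\CBI}$-model; I will argue that $F$ is then valid in every $\AX_{\CBI}$-model whatsoever, whereupon the result follows at once by Corollary~\ref{cor:Sahlqvist}.

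So let $M=\mbimodel$ be an arbitrary $\AX_{\CBI}$-model. By the decomposition lemma established immediately above, there are unitary $\AX_{\CBI}$-models $M_x$, one for each $x \in e$, whose disjoint union is $M$, and such that $F$ is true in $M$ if and only if $F$ is true in $M_x$ for every $x \in e$. Since each $M_x$ is unitary, the hypothesis gives that $F$ is true in $M_x$ for all $x \in e$; hence $F$ is true in $M$. As $M$ was arbitrary, $F$ is valid with respect to $\AX_{\CBI}$-models, and Corollary~\ref{cor:Sahlqvist} then yields provability of $F$ in $\LAX_{\CBI}$.

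There is essentially no obstacle at this stage: the genuine content — disjointness of the component submodels, which rests on the first five $\AX_{\CBI}$ axioms characterising relational commutative monoids, together with the general modal fact that truth in a disjoint union coincides with truth in every component — was already discharged in the decomposition lemma, so what remains here is just the bookkeeping sketched above. The one point worth a moment's thought is the degenerate case $e = \emptyset$, where the decomposition is empty; but axiom~2 of $\AX_{\CBI}$ (namely $P \rightarrow e \circ P$) forces $R$ itself to be empty in that case, so every formula, $F$ in particular, is vacuously true in $M$, and the argument goes through unchanged.
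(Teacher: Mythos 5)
Your proof is correct and is precisely the argument the paper intends: the corollary is stated immediately after the decomposition lemma, which exists exactly to reduce validity over unitary $\AX_{\CBI}$-models to validity over all $\AX_{\CBI}$-models, after which Corollary~\ref{cor:Sahlqvist} applies. Your additional remark on the degenerate case $e = \emptyset$ is a sound (and slightly more careful) observation that the paper leaves implicit.
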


\subsection{$\CBI$-models as modal logic models}
\label{subsec:completeness_models} \setcounter{theorem}{0}

\begin{lemma}\label{lem:cbitombi}
If $\cbimodel$ is a $\CBI$-model then, for all $X,Y,Z\in
\pow{R}$, we have:
\begin{enumerate}
\item $X\circ Y = Y\circ X$ and $X\circ (Y\circ Z) =
    (X\circ Y)\circ Z$ and $\{e\}\circ X = X$
\item $\inv X = X \myerightarrow \infty$
\item $\inv\inv X=X$
\end{enumerate}
where $X\myerightarrow Y \defeq \{z\in R \mid \exists x\in
X,y\in Y.\ y\in x \circ z\}$.
\end{lemma}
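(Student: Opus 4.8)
The plan is to verify the three claims in turn, each directly from the definitions. In claim (2), the element $\infty$ is to be read as the singleton $\{\infty\}$, so that unfolding the definition of $\myerightarrow$ gives $X \myerightarrow \infty = \{z \in R \mid \exists x \in X.\ \infty \in x \circ z\}$.

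Claim (1) is essentially inherited from the underlying $\BBI$-model $\monoid$. Commutativity and associativity of $\circ$ on subsets are, by Definition~\ref{defn:BBI_model}, already part of the definition of a $\BBI$-model (associativity there being understood for the pointwise extension $X \circ Y = \bigcup_{x\in X,\,y\in Y} x \circ y$, and commutativity being immediate from $x\circ y = y\circ x$); and $\{e\}\circ X = \bigcup_{x\in X} e\circ x = \bigcup_{x\in X}\{x\} = X$ follows from the unit law $r\circ e = \{r\}$ together with commutativity.

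For claim (2), the inclusion $\inv X \subseteq X \myerightarrow \infty$ holds because for each $x \in X$ the defining condition of a $\CBI$-model (Definition~\ref{defn:CBI_model}) gives $\infty \in x \circ \inv x$, so $\inv x$ belongs to the right-hand set, witnessed by $x$ itself. Conversely, if $z \in R$ satisfies $\infty \in x \circ z$ for some $x \in X$, then the \emph{uniqueness} part of that same condition forces $z = \inv x$, hence $z \in \inv X$. Claim (3) then follows immediately: $\inv\inv X = \{\inv\inv x \mid x \in X\}$, and $\inv\inv x = x$ for every $x$ by part~\ref{propitem:doubleinv} of Proposition~\ref{prop:cbimodel_properties}, so $\inv\inv X = X$.

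I do not anticipate any genuine obstacle; the argument is routine unfolding of definitions. The two points that call for a little care are the implicit coercion of the element $\infty$ to the singleton $\{\infty\}$ in claim (2), and the fact that the reverse inclusion there relies on the \emph{uniqueness} of involutive duals in a $\CBI$-model, not merely their existence.
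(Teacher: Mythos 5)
Your proposal is correct and follows the same route the paper intends: the paper's proof simply states that the claims ``follow straightforwardly'' from Definition~\ref{defn:CBI_model} and Proposition~\ref{prop:cbimodel_properties}, and your unfolding --- inheriting (1) from the underlying $\BBI$-model, using existence of $\infty \in x \circ \inv x$ for one inclusion of (2) and uniqueness of $\inv x$ for the other, and deriving (3) from $\inv\inv x = x$ --- is exactly the intended routine verification. Your two cautionary remarks (the coercion of $\infty$ to $\{\infty\}$ and the reliance on uniqueness) are well placed but raise no genuine issue.
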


\begin{proof}
The required properties follow straightforwardly from the
properties of $\CBI$-models given by
Definition~\ref{defn:CBI_model} and
Proposition~\ref{prop:cbimodel_properties}.
\end{proof}

\begin{lemma}\label{lem:mbitocbi}
Let $\mbimodel$ be an unitary $\AX_{\CBI}$-model (so that $e$
is a singleton set). Then $\infty$ is a singleton set, and
$\inv x$ is a singleton set for any $x \in R$. Moreover,
$\cbimodel$ is a $\CBI$-model with the modalities
$e,\inv,\infty$ regarded as having the appropriate types.
\end{lemma}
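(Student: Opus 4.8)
The plan is to read off, from each of the eleven $\AX_{\CBI}$ axioms (Definition~\ref{defn:ML_axioms}), the first-order condition it imposes on the unitary frame $\mbimodel$; since every axiom is a very simple Sahlqvist formula such a correspondent exists, but in practice it suffices to evaluate each axiom under valuations that assign singletons $\{x\}$ to propositional variables. First I would observe that axioms (1)--(5), read in this way, say exactly that $\circ$ (pointwise extended) is commutative and associative and that $e\circ r=\{r\}$ for all $r$ --- here writing $e$ also for the unique element of the singleton $e\subseteq R$. Thus $\monoid$, with $e$ read as an element, is a $\BBI$-model.

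Next I would extract the conditions governing $\myerightarrow$ and $\inv$. Evaluating axioms (6) and (7) at singleton valuations yields the two inclusions which together give $\{x\}\myerightarrow\{y\}=\{z\in R\mid y\in x\circ z\}$ for all $x,y\in R$; that is, $\myerightarrow$ restricted to singletons coincides with the residual-like operation of Lemma~\ref{lem:cbitombi}. Evaluating axioms (10) and (11) gives $\inv(x)=\bigcup_{w\in\infty}\big(\{x\}\myerightarrow\{w\}\big)$, which combined with the previous identity yields the ``multi-unit'' characterisation
\[
\inv(x)=\{z\in R\mid (x\circ z)\cap\infty\neq\emptyset\}\qquad\text{for all }x\in R.
\]
Since $e\circ z=\{z\}$, specialising to $x=e$ gives $\inv(e)=\infty$. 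Finally, evaluating axioms (8) and (9) at singleton valuations gives $\inv(\inv(x))=\{x\}$ for every $x$, where the outer $\inv$ is its pointwise extension to $\pow R$.

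The crux is then to deduce that $\inv(x)$ is a singleton for every $x$. Write $z\sim x$ for $z\in\inv(x)$; by commutativity of $\circ$, the displayed characterisation of $\inv$ makes $\sim$ symmetric. The identity $\inv(\inv(x))=\{x\}$ unpacks to: for every $x$, the set $\{a\mid\exists b.\ a\sim b\text{ and }b\sim x\}$ equals $\{x\}$. Taking $a=x$ produces a $b$ with $b\sim x$, so $\inv(x)\neq\emptyset$; and if $b_1,b_2\in\inv(x)$ then $b_1\sim x$ and, by symmetry, $x\sim b_2$, so $b_1$ lies in $\{a\mid\exists b.\ a\sim b\text{ and }b\sim b_2\}=\{b_2\}$, forcing $b_1=b_2$. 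Hence each $\inv(x)$, and in particular $\infty=\inv(e)$, is a singleton. Reinterpreting $\inv$ as a function $R\to R$ and $\infty$ as an element of $R$, the displayed characterisation now reads ``$\inv x$ is the unique element of $R$ with $\infty\in x\circ\inv x$'', so $\cbimodel$ satisfies Definition~\ref{defn:CBI_model} and is a $\CBI$-model.

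I expect the main obstacle to be purely one of careful bookkeeping: correctly reading off the frame conditions, particularly for axioms (6) and (7), which involve the auxiliary diamond modality $\myerightarrow$ that is primitive (not derived from $\circ$) in an $\ML_\CBI$ frame and must be shown to be pinned down on singletons by exactly these two axioms. Once the three frame conditions --- on $\circ$, on $\myerightarrow$ restricted to singletons, and on $\inv$ --- are in hand, the remainder is elementary; the only mildly delicate point is the symmetry-plus-involutivity argument isolating each $\inv(x)$ as a singleton, sketched above.
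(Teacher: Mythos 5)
Your proposal is correct and follows essentially the same route as the paper's proof: extract the frame correspondents of the $\AX_{\CBI}$ axioms by evaluating them at singleton valuations, use $\inv\inv x=\{x\}$ (axioms (8) and (9)) to force each $\inv x$ to be a nonempty singleton, and derive $\infty=\inv e$ to conclude that $\infty$ is a singleton, after which the $\CBI$-model condition is immediate from the characterisation of $\inv$ via $\circ$ and $\infty$. The only divergence is in the uniqueness half of the singleton argument: the paper gets it from $\inv\inv x=\{x\}$ alone (if $x_1\neq x_2$ both lay in $\inv x$, then $\inv x_1$ and $\inv x_2$ would be distinct nonempty subsets of $\inv\inv x$, giving it more than one element), whereas you route it through symmetry of the relation $z\in\inv(x)$, which requires first establishing $\inv(x)=\{z\mid(x\circ z)\cap\infty\neq\emptyset\}$ from axioms (6), (7), (10) and (11) --- a characterisation you need anyway for the final step, so nothing is lost.
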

\begin{proof}
We first show that $-x$ is a singleton by contradiction, using the fact
that $\inv\inv x=\{x\}$ must hold for any set $x$, as a consequence of axioms (8) and (9).
 If $\inv x=\emptyset$ then
    $\inv\inv x= \bigcup_{y \in \inv x} \inv y =
    \emptyset$, which contradicts $\inv\inv x=\{x\}$. If
    $x_1,x_2 \in \inv x$ with $x_1 \neq x_2$, then $\inv
    x_1 \cup \inv x_2 \subseteq \inv\,\inv x$. Also, $\inv
    x_1 \not= \inv x_2$, otherwise we would have
    $\{x_1\}=\inv\inv x_1=\inv\inv x_2=\{x_2\}$ and
    thus $x_1 = x_2$. Since $\inv x_1$ and $\inv x_2$ have
    cardinality $>0$ (see above), $\inv\inv x$ must have
    cardinality $>1$, which contradicts $\inv\inv x=\{x\}$.

We prove that $\infty$ is a singleton by deriving
$\infty = \inv e$. Using the axioms in Definition~\ref{defn:ML_axioms},
we will show that $e \myerightarrow X = X$ must hold for any set $X$.
This fact, together with axioms (10) and (11) instantiated with $P=e$
gives the desired consequence $\infty = \inv e$.

It remains to show $e \myerightarrow X = X$.
Axioms (6) and (7) give the two directions of:
\[q \in r \circ p \textrm{ iff } r \in p \myerightarrow q\]
for any $p,q,r \in R$, and axioms (1), (2) and (3) give, for any $x\in R$:
\[ x \circ e = \{x\}\]
Therefore we have that, for any $x\in R$:
\[ x \in e \myerightarrow X \textrm{ iff } (\exists x'\in X.\,x \in e \myerightarrow x')
\textrm{ iff } (\exists x'\in X.\, x' \in x \circ e) \textrm{ iff } x \circ e \subseteq X
\textrm{ iff } x \in X. \]
\end{proof}

\begin{defn}[Embedding of $\CBI$-models in $\AX_{\CBI}$-models]
Let $M = \cbimodel$ be a $\CBI$-model. The tuple $\embed{M} =
\mbimodel$ is obtained by regarding 
$e,\inv,\infty$ as having the same types as in
Definition~\ref{def:mbiprem} in the obvious way, and by
defining the modality $\myerightarrow : \pow{R} \times \pow{R} \rightarrow
\pow{R}$ by $X\myerightarrow Y
\defeq \{z\in R \mid \exists x\in X,y\in Y.\, y\in x \circ
z\}$.
\end{defn}

\begin{lemma}\label{lem:emb_models}
If $M$ is a $\CBI$-model then $\embed{M}$ is a unitary
$\AX_{\CBI}$-model. Moreover, the function $\embed{-}$ is a
bijection between $\CBI$-models and unitary
$\AX_{\CBI}$-models.
\end{lemma}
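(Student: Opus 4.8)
The plan is to prove the lemma in two stages: first that $\embed{M}$ is indeed a unitary $\AX_{\CBI}$-model, and then that $\embed{-}$ is a bijection onto this class.

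For the first stage, let $M = \cbimodel$. Since in $\embed{M}$ the component $e$ is the singleton $\{e\}$, the frame $\embed{M}$ is unitary, so it only remains to check that each of the eleven axioms of Definition~\ref{defn:ML_axioms} holds in $\embed{M}$. Fixing an $\embed{M}$-environment $\rho$ and unfolding the satisfaction clauses, axioms (1)--(5) follow immediately from the fact that $\langle R, \circ, e\rangle$ is a commutative monoid with $\{e\} \circ X = X$ (Definition~\ref{defn:BBI_model} and Lemma~\ref{lem:cbitombi}(1)). Since $\inv$ in $\embed{M}$ is the function $x \mapsto \{\inv x\}$, one reads off $\sat{r}{\inv A}$ iff $\sat{\inv r}{A}$ using $\inv\inv r = r$ (Proposition~\ref{prop:cbimodel_properties}(\ref{propitem:doubleinv})); axioms (8) and (9) are then immediate. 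For axioms (10) and (11) one unwinds $\sat{r}{P \myerightarrow \infty}$ to ``$\exists r_1 \in R.\ \sat{r_1}{P}$ and $\infty \in r_1 \circ r$'' (using that $\infty$ is the singleton $\{\infty\}$ and the definition of $\myerightarrow$ in $\embed{M}$), and then invokes $\infty \in r \circ \inv r$ together with uniqueness of the dual to identify $r_1$ with $\inv r$; equivalently, both directions are just the equation $\inv X = X \myerightarrow \infty$ of Lemma~\ref{lem:cbitombi}(2) read at the level of elements. The substantive cases are axioms (\ref{ax:wand1}) and (\ref{ax:wand2}): writing out $\sat{s}{P \myerightarrow Q}$ as ``$\exists t_1, t_2 \in R.\ \sat{t_1}{P}$, $\sat{t_2}{Q}$ and $t_2 \in t_1 \circ s$'', axiom (\ref{ax:wand1}) reduces to the observation that if $r \in r_1 \circ r_2$ with $\sat{r_1}{R}$, $\sat{r_2}{P}$ and $\sat{r}{Q}$, then the same decomposition $r \in r_1 \circ r_2$ already witnesses $\sat{r}{(R \wedge (P \myerightarrow Q)) \circ \true}$, because $r \in r_2 \circ r_1$ (by commutativity) shows $\sat{r_1}{P \myerightarrow Q}$; axiom (\ref{ax:wand2}) is the symmetric statement. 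I expect these two residuation axioms to be the main (though still routine) obstacle, since they require carefully tracking the nested modal quantifiers.

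For the second stage, note first that $\embed{-}$ has an obvious left inverse, namely deleting the modality $\myerightarrow$ and regarding $e$, $\inv$, $\infty$ as an element, a function $R \to R$ and an element respectively; this returns $M$ from $\embed{M}$, so $\embed{-}$ is injective. For surjectivity onto unitary $\AX_{\CBI}$-models, let $N = \mbimodel$ be an arbitrary unitary $\AX_{\CBI}$-model. By Lemma~\ref{lem:mbitocbi}, $\infty$ and every $\inv x$ are singletons, and $M := \cbimodel$ (with the retyped operations) is a $\CBI$-model; it therefore suffices to show $\embed{M} = N$, i.e.\ that the $\myerightarrow$ of $N$ coincides with the operation $X \myerightarrow Y \defeq \{z \in R \mid \exists x \in X, y \in Y.\ y \in x \circ z\}$ that $\embed{-}$ derives from $\circ$. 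The proof of Lemma~\ref{lem:mbitocbi} already established, from axioms (\ref{ax:wand1}) and (\ref{ax:wand2}), that in $N$ one has $q \in r \circ p$ iff $r \in p \myerightarrow q$ for all $p, q, r \in R$; since in a standard modal frame $\myerightarrow$ is additive in each argument and hence determined by its values on singletons (as already used in the proof of Lemma~\ref{lem:mbitocbi}), this identity gives $z \in X \myerightarrow Y$ iff $\exists x \in X, y \in Y.\ z \in x \myerightarrow y$ iff $\exists x \in X, y \in Y.\ y \in x \circ z$, which is exactly the defining clause of $\embed{M}$. Hence $\embed{M} = N$, and $\embed{-}$ is a bijection as claimed.
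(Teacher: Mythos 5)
Your proposal is correct and follows essentially the same route as the paper's proof: verify the axioms of $\AX_{\CBI}$ in $\embed{M}$ using Lemma~\ref{lem:cbitombi} (with axioms (\ref{ax:wand1}) and (\ref{ax:wand2}) handled via the characterisation of $\myerightarrow$ as the operation derived from $\circ$), note that injectivity is immediate, and obtain surjectivity from Lemma~\ref{lem:mbitocbi} together with the fact that axioms (\ref{ax:wand1}) and (\ref{ax:wand2}) force $\myerightarrow$ to be determined by $\circ$. You merely spell out the individual axiom checks in more detail than the paper does.
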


\begin{proof}
First observe that in any $\ML_\CBI$ frame $\mbimodel$, the
$\AX_{\CBI}$ axioms~(\ref{ax:wand1}) and~(\ref{ax:wand2}) hold iff we have, for all $X,Y$ in $\pow{R}$:
\[ X\myerightarrow Y = \{z\in R \mid \exists x\in X,y\in Y.\, y\in x \circ z\}\]
Let $M$ be a $\CBI$-model. Then axioms~(\ref{ax:wand1})
and~(\ref{ax:wand2}) hold in $\embed{M}$ by the above
observation. The remaining $\AX_{\CBI}$ axioms hold in
$\embed{M}$ as a direct consequence of
Lemma~\ref{lem:cbitombi}. Therefore $\embed{M}$ is a unitary
$\AX_{\CBI}$-model.

It remains to show that $\embed{-}$ is a bijection. Injectivity
is immediate by definition. For surjectivity, let
$M'=\mbimodel$ be a unitary $\AX_{\CBI}$ model. By
Lemma~\ref{lem:mbitocbi} we have that $\cbimodel$ is a
$\CBI$-model. Since the interpretation of $\myerightarrow$ is
determined by $\circ$ because of the above observation about
axioms~(\ref{ax:wand1}) and~(\ref{ax:wand2}), it follows that
$\embed{\cbimodel}=M'$, hence $\embed{-}$ is surjective.
\end{proof}

\begin{defn}[Translation of $\CBI$-formulas to modal logic
formulas] \label{defn:embed} We define a function $\embed{-}$
from $\CBI$-formulas to modal logic formulas by induction on
the structure of $\CBI$-formulas, as follows:
\[\begin{array}{r@{\hspace{0.3cm}}c@{\hspace{0.3cm}}l@{\hspace{0.3cm}}l}
\embed{F} & = & F & \mbox{where }F \in \{P,\true,\false\} \\
\embed{\mtrue} & = & e \\
\embed{F_1\,?\,F_2} & = & \embed{F_1}\,?\,\embed{F_2} & \mbox{where }? \in \{\wedge,\vee,\rightarrow\} \\
\embed{F_1\,*\,F_2} & = & \embed{F_1}\,\circ\,\embed{F_2}\\
\embed{F_1\,\wand\,F_2} & = & \neg(\embed{F_1}\,\myerightarrow\,\neg \embed{F_2})\\
\embed{\neg F} & = & \neg\embed{F} \\
\embed{\mfalse} & = & \neg\infty \\
\embed{\mneg F} & = & \neg\inv\embed{F} \\
\embed{F_1 \mor F_2} & = & 
\multicolumn{2}{l}{\!\!\!\neg\inv(\neg\inv\embed{F_1} \circ \neg\inv\embed{F_2})}
\end{array}\]
where $P$ in the first clause ranges over $\vars$.  We extend
the domain of $\embed{-}$ to $\displayCBI$ consecutions by:
\[
\embed{\seq{X}{Y}} = \embed{\dlantform{X}} \rightarrow \embed{\dlconform{Y}}
\]
where $\dlantform{-}$ and $\dlconform{-}$ are the functions
given in Definition~\ref{defn:DLBI_validity}.
\end{defn}

In the following, we write $F[G/P]$ to denote the result of
substituting the formula $G$ for all occurrences of the
propositional variable $P$ in the formula $F$.  This notation
applies both to $\CBI$-formulas and to modal logic formulas.

\begin{lemma}\label{lem:emb_form}
Let $F$ be a $\CBI$-formula, and $M=\cbimodel$ a $\CBI$-model.
Then $F$ is true in $M$ if and only if $\embed{F}$ is true in
$\embed{M}$.
\end{lemma}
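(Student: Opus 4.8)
The plan is to prove the sharper \emph{pointwise} statement: for every $\CBI$-formula $F$, every $\CBI$-model $M=\cbimodel$, every function $\rho:\vars\rightarrow\pow{R}$ (which serves at once as an $M$-environment and as an $\embed{M}$-environment, since $\embed{M}$ has the same carrier $R$), and every $r\in R$, one has $\sat{r}{F}$ in $M$ if and only if $\sat{r}{\embed{F}}$ in $\embed{M}$. The Lemma then follows immediately by quantifying over all $\rho$ and $r$, using the definition of truth in a model.

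Before the induction I would record the simplified shape the modal satisfaction clauses take in $\embed{M}$. By construction (cf.\ Lemma~\ref{lem:emb_models}) the modalities $e$ and $\infty$ of $\embed{M}$ are the singletons $\{e\}$ and $\{\infty\}$ and the modality $\inv$ is the map $x\mapsto\{\inv x\}$, so $\sat{r}{e}\Leftrightarrow r=e$ and $\sat{r}{\infty}\Leftrightarrow r=\infty$; moreover, using $\inv\inv x=x$ (Proposition~\ref{prop:cbimodel_properties}(\ref{propitem:doubleinv})) we get $r\in\inv(r')\Leftrightarrow r'=\inv r$, and hence $\sat{r}{\inv A}\Leftrightarrow\sat{\inv r}{A}$ for \emph{every} modal formula $A$. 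Finally, unfolding the definition $X\myerightarrow Y=\{z\mid\exists x\in X,y\in Y.\,y\in x\circ z\}$ on singletons gives $\sat{r}{A_1\myerightarrow A_2}\Leftrightarrow\exists r_1,r_2\in R.\,r_2\in r_1\circ r\mbox{ and }\sat{r_1}{A_1}\mbox{ and }\sat{r_2}{A_2}$. These three identities are the only facts about $\embed{M}$ the argument uses.

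Then I would carry out a routine structural induction on $F$. For $F\in\{P,\true,\false\}$ and for $F=\neg G,\,G_1\wedge G_2,\,G_1\vee G_2,\,G_1\rightarrow G_2$ the translation is the identity (resp.\ commutes with the connective) and the additive clauses of Definition~\ref{defn:CBI_satisfaction} coincide verbatim with those for $\ML_\CBI$ frames, so the step is immediate from the induction hypothesis. For $F=\mtrue$ and $F=\mfalse$ use $\embed{\mtrue}=e$, $\embed{\mfalse}=\neg\infty$ and the two singleton identities. For $F=\mneg G$, $\sat{r}{\embed{\mneg G}}$, i.e.\ $\sat{r}{\neg\inv\embed{G}}$, holds iff $\notsat{\inv r}{\embed{G}}$, iff $\notsat{\inv r}{G}$ (induction hypothesis), iff $\sat{r}{\mneg G}$. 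For $F=G_1*G_2$ the ``$\circ$'' clauses coincide, so apply the induction hypothesis directly. The two cases needing care are $F=G_1\wand G_2$ and $F=G_1\mor G_2$, where $\embed{G_1\wand G_2}=\neg(\embed{G_1}\myerightarrow\neg\embed{G_2})$ and $\embed{G_1\mor G_2}=\neg\inv(\neg\inv\embed{G_1}\circ\neg\inv\embed{G_2})$, and one pushes the Boolean negations through the diamond modalities by De Morgan. For $\wand$ this rewrites $\sat{r}{\embed{G_1\wand G_2}}$ as $\forall r_1,r_2.\,(r_2\in r_1\circ r\mbox{ and }\sat{r_1}{\embed{G_1}})\mbox{ implies }\sat{r_2}{\embed{G_2}}$, which by commutativity of $\circ$ and the induction hypothesis is exactly $\sat{r}{G_1\wand G_2}$. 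For $\mor$, applying the $\inv$ identity once, then the ``$\circ$'' clause, then De Morgan, then the $\inv$ identity twice more yields $\forall r_1,r_2.\,\inv r\in r_1\circ r_2\mbox{ implies }(\sat{\inv r_1}{\embed{G_1}}\mbox{ or }\sat{\inv r_2}{\embed{G_2}})$, which matches $\sat{r}{G_1\mor G_2}$ under the induction hypothesis.

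I do not expect any real obstacle: the content is entirely bookkeeping. The one place where attention is needed is lining up the universally-quantified $\CBI$ clauses for $\wand$ and $\mor$ with the existentially-quantified ``diamond'' modal clauses --- the Boolean-negation collars in Definition~\ref{defn:embed} exist precisely to make that De Morgan flip go through, and the whole substance of the proof is placing every negation, every appeal to commutativity of $\circ$, and every use of $\inv\inv x=x$ correctly. As a sanity check one may note that $\embed{G_1\mor G_2}$ is literally $\embed{\mneg(\mneg G_1*\mneg G_2)}$, so the $\mor$ case can alternatively be reduced to the already-treated $\mneg$ and $*$ cases together with the $\CBI$-equivalence $F\mor G\leftrightarrow\mneg(\mneg F*\mneg G)$ of Lemma~\ref{lem:CBI_equivalences}.
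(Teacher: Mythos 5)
Your proposal is correct and takes essentially the same route as the paper: a structural induction establishing the pointwise equivalence $\sat{r}{F}$ in $M$ iff $\sat{r}{\embed{F}}$ in $\embed{M}$, resting on the singleton identities for $e$, $\infty$, $\inv$ and the unfolding of $\myerightarrow$, with the De~Morgan flips handling $\wand$ and $\mor$. The paper merely packages the same content differently --- it first proves the atomic equivalences (e.g.\ $P_1\wand P_2\simeq\neg(P_1\myerightarrow\neg P_2)$) together with a substitution property and then composes them, and it treats $\mor$ via the equivalence $F\mor G\leftrightarrow\mneg(\mneg F*\mneg G)$ of Lemma~\ref{lem:CBI_equivalences}, which is exactly the alternative you note at the end.
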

\begin{proof}
Let $F$ be a $\CBI$-formula and $A$ a modal logic formula. We
define $F \simeq A$ to hold iff for all environments $\rho$,
and all $r\in R$, the following holds:
\[ r \models_\rho F \mbox{ wrt.\ $M$} \;\Leftrightarrow\; r \models_\rho A \mbox{ wrt.\ $\embed{M}$}\]
The proof is divided into two parts. The first part establishes
the following properties:
\begin{enumerate}
\item $F\simeq A$ and $G \simeq B$ implies  $F[G/P] \simeq
    A[B/P]$
\item $\mtrue \simeq e$
\item $P_1 * P_2 \simeq P_1 \circ P_2$
\item $P_1 \wand P_2 \simeq \neg(P_1\,\myerightarrow\,\neg
    P_2)$
\item $\mfalse \simeq \neg\infty$
\item $\mneg P \simeq \neg\inv P$
\item $P_1 \mor P_2 \simeq \neg\inv(\neg\inv P_1 \circ
    \neg\inv P_2)$
\end{enumerate}
We show one interesting case (7). By Lemma~\ref{lem:cbieq} we
have that $P_1 \mor P_2$ is equivalent to $\mneg(\mneg P_1 *
\mneg P_2)$, therefore it is sufficient to prove $\mneg(\mneg
P_1 * \mneg P_2) \simeq \neg\inv(\neg\inv P_1 \circ \neg\inv
P_2)$. By (6) we have $\mneg P_i \simeq \neg\inv P_i$ for $i
\in \{1,2\}$, hence by (1) and (3) we obtain $(\mneg P_1 *
\mneg P_2) \simeq (\neg\inv P_1 \circ \neg\inv P_2)$. Thus by
(1) and (6) we conclude $\mneg(\mneg P_1 * \mneg P_2) \simeq
\neg\inv(\neg\inv P_1 \circ \neg\inv P_2)$, as required.

The second part establishes $F \simeq \embed{F}$ by induction
on the structure of $F$, using the results from the first part.
\end{proof}

\begin{proposition}
\label{prop:embed_valid2} A consecution $\seq{X}{Y}$ is valid
(wrt.\ $\CBI$-models) iff $\embed{\dlantform{X} \rightarrow
\dlconform{Y}}$ is valid wrt.\ unitary $\AX_{\CBI}$-models.
\end{proposition}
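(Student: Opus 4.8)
The plan is to obtain the equivalence as a short chain of rewrites, unfolding the definition of consecution validity and then applying the two key lemmas about the embedding $\embed{-}$. First I would unfold Definition~\ref{defn:DLBI_validity}: by definition, $\seq{X}{Y}$ is valid with respect to $\CBI$-models precisely when the $\CBI$-formula $\dlantform{X} \rightarrow \dlconform{Y}$ is $\CBI$-valid, i.e.\ true in every $\CBI$-model. Writing $F$ for $\dlantform{X} \rightarrow \dlconform{Y}$, and noting that by the clause of Definition~\ref{defn:embed} for the binary additive connectives we have $\embed{\dlantform{X} \rightarrow \dlconform{Y}} = \embed{\dlantform{X}} \rightarrow \embed{\dlconform{Y}}$, the proposition reduces to the assertion that $F$ is true in every $\CBI$-model iff $\embed{F}$ is true in every unitary $\AX_{\CBI}$-model.

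Next I would invoke Lemma~\ref{lem:emb_form}, which states that for each fixed $\CBI$-model $M$, the formula $F$ is true in $M$ iff $\embed{F}$ is true in $\embed{M}$. Consequently ``$F$ true in every $\CBI$-model'' is equivalent to ``$\embed{F}$ true in $\embed{M}$ for every $\CBI$-model $M$''. The final step uses Lemma~\ref{lem:emb_models}: since $\embed{-}$ is a bijection from $\CBI$-models onto unitary $\AX_{\CBI}$-models, the collection of models of the form $\embed{M}$ (for $M$ a $\CBI$-model) is exactly the class of all unitary $\AX_{\CBI}$-models, so quantifying over all $\embed{M}$ coincides with quantifying over all unitary $\AX_{\CBI}$-models. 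Composing the three equivalences yields the proposition.

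The argument is essentially bookkeeping, with the genuine mathematical content already discharged in Lemmas~\ref{lem:emb_form} and~\ref{lem:emb_models}, so I do not anticipate a real obstacle. The only two points needing a moment's care are: (i) checking that $\embed{-}$ commutes with $\rightarrow$, so that $\embed{\dlantform{X} \rightarrow \dlconform{Y}}$ and $\embed{\dlantform{X}} \rightarrow \embed{\dlconform{Y}}$ are literally the same formula and the statement matches Lemma~\ref{lem:emb_form}; and (ii) that it is \emph{surjectivity} of $\embed{-}$ onto the unitary $\AX_{\CBI}$-models (not merely injectivity) that licenses the passage from ``true in every $\embed{M}$'' to ``true in every unitary $\AX_{\CBI}$-model''. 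Both are immediate from the cited results.
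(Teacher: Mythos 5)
Your proposal is correct and follows essentially the same route as the paper's proof: unfold the definition of consecution validity, apply Lemma~\ref{lem:emb_form} modelwise, and then use the bijectivity of $\embed{-}$ onto unitary $\AX_{\CBI}$-models from Lemma~\ref{lem:emb_models} to convert quantification over images $\embed{M}$ into quantification over all unitary $\AX_{\CBI}$-models. Your two cautionary remarks (commutation of $\embed{-}$ with $\rightarrow$, and that surjectivity is the operative half of bijectivity) are sound and consistent with the paper's argument.
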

\begin{proof}
By definition, $\seq{X}{Y}$ is valid iff $\dlantform{X}
\rightarrow \dlconform{Y}$ is true in every $\CBI$-model $M$.
By Lemma~\ref{lem:emb_form}, this is equivalent to:
\begin{equation*}
\embed{\dlantform{X} \rightarrow \dlconform{Y}} \mbox{ is true in $\embed{M}$ for every $\CBI$-model $M$}
\end{equation*}
Since $\embed{-}$ is a bijection onto unitary
$\AX_{\CBI}$-models by Lemma~\ref{lem:emb_models}, this is
equivalent to:
\begin{equation*}
\embed{\dlantform{X} \rightarrow \dlconform{Y}} \mbox{ is true in all unitary $\AX_{\CBI}$-models}
\end{equation*}
i.e. $\embed{\dlantform{X} \rightarrow \dlconform{Y}}$ is valid
wrt.\ unitary $\AX_{\CBI}$-models.
\end{proof}

By combining Proposition~\ref{prop:embed_valid2} and
Corollary~\ref{cor:completeness} we obtain the following key
intermediate result towards completeness for $\displayCBI$:

\begin{corollary}
\label{cor:completeness2} If $\seq{X}{Y}$ is a valid
consecution then $\embed{\dlantform{X} \rightarrow
\dlconform{Y}}$ is provable in $\LAX_{\CBI}$.
\end{corollary}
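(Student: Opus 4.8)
The plan is to chain together the two results immediately preceding this statement, since Corollary~\ref{cor:completeness2} is essentially a one-step deduction once Proposition~\ref{prop:embed_valid2} and Corollary~\ref{cor:completeness} are in hand. Proposition~\ref{prop:embed_valid2} already transports the model-theoretic notion of consecution validity for $\displayCBI$ into validity of a modal logic formula with respect to unitary $\AX_{\CBI}$-models, and Corollary~\ref{cor:completeness} then supplies a proof in $\LAX_{\CBI}$ of any such valid modal formula; composing these two gives exactly the claimed implication.

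Concretely, I would first assume that $\seq{X}{Y}$ is a valid consecution in the sense of Definition~\ref{defn:DLBI_validity}, i.e.\ that $\dlantform{X} \rightarrow \dlconform{Y}$ is a $\CBI$-valid formula. Applying Proposition~\ref{prop:embed_valid2} directly then yields that $\embed{\dlantform{X} \rightarrow \dlconform{Y}}$ is valid with respect to unitary $\AX_{\CBI}$-models. Before invoking the Sahlqvist machinery I would note that $\embed{\dlantform{X} \rightarrow \dlconform{Y}}$ is genuinely a modal logic formula: $\dlantform{X}$ and $\dlconform{Y}$ are $\CBI$-formulas by Definition~\ref{defn:DLBI_validity}, so $\dlantform{X} \rightarrow \dlconform{Y}$ is a $\CBI$-formula, and the translation $\embed{-}$ of Definition~\ref{defn:embed} sends every $\CBI$-formula to a formula in the modal logic grammar associated with $\ML_\CBI$ frames (cf.\ Definition~\ref{def:mbiprem}).

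Finally, I would instantiate Corollary~\ref{cor:completeness} at $F = \embed{\dlantform{X} \rightarrow \dlconform{Y}}$: since this modal logic formula is valid with respect to unitary $\AX_{\CBI}$-models, it is provable in $\LAX_{\CBI}$, which is exactly the desired conclusion. I do not anticipate any real obstacle here, as all the substantive work has already been discharged --- the model-level bijection and formula-translation correctness in Lemma~\ref{lem:emb_models} and Lemma~\ref{lem:emb_form}, and the Sahlqvist-based completeness argument culminating in Theorem~\ref{thm:Sahlqvist}, Corollary~\ref{cor:Sahlqvist} and Corollary~\ref{cor:completeness}. The only point worth double-checking is the purely syntactic bookkeeping that the formula fed into the Sahlqvist completeness result is well-formed as a modal logic formula, which is immediate by inspecting the clauses of Definition~\ref{defn:embed}.
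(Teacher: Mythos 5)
Your proposal is correct and matches the paper exactly: the paper derives Corollary~\ref{cor:completeness2} precisely by composing Proposition~\ref{prop:embed_valid2} (validity of $\seq{X}{Y}$ iff validity of $\embed{\dlantform{X} \rightarrow \dlconform{Y}}$ with respect to unitary $\AX_{\CBI}$-models) with Corollary~\ref{cor:completeness} (completeness of $\LAX_{\CBI}$ for unitary $\AX_{\CBI}$-models). The additional well-formedness check you mention is harmless but not needed beyond inspection of Definition~\ref{defn:embed}.
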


\subsection{From modal logic proofs to $\displayCBI$ proofs}
\label{subsec:completeness_proofs}
\setcounter{theorem}{0}

\begin{defn}[Translation from modal logic formulas to $\CBI$-formulas]
\label{defn:revembed} We define a function $\revembed{-}$ from
modal logic formulas to $\CBI$-formulas by induction on the
structure of $\CBI$-formulas, as follows:
\[\begin{array}{r@{\hspace{0.3cm}}c@{\hspace{0.3cm}}l@{\hspace{0.3cm}}l}
\revembed{A} & = & A & \mbox{where }A \in \{P,\true,\false\} \\
\revembed{\neg A} & = & \neg\revembed{A} \\
\revembed{A_1 \mathrel{?} A_2} & = & \revembed{A_1} \mathrel{?} \revembed{A_2} & \mbox{where }? \in \{\wedge,\vee,\rightarrow\} \\
\revembed{A_1 \circ A_2} & = & \revembed{A_1}\,*\,\revembed{A_2}\\
\revembed{A_1\,\myerightarrow\,A_2} & = & \neg(\revembed{A_1} \wand \neg\revembed{A_2})\\
\revembed{\;e\;} & = & \mtrue \\
\revembed{\inv A} & = & \neg\mneg\revembed{A} \\
\revembed{\;\infty\;} & = & \neg\mfalse \\
\end{array}\]
\end{defn}

\begin{figure}
{\small
\[\begin{prooftree}
\[\[\[\[\[\[\[\[\[
\[
\[\mbox{(Prop.~\ref{prop:DLBI_identity})} \leadsto \seq{R}{R} \]
\[\[\[
\[\mbox{(Prop.~\ref{prop:DLBI_identity})} \leadsto \seq{P}{P} \]
\[\[\[
\[\mbox{(Prop.~\ref{prop:DLBI_identity})} \leadsto \seq{Q}{Q} \]
\justifies
\seq{\ainv Q}{\ainv Q} \using \display \]
\justifies
\seq{\neg Q}{\ainv Q} \using \negl \]
\justifies
\seq{\neg Q}{\ainv Q ; (R \wedge \neg(P \wand \neg Q)) * \true} \using \weakr \]
\justifies
\seq{P \wand \neg Q}{\minv P , (\ainv Q ; (R \wedge \neg(P \wand \neg Q)) * \true)} \using \wandl \]
\justifies
\seq{\ainv(\minv P , (\ainv Q ; (R \wedge \neg(P \wand \neg Q)) * \true))}{\ainv P \wand \neg Q} \using \display \]
\justifies
\seq{\ainv(\minv P , (\ainv Q ; (R \wedge \neg(P \wand \neg Q)) * \true))}{\neg(P \wand \neg Q)} \using \negr \]
\justifies
\seq{R ; \ainv(\minv P , (\ainv Q ; (R \wedge \neg(P \wand \neg Q)) * \true))}{R \wedge \neg(P \wand \neg Q)} \using \andr \]
\[\[\[
\justifies
\seq{\aemp}{\true} \using \truer \]
\justifies
\seq{\aemp ; P}{\true} \using \weakr \]
\justifies
\seq{P}{\true} \using \aunitr \]
\justifies
\seq{(R ; \ainv(\minv P , (\ainv Q ; (R \wedge \neg(P \wand \neg Q)) * \true))) , P}{(R \wedge \neg(P \wand \neg Q)) * \true} \using \starr \]
\justifies
\seq{(R ; \ainv(\minv P , (\ainv Q ; (R \wedge \neg(P \wand \neg Q)) * \true))) , P}{\ainv Q ; (R \wedge \neg(P \wand \neg Q)) * \true} \using \weakr \]
\justifies
\seq{R}{(\minv P , (\ainv Q ; (R \wedge \neg(P \wand \neg Q)) * \true)) ; (\minv P , (\ainv Q ; (R \wedge \neg(P \wand \neg Q)) * \true))} \using \display \]
\justifies
\seq{R}{\minv P , (\ainv Q ; (R \wedge \neg(P \wand \neg Q)) * \true)} \using \contrr \]
\justifies
\seq{R , P}{\ainv Q ; (R \wedge \neg(P \wand \neg Q)) * \true} \using \display \]
\justifies
\seq{R * P}{\ainv Q ; (R \wedge \neg(P \wand \neg Q)) * \true} \using \starl \]
\justifies
\seq{Q ; R * P}{(R \wedge \neg(P \wand \neg Q)) * \true} \using \display \]
\justifies
\seq{Q \wedge (R * P)}{(R \wedge \neg(P \wand \neg Q)) * \true} \using \andl \]
\justifies
\seq{\aemp ; Q \wedge (R * P)}{(R \wedge \neg(P \wand \neg Q)) * \true} \using \aunitl \]
\justifies
\seq{\aemp}{Q \wedge (R * P) \implies (R \wedge \neg(P \wand \neg Q)) * \true} \using \impr
\end{prooftree}\]}
\caption{A $\displayCBI$ derivation of the $\LAX_\CBI$ axiom (\ref{ax:wand1}) under the embedding
$A \mapsto (\seq{\aemp}{\revembed{A}})$, needed for the proof of Proposition~\ref{prop:LAXBIplus_admissible}.
\label{fig:drv_ax_wand1}}
\end{figure}

\begin{proposition}
\label{prop:LAXBIplus_admissible} The axioms and proof rules of
$\LAX_{\CBI}$ (cf.\ Defn.~\ref{defn:ML_proof_theory}) are
admissible in $\displayCBI$ under the embedding $A \mapsto
(\seq{\aemp}{\revembed{A}})$ from modal logic formulas to
consecutions.
\end{proposition}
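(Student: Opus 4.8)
The plan is to go through the axioms and inference rules of $\LAX_{\CBI}$ (Definition~\ref{defn:ML_proof_theory}) one family at a time: for each axiom $A$ I exhibit a $\displayCBI$ proof of $\seq{\aemp}{\revembed{A}}$, and for each inference rule I show that $\displayCBI$-provability of the embedded premises yields $\displayCBI$-provability of the embedded conclusion. The first step is a \emph{repackaging lemma}: for any $\CBI$-formulas $F,G$, the consecution $\seq{\aemp}{F \implies G}$ is $\displayCBI$-provable iff $\seq{F}{G}$ is. The ``if'' direction is immediate via $\aunitl$ (in the form $\seq{\aemp ; F}{G} \displayeq \seq{F}{G}$) followed by $\impr$; the ``only if'' direction applies $\impl$ to two instances of Proposition~\ref{prop:DLBI_identity} to obtain $\seq{F \implies G}{\ainv F ; G}$, cuts this against $\seq{\aemp}{F \implies G}$, and finishes with display postulates and $\aunitl$. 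This lemma lets me argue throughout in terms of two-sided consecutions $\seq{F}{G}$, which is the shape the $\displayCBI$ logical rules expect. I would also record two routine facts: $\revembed{-}$ commutes with substitution, $\revembed{A[B/P]} = \revembed{A}[\revembed{B}/P]$ (induction on $A$); and replacing a propositional variable by a $\CBI$-formula throughout a $\displayCBI$ proof again yields a $\displayCBI$ proof, since every rule other than the identity axiom $\seq{P}{P}$ is schematic, and each instance of $\seq{P}{P}$ is replaced by a proof of $\seq{\revembed{B}}{\revembed{B}}$ supplied by Proposition~\ref{prop:DLBI_identity}.

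With this in place the inference rules are quick. $\modusponens$ follows from $\cut$ and the repackaging lemma: from $\seq{\aemp}{\revembed{A}}$ and $\seq{\revembed{A}}{\revembed{B}}$ one cuts to get $\seq{\aemp}{\revembed{B}}$. $\subst$ follows from the substitution fact just noted. For the monotonicity rules $(\diamond\inv)$, $(\diamond\circ)$ and the two rules for $\myerightarrow$, the point is that the translations $\revembed{\inv A}=\neg\mneg\revembed{A}$, $\revembed{A \circ B}=\revembed{A} * \revembed{B}$ and $\revembed{A \myerightarrow B}=\neg(\revembed{A} \wand \neg\revembed{B})$ are monotone in each of their arguments, so the required entailment is built directly from $\seq{\revembed{A}}{\revembed{B}}$ using the introduction rules for $\neg$, $*$, $\wand$ and the display postulates --- e.g.\ $\starr$ applied to $\seq{\revembed{A}}{\revembed{B}}$ and $\seq{\revembed{C}}{\revembed{C}}$ followed by $\starl$ gives $\seq{\revembed{A} * \revembed{C}}{\revembed{B} * \revembed{C}}$ for $(\diamond\circ)$, and \mdl{3}{a}, \mnegl, \mnegr, \adl{3}{a}, \negl, \negr take $\seq{\revembed{A}}{\revembed{B}}$ to $\seq{\neg\mneg\revembed{A}}{\neg\mneg\revembed{B}}$ for $(\diamond\inv)$.

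The substantive part is the axioms, which I would group as follows. The propositional-logic axioms are sent by $\revembed{-}$ to classical tautologies over the additive connectives (as $\revembed{-}$ fixes $P,\true,\false$ and commutes with $\neg,\wedge,\vee,\rightarrow$), whose consecutions are provable since the additive fragment of $\displayCBI$ is a complete (Belnap-style) display calculus for classical propositional logic. Axioms (1)--(5) of $\AX_{\CBI}$ become the commutative-monoid laws for $*$ and $\mtrue$, derived routinely with \starl, \starr, \mtruel, \mtruer, \munitl, \munitr, \massocl, \massocr and the multiplicative display postulates. Axioms (8)--(11) (governing $\inv$ and $\infty$), the falsity axioms \invfalse, \circfalse, \lollyfalse, and the distribution axioms \invdisj, \circdisj, \lollydisjl, \lollydisjr become --- using $\revembed{\inv A}=\neg\mneg\revembed{A}$, $\revembed{\infty}=\neg\mfalse$, $\revembed{A\myerightarrow B}=\neg(\revembed{A}\wand\neg\revembed{B})$ and the $\CBI$-equivalences of Lemma~\ref{lem:cbieq} (notably $\mneg\true\leftrightarrow\false$, $\neg\mneg F\leftrightarrow\mneg\neg F$, $\mneg\mneg F\leftrightarrow F$ and $\mneg F\leftrightarrow(F\wand\false)$) --- $\CBI$-entailments that reduce to provable $\CBI$-equivalences; these derivations are routine but, as the cut-free proof of $\seq{\mneg\neg F}{\neg\mneg F}$ in Figure~\ref{fig:displayBI_proof} illustrates, can require nontrivial use of the display postulates, weakening, contraction and unitary laws. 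Finally, axioms (\ref{ax:wand1}) and (\ref{ax:wand2}), governing $\myerightarrow$, become $Q \wedge (R * P) \implies (R \wedge \neg(P \wand \neg Q)) * \true$ and its partner $R \wedge \neg(P \wand \neg Q) \implies \neg(\true \wand \neg(Q \wedge (R * P)))$; here I would give explicit $\displayCBI$ derivations, the one for (\ref{ax:wand1}) being displayed in Figure~\ref{fig:drv_ax_wand1} and the one for (\ref{ax:wand2}) being analogous.

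The main obstacle is exactly axioms (\ref{ax:wand1}) and (\ref{ax:wand2}). The modality $\myerightarrow$ was introduced only to make the $\CBI$-model conditions expressible by very simple Sahlqvist formulas; it has no primitive $\displayCBI$ counterpart, its intended reading being the compound $\neg(A_1 \wand \neg A_2)$, so its characterising axioms have no transparent $\displayCBI$ proof. Producing one means unfolding $\myerightarrow$ in terms of $\wand$ and combining the $\wand$-introduction rules with weakening, contraction and several nested applications of the display postulates --- the size of Figure~\ref{fig:drv_ax_wand1} is a fair measure of the effort. Everything else, once the repackaging lemma is available, is essentially mechanical, modulo careful bookkeeping with the additive/multiplicative distinction in the structural manipulations.
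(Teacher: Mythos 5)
Your proposal is correct and follows essentially the same route as the paper's own proof: the inference rules are discharged via $\cut$, closure of $\displayCBI$ under substitution (using Proposition~\ref{prop:DLBI_identity} for the identity axiom), and the monotonicity of the translated modalities, while the axioms are handled by explicit derivations with axioms~(\ref{ax:wand1}) and~(\ref{ax:wand2}) correctly singled out as the substantive cases (Figures~\ref{fig:drv_ax_wand1} and~\ref{fig:drv_ax_lollydisjl} in the paper). The one point to be careful about is that your appeal to ``the $\CBI$-equivalences of Lemma~\ref{lem:cbieq}'' must mean their \emph{derivable} counterparts in $\displayCBI$ rather than the semantic lemma itself (which would be circular in a completeness argument) --- your reference to the syntactic derivation in Figure~\ref{fig:displayBI_proof} indicates you intend exactly this.
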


\begin{proof}
First, we note that all of the proof rules of $\LAX_{\CBI}$,
except $\subst$, are easily derivable in $\displayCBI$ under
the embedding.  The rule $\subst$ is admissible in
$\displayCBI$ (under the embedding) because each of its proof
rules is closed under the substitution of arbitrary formulas
for propositional variables; in the case of the axiom rule
$\ax$ this requires an appeal to
Proposition~\ref{prop:DLBI_identity}.

It remains to show that $\seq{\aemp}{\revembed{A}}$ is
$\displayCBI$-derivable for every axiom ${A}$ of $\LAX_\CBI$.
The $\AX_\CBI$ axioms are mainly straightforward, with the chief
exceptions being axioms (\ref{ax:wand1}) and (\ref{ax:wand2}).
(We remark that axioms (8) and (9) are straightforward once one
has $\displayCBI$ proofs that $\neg$ and $\mneg$ commute; see
Figure~\ref{fig:displayBI_proof} for a proof of $\seq{\mneg\neg
F}{\neg\mneg F}$.) In the case of $\AX_\CBI$
axiom~(\ref{ax:wand1}), we need to show the consecution
$\seq{\aemp}{Q \wedge (R * P) \implies (R \wedge \neg(P \wand
\neg Q)) * \true}$ is provable in $\displayCBI$. We give a
suitable derivation in Figure~\ref{fig:drv_ax_wand1}. The
treatment of $\AX_\CBI$ axiom~(\ref{ax:wand2}) is broadly
similar.  It remains to treat the generic modal logic axioms of
$\LAX_\CBI$, which again are mainly straightforward and involve
showing distribution of the modalities over $\vee$. E.g., in
the case of the axiom $\lollydisjl$ we require to show that
$\seq{\aemp}{\neg((P \vee Q) \wand \neg R) \leftrightarrow
\neg(P \wand \neg R) \vee \neg(Q \wand \neg R)}$ is
$\displayCBI$-derivable.  We give a derivation of one direction
of this bi-implication in Figure~\ref{fig:drv_ax_lollydisjl}.
The other direction of the bi-implication, and the other
axioms, are derived in a similar fashion. \end{proof}

\begin{figure}[t]
{\small
\[\begin{prooftree}
\[\[\[\[\[\[\[\[\[\[\[\[\[
\[\[\[
\[ \justifies \seq{P}{P} \using \ax\]
\[\[\[
\justifies
\seq{R}{R} \using \ax\]
\justifies
\seq{\ainv R}{\ainv R} \using \display \]
\justifies
\seq{\neg R}{\ainv R} \using \negl \]
\justifies
\seq{P \wand \neg R}{\minv P , \ainv R} \using \wandl \]
\justifies
\seq{P \wand \neg R ; Q \wand \neg R}{\minv P , \ainv R} \using \weakl \]
\justifies
\seq{P}{\ainv R , \minv(P \wand \neg R ; Q \wand \neg R)} \using \display \]
\[\[\[
\[ \justifies \seq{Q}{Q} \using \ax\]
\[\[\[
\justifies
\seq{R}{R} \using \ax\]
\justifies
\seq{\ainv R}{\ainv R} \using \display\]
\justifies
\seq{\neg R}{\ainv R} \using \negl \]
\justifies
\seq{Q \wand \neg R}{\minv Q , \ainv R} \using \wandl \]
\justifies
\seq{P \wand \neg R ; Q \wand \neg R}{\minv Q , \ainv R} \using \weakl \]
\justifies
\seq{Q}{\ainv R , \minv(P \wand \neg R ; Q \wand \neg R)} \using \display \]
\justifies
\seq{P \vee Q}{(\ainv R , \minv(P \wand \neg R ; Q \wand \neg R)) ; (\ainv R , \minv(P \wand \neg R ; Q \wand \neg R))} \using \orl \]
\justifies
\seq{P \vee Q}{\ainv R , \minv(P \wand \neg R ; Q \wand \neg R)} \using \contrr \]
\justifies
\seq{(P \wand \neg R ; Q \wand \neg R), P \vee Q}{\ainv R} \using \display \]
\justifies
\seq{(P \wand \neg R ; Q \wand \neg R), P \vee Q}{\neg R} \using \negr \]
\justifies
\seq{P \wand \neg R ; Q \wand \neg R}{(P \vee Q) \wand \neg R} \using \wandr \]
\justifies
\seq{\ainv(P \vee Q) \wand \neg R ; Q \wand \neg R}{\ainv P \wand \neg R} \using \display \]
\justifies
\seq{\ainv(P \vee Q) \wand \neg R ; Q \wand \neg R}{\neg(P \wand \neg R)} \using \negr \]
\justifies
\seq{\ainv(P \vee Q) \wand \neg R ; \ainv\neg(P \wand \neg R)}{\ainv Q \wand \neg R} \using \display \]
\justifies
\seq{\ainv(P \vee Q) \wand \neg R ; \ainv\neg(P \wand \neg R)}{\neg(Q \wand \neg R)} \using \negr \]
\justifies
\seq{\ainv(P \vee Q) \wand \neg R}{\neg(P \wand \neg R) ; \neg(Q \wand \neg R)} \using \display \]
\justifies
\seq{\neg((P \vee Q) \wand \neg R)}{\neg(P \wand \neg R) ; \neg(Q \wand \neg R)} \using \negl \]
\justifies
\seq{\neg((P \vee Q) \wand \neg R)}{\neg(P \wand \neg R) \vee \neg(Q \wand \neg R)} \using \orr \]
\justifies
\seq{\aemp ; \neg((P \vee Q) \wand \neg R)}{\neg(P \wand \neg R) \vee \neg(Q \wand \neg R)} \using \aunitl \]
\justifies
\seq{\aemp}{\neg((P \vee Q) \wand \neg R) \rightarrow \neg(P \wand \neg R) \vee \neg(Q \wand \neg R)} \using \impr
\end{prooftree}\]}
\caption{A $\displayCBI$ derivation of (one direction of) the $\LAX_\CBI$ axiom $\lollydisjl$ under the embedding
$A \mapsto (\seq{\aemp}{\revembed{A}})$, needed for the proof of Proposition~\ref{prop:LAXBIplus_admissible}.
\label{fig:drv_ax_lollydisjl}}
\end{figure}

The following corollary of
Proposition~\ref{prop:LAXBIplus_admissible} is immediate by
induction over the structure of $\LAX_\CBI$ proofs.

\begin{corollary}
\label{cor:ML_to_DLCBI} If ${A}$ is provable in $\LAX_\CBI$
then $\seq{\aemp}{\revembed{A}}$ is provable in $\displayCBI$.
\end{corollary}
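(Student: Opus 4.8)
The plan is to argue by a straightforward induction on the length of the $\LAX_\CBI$ proof of $A$, with all the real content already packaged into Proposition~\ref{prop:LAXBIplus_admissible}. Recall that a $\LAX_\CBI$ proof of $A$ is a finite sequence of modal logic formulas $C_1,\ldots,C_m$ with $C_m = A$ in which each $C_k$ is either an instance of one of the $\LAX_\CBI$ axioms (a propositional axiom, an $\AX_\CBI$ axiom, or one of the $\mathrm{K}$-style axioms for the diamond modalities) or is obtained from earlier formulas in the sequence by one of the $\LAX_\CBI$ rules ($\modusponens$, $\subst$, or the monotonicity rules $(\diamond\inv)$, $(\diamond\circ)$, $(\diamond\myerightarrow)$). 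I would prove, by induction on $k$, that $\seq{\aemp}{\revembed{C_k}}$ is provable in $\displayCBI$ for every $k \le m$; instantiating at $k = m$ then gives the corollary.

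For the induction step (which subsumes the base case, where $C_k$ is an axiom and hence depends on no earlier formulas): if $C_k$ is an axiom instance, Proposition~\ref{prop:LAXBIplus_admissible} tells us outright that $\seq{\aemp}{\revembed{C_k}}$ is $\displayCBI$-derivable --- the two genuinely non-trivial cases, the $\AX_\CBI$ axioms~(\ref{ax:wand1}) and~(\ref{ax:wand2}), are handled by the explicit derivations in Figures~\ref{fig:drv_ax_wand1} and~\ref{fig:drv_ax_lollydisjl} and their symmetric variants, and the remaining axioms are dealt with as indicated in the proof of that proposition. If instead $C_k$ is obtained from one or two earlier formulas $C_{i}$ (and $C_{j}$) by a $\LAX_\CBI$ rule, then by the induction hypothesis $\seq{\aemp}{\revembed{C_i}}$ (and $\seq{\aemp}{\revembed{C_j}}$) are already $\displayCBI$-provable; since Proposition~\ref{prop:LAXBIplus_admissible} states precisely that that rule is admissible in $\displayCBI$ under the embedding $A \mapsto (\seq{\aemp}{\revembed{A}})$ --- i.e.\ that $\displayCBI$-provability of the embedded premises yields $\displayCBI$-provability of the embedded conclusion --- we obtain $\displayCBI$-provability of $\seq{\aemp}{\revembed{C_k}}$. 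This closes the induction.

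There is essentially no obstacle left at this point: the substance of the translation from modal proofs to $\displayCBI$ proofs lives entirely in Proposition~\ref{prop:LAXBIplus_admissible}, where one must exhibit concrete $\displayCBI$ derivations of each embedded axiom and check derivability or admissibility of each rule (the only slightly delicate point there being $\subst$, whose admissibility relies on closure of the $\displayCBI$ rules under substitution of formulas for propositional variables together with Proposition~\ref{prop:DLBI_identity} to recover the $\ax$ case for compound formulas). The only care needed in the present argument is bookkeeping --- phrasing the induction over \emph{every} prefix of the proof sequence rather than over $A$ alone, so that at each rule application the induction hypothesis supplies exactly the embedded premises required, and noting that the admissibility statement of Proposition~\ref{prop:LAXBIplus_admissible} is uniform across all $\LAX_\CBI$ rules, so that composing it repeatedly along the proof is legitimate. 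Hence the corollary is immediate.
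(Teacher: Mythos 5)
Your proof is correct and is exactly the argument the paper intends: the paper dispatches this corollary as ``immediate by induction over the structure of $\LAX_\CBI$ proofs'' from Proposition~\ref{prop:LAXBIplus_admissible}, and your write-up simply makes that induction explicit. No differences of substance.
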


We write $\drvequiv{F}{G}$, where $F$ and $G$ are
$\CBI$-formulas, to mean that both $\seq{F}{G}$ and
$\seq{G}{F}$ are provable (in $\displayCBI$), and call
$\drvequiv{F}{G}$ a \emph{derivable equivalence} (of
$\displayCBI$).  We observe that derivable equivalence in
$\displayCBI$ is indeed an equivalence relation: it is
reflexive by Proposition~\ref{prop:DLBI_identity}, symmetric by
definition and transitive by the $\displayCBI$ rule $\cut$.

\begin{figure}
{\small
\[\begin{prooftree}
\[\[\[\[\[\[\[\[\[\[\[\[\[\[\[
\[\mbox{(I.H.)} \leadsto \seq{F_1}{\revembed{\embed{F_1}}} \]
\justifies
\seq{\minv\revembed{\embed{F_1}}}{\minv F_1} \using \display \]
\justifies
\seq{\mneg\revembed{\embed{F_1}}}{\minv F_1} \using \mnegl \]
\justifies
\seq{\ainv\minv F_1}{\ainv\mneg\revembed{\embed{F_1}}} \using \display \]
\justifies
\seq{\ainv\minv F_1}{\neg\mneg\revembed{\embed{F_1}}} \using \negr \]
\justifies
\seq{\ainv\neg\mneg\revembed{\embed{F_1}}}{\minv F_1} \using \display \]
\justifies
\seq{\neg\neg\mneg\revembed{\embed{F_1}}}{\minv F_1} \using \negl \]
\justifies
\seq{F_1}{\minv\neg\neg\mneg\revembed{\embed{F_1}}} \using \display \]
\[\[\[\[\[\[\[
\[\mbox{(I.H.)} \leadsto \seq{F_2}{\revembed{\embed{F_2}}} \]
\justifies
\seq{\minv\revembed{\embed{F_2}}}{\minv F_2} \using \display \]
\justifies
\seq{\mneg\revembed{\embed{F_2}}}{\minv F_2} \using \mnegl \]
\justifies
\seq{\ainv\minv F_2}{\ainv\mneg\revembed{\embed{F_2}}} \using \display \]
\justifies
\seq{\ainv\minv F_2}{\neg\mneg\revembed{\embed{F_2}}} \using \negr \]
\justifies
\seq{\ainv\neg\mneg\revembed{\embed{F_2}}}{\minv F_2} \using \display \]
\justifies
\seq{\neg\neg\mneg\revembed{\embed{F_2}}}{\minv F_2} \using \negl \]
\justifies
\seq{F_2}{\minv\neg\neg\mneg\revembed{\embed{F_2}}} \using \display \]
\justifies
\seq{F_1 \mor F_2}{\minv\neg\neg\mneg\revembed{\embed{F_1}} , \minv\neg\neg\mneg\revembed{\embed{F_2}}} \using \morl \]
\justifies
\seq{\neg\neg\mneg\revembed{\embed{F_1}} , \neg\neg\mneg\revembed{\embed{F_2}}}{\minv F_1 \mor F_2} \using \display \]
\justifies
\seq{\neg\neg\mneg\revembed{\embed{F_1}} * \neg\neg\mneg\revembed{\embed{F_2}}}{\minv F_1 \mor F_2} \using \starl \]
\justifies
\seq{F_1 \mor F_2}{\minv \neg\neg\mneg\revembed{\embed{F_1}} * \neg\neg\mneg\revembed{\embed{F_2}}} \using \display \]
\justifies
\seq{F_1 \mor F_2}{\mneg(\neg\neg\mneg\revembed{\embed{F_1}} * \neg\neg\mneg\revembed{\embed{F_2}})} \using \mnegr \]
\justifies
\seq{\ainv\mneg(\neg\neg\mneg\revembed{\embed{F_1}} * \neg\neg\mneg\revembed{\embed{F_2}})}{\ainv F_1 \mor F_2} \using \display \]
\justifies
\seq{\neg\mneg(\neg\neg\mneg\revembed{\embed{F_1}} * \neg\neg\mneg\revembed{\embed{F_2}})}{\ainv F_1 \mor F_2} \using \negl \]
\justifies
\seq{F_1 \mor F_2}{\ainv\neg\mneg(\neg\neg\mneg\revembed{\embed{F_1}} * \neg\neg\mneg\revembed{\embed{F_2}})} \using \display \]
\justifies
\seq{F_1 \mor F_2}{\neg\neg\mneg(\neg\neg\mneg\revembed{\embed{F_1}} * \neg\neg\mneg\revembed{\embed{F_2}})} \using \negr
\end{prooftree}\]}
\caption{A $\displayCBI$ proof for the non-trivial case of
Lemma~\ref{lem:embed_cancel_formula}.\label{fig:embed_cancel_formula}}
\end{figure}

\begin{lemma}
\label{lem:embed_cancel_formula}
$\drvequiv{F}{\revembed{\embed{F}}}$ is a derivable equivalence
of $\displayCBI$ for any CBI-formula $F$.
\end{lemma}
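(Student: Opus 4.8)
The plan is to proceed by induction on the structure of $F$, exploiting two general facts about derivable equivalence. First, $\drvequiv{\cdot}{\cdot}$ is an equivalence relation (reflexive by Proposition~\ref{prop:DLBI_identity}, symmetric by definition, transitive by $\cut$), as already noted, so we may freely chain sub-equivalences. Second, $\drvequiv{\cdot}{\cdot}$ is a \emph{congruence} with respect to every $\CBI$-connective: this is read off the logical rules of $\displayCBI$ --- e.g.\ from $\seq{F_1}{G_1}$ and $\seq{F_2}{G_2}$ one obtains $\seq{F_1 * F_2}{G_1 * G_2}$ by applying $\starr$ and then $\starl$, and dually for the reverse direction and for the other binary connectives, while for the unary $\neg$ and $\mneg$ one first uses a display postulate to move the operand across the turnstile before applying the introduction rules. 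Since we only ever deal with two-sided equivalences, the fact that $\neg$, $\mneg$ and the left argument of $\wand$ are treated contravariantly is harmless. We shall additionally use the derivable double-negation equivalence $\drvequiv{A}{\neg\neg A}$, valid for any $\CBI$-formula $A$, which is obtained from the provable consecution $\seq{A}{A}$ by a short run of display postulates together with $\negl$ and $\negr$ (in the style of the derivation in Figure~\ref{fig:displayBI_proof}).

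Granting these, most inductive cases are immediate because $\revembed{-} \circ \embed{-}$ acts homomorphically on the relevant constructor. For $F$ a propositional variable, $\true$, $\false$, or $\mtrue$ we have $\revembed{\embed{F}} = F$ outright. For $F = F_1 \mathbin{?} F_2$ with $? \in \{\wedge, \vee, \rightarrow, *\}$ we have $\revembed{\embed{F}} = \revembed{\embed{F_1}} \mathbin{?} \revembed{\embed{F_2}}$, and for $F = \neg F_1$ we have $\revembed{\embed{F}} = \neg\revembed{\embed{F_1}}$; in either case the claim follows from the induction hypotheses by congruence. The cases $F = \mfalse$, $F = \mneg F_1$ and $F = F_1 \wand F_2$ differ from a plain homomorphic image only by extra double negations inserted by the translations --- for instance $\revembed{\embed{\mneg F_1}} = \neg\neg\mneg\revembed{\embed{F_1}}$ and $\revembed{\embed{F_1 \wand F_2}} = \neg\neg(\revembed{\embed{F_1}} \wand \neg\neg\revembed{\embed{F_2}})$ --- and these are absorbed by combining the induction hypotheses, congruence, the equivalence $\drvequiv{A}{\neg\neg A}$, and transitivity.

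The sole genuinely nontrivial case is $F = F_1 \mor F_2$. Unfolding the definitions, $\revembed{\embed{F_1 \mor F_2}}$ equals $\neg\neg\mneg(\neg\neg\mneg\revembed{\embed{F_1}} * \neg\neg\mneg\revembed{\embed{F_2}})$, a formula whose outermost connective is $\neg$ (over $\mneg$ over $*$) rather than $\mor$; congruence for $\mor$ therefore does not apply, and --- since we are in the middle of proving completeness --- we are not entitled to invoke the \emph{semantic} equivalence $F_1 \mor F_2 \leftrightarrow \mneg(\mneg F_1 * \mneg F_2)$ of Lemma~\ref{lem:CBI_equivalences}. The equivalence must instead be derived by hand in $\displayCBI$. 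For the direction $\seq{F_1 \mor F_2}{\revembed{\embed{F_1 \mor F_2}}}$ this is exactly the derivation in Figure~\ref{fig:embed_cancel_formula}: starting from the induction-hypothesis consecutions $\seq{F_i}{\revembed{\embed{F_i}}}$, one reshapes each of them using display postulates, $\mnegl$ and $\negr$ so that they can be combined by $\morl$, then applies $\starl$ and a further run of display postulates, $\mnegr$, $\negl$ and $\negr$ to reach the target. The converse direction $\seq{\revembed{\embed{F_1 \mor F_2}}}{F_1 \mor F_2}$ is handled by an analogous explicit derivation, now fed the reverse hypotheses $\seq{\revembed{\embed{F_i}}}{F_i}$ and using $\morr$ and $\starr$ in place of $\morl$ and $\starl$. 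I expect this $\mor$ case to be the main obstacle: the derivation is long and, as the surrounding discussion remarks of such $\displayCBI$ proofs, makes essential use of contraction, weakening and a unitary law, so there is no obvious shortcut --- but Figure~\ref{fig:embed_cancel_formula} already carries it out in full.
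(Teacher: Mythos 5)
Your proposal is correct and follows essentially the same route as the paper: structural induction on $F$, with the base cases immediate, the remaining cases handled by the induction hypothesis together with congruence of derivable equivalence and absorption of the inserted double negations via $\drvequiv{A}{\neg\neg A}$, and $F = F_1 \mor F_2$ isolated as the one non-trivial case requiring the explicit derivation of Figure~\ref{fig:embed_cancel_formula} (with the converse direction analogous). Your additional remarks --- spelling out the congruence property, noting that contravariance is harmless for two-sided equivalences, and observing that the semantic equivalences of Lemma~\ref{lem:CBI_equivalences} may not be invoked here --- are all accurate but do not change the argument.
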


\begin{proof}
By combining the definitions of $\embed{-}$ and
$\revembed{\;-\;}$ (cf.~Defns.~\ref{defn:embed}
and~\ref{defn:revembed}) we obtain the following definition of
$\revembed{\embed{-}}$, given by structural induction on
$\CBI$-formulas:
\[\begin{array}{r@{\hspace{0.3cm}}c@{\hspace{0.3cm}}l@{\hspace{0.3cm}}l}
\revembed{\embed{F}} & = & F & \mbox{where }F \in \{P,\true,\false,\mtrue\} \\
\revembed{\embed{\neg F}} & = & \neg\revembed{\embed{F}} \\
\revembed{\embed{F_1 \mathrel{?} F_2}} & = & \revembed{\embed{F_1}} \mathrel{?} \revembed{\embed{F_2}}
& \mbox{where }? \in \{\wedge,\vee,\rightarrow,*\} \\
\revembed{\embed{\mfalse}} & = & \neg\neg\mfalse \\
\revembed{\embed{\mneg F}} & = & \neg\neg\mneg\revembed{\embed{F}} \\
\revembed{\embed{F_1 \wand F_2}} & = & \neg\neg(\revembed{\embed{F_1}} \wand \neg\neg\revembed{\embed{F_2}}) \\
\revembed{\embed{F_1 \mor F_2}} & = & \neg\neg\mneg(\neg\neg\mneg\revembed{\embed{F_1}} * \neg\neg\mneg\revembed{\embed{F_2}})
\end{array}\]
With this in mind, we now proceed by structural induction on
$F$.  The base cases, in which $\revembed{\embed{F}} = F$, are
immediate since $\drvequiv{F}{F}$ is a derivable equivalence of
$\displayCBI$ by Proposition~\ref{prop:DLBI_identity}.  Most of
the other cases are straightforward using the induction
hypothesis and the fact that $\drvequiv{\neg\neg F}{F}$ is
easily seen to be a derivable equivalence of $\displayCBI$.  We
show one direction of the only non-trivial case, $F = F_1 \mor
F_2$, in Figure~\ref{fig:embed_cancel_formula}.  The reverse
direction is similar.
\end{proof}

The following two lemmas, which show how to construct proofs of
arbitrary valid consecutions given proofs of arbitrary valid
formulas, are standard in showing completeness of display
calculi relative to Hilbert-style proof systems, and were first
employed by Gor\'e~\cite{Gore:96-2}.

\begin{lemma}
\label{lem:drv_structure_formula} For any structure $X$ the
consecutions $\seq{X}{\dlantform{X}}$ and
$\seq{\dlconform{X}}{X}$ are both $\displayCBI$-provable.
\end{lemma}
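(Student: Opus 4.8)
The plan is to prove both consecutions simultaneously by structural induction on $X$, exploiting the fact that $\displayCBI$'s logical rules come in matched left/right pairs that precisely mirror the clauses defining $\dlantform{-}$ and $\dlconform{-}$. For the base case $X=F$ a formula, both claimed consecutions are $\seq{F}{F}$, which is $\displayCBI$-provable by Proposition~\ref{prop:DLBI_identity}. For the structural-unit cases, $\seq{\aemp}{\dlantform{\aemp}} = \seq{\aemp}{\true}$ follows by $\truer$, and $\seq{\dlconform{\aemp}}{\aemp} = \seq{\false}{\aemp}$ follows by $\falsel$; the cases for $\memp$ are handled identically by $\mtruer$ and $\mfalsel$.

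For the inductive cases I would treat each binary and unary structural connective using its two introduction rules together with the display postulates. Consider $X = X_1 ; X_2$, so $\dlantform{X} = \dlantform{X_1} \wedge \dlantform{X_2}$: from the two induction hypotheses $\seq{X_1}{\dlantform{X_1}}$ and $\seq{X_2}{\dlantform{X_2}}$, apply $\andr$ to obtain $\seq{X_1 ; X_2}{\dlantform{X_1} \wedge \dlantform{X_2}}$, which is exactly $\seq{X}{\dlantform{X}}$. For the consequent side, $\dlconform{X} = \dlconform{X_1} \vee \dlconform{X_2}$: from $\seq{\dlconform{X_1}}{X_1}$ and $\seq{\dlconform{X_2}}{X_2}$, apply $\orl$ to get $\seq{\dlconform{X_1} \vee \dlconform{X_2}}{X_1 ; X_2}$, as required. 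The case $X = X_1 , X_2$ is entirely analogous, using $\starr$ and $\morl$ in place of $\andr$ and $\orl$. For $X = \ainv X_1$, where $\dlantform{X} = \neg\dlconform{X_1}$ and $\dlconform{X} = \neg\dlantform{X_1}$: starting from the induction hypothesis $\seq{\dlconform{X_1}}{X_1}$, use a display postulate (\adl{3}{a}) to rearrange to $\seq{\ainv X_1}{\ainv\dlconform{X_1}}$, then apply $\negr$ to obtain $\seq{\ainv X_1}{\neg\dlconform{X_1}} = \seq{X}{\dlantform{X}}$; symmetrically, from $\seq{X_1}{\dlantform{X_1}}$, display to $\seq{\ainv\dlantform{X_1}}{\ainv X_1}$ and apply $\negl$ to get $\seq{\neg\dlantform{X_1}}{\ainv X_1} = \seq{\dlconform{X}}{X}$. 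The case $X = \minv X_1$ is identical with $\mnegl$, $\mnegr$ and the multiplicative display postulates \mdl{3}{a}.

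The main thing to be careful about — rather than a genuine obstacle — is keeping straight, in the $\ainv$ and $\minv$ cases, which induction hypothesis feeds which target consecution, since the negation flips antecedent and consequent roles; getting the direction of the display postulate right in each subcase is the only place where a slip is easy. The binary cases are purely mechanical applications of the matched introduction rules and require no display manipulation at all, and no structural rules (weakening, contraction, associativity, units) are needed anywhere in the argument. Thus the proof is a routine induction; I would state it in the paper simply as "By structural induction on $X$, using Proposition~\ref{prop:DLBI_identity} in the base case and the logical rules and display postulates of $\displayCBI$ in the inductive cases," perhaps displaying the $\ainv X_1$ case explicitly since it is the only one involving the display postulates.
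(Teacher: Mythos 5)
Your proof is correct and follows essentially the same route as the paper's: structural induction on $X$, with the base case supplied by Proposition~\ref{prop:DLBI_identity} and each inductive case handled by the matching left/right logical rules together with the display postulates for the negation cases. The paper displays only the $\minv Y$ case explicitly (which matches your treatment exactly) and declares the rest similar, so your write-up is simply a more detailed version of the same argument.
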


\begin{proof}
By structural induction on $X$.  The case where $X$ is a
formula $F$ follows directly from
Proposition~\ref{prop:DLBI_identity}. The other cases all
follow straightforwardly from the induction hypothesis and the
logical rules of $\displayCBI$. E.g., when $X = \minv Y$ we
have $\dlantform{X} = \mneg\dlconform{Y}$ and $\dlconform{X} =
\mneg\dlantform{Y}$, and proceed as follows:
{\small\[\begin{array}{c@{\hspace{1cm}}c}
\begin{prooftree}
\[
\[\mbox{(I.H.)} \leadsto \seq{\dlconform{Y}}{Y}\]
\justifies \seq{\minv Y}{\minv\dlconform{Y}} \using \display \]
\justifies \seq{\minv Y}{\mneg\dlconform{Y}} \using \mnegr
\end{prooftree}
&
\begin{prooftree}
\[
\[\mbox{(I.H.)} \leadsto \seq{Y}{\dlantform{Y}} \]
\justifies \seq{\minv\dlantform{Y}}{\minv Y} \using \display \]
\justifies \seq{\mneg\dlantform{Y}}{\minv Y} \using \mnegl
\end{prooftree}
\end{array}\]}
The remaining cases are similar.
\end{proof}

\begin{lemma}
\label{lem:embed_cancel} If
$\seq{\aemp}{\revembed{\embed{\dlantform{X} \rightarrow
\dlconform{Y}}}}$ is $\displayCBI$-provable then so is
$\seq{X}{Y}$.
\end{lemma}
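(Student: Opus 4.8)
The plan is to strip the outer implication from the hypothesis and then rewrite the two sides of the resulting consecution using derivable equivalences already established, gluing everything together with cuts. Write $F \defeq \revembed{\embed{\dlantform{X}}}$ and $G \defeq \revembed{\embed{\dlconform{Y}}}$. Since $\embed{-}$ (Definition~\ref{defn:embed}) and $\revembed{-}$ (Definition~\ref{defn:revembed}) each act homomorphically on the connective $\rightarrow$, we have $\revembed{\embed{\dlantform{X} \rightarrow \dlconform{Y}}} = F \rightarrow G$, so the hypothesis reads exactly $\seq{\aemp}{F \rightarrow G}$.

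The first step is to derive $\seq{F}{G}$ from $\seq{\aemp}{F \rightarrow G}$. Applying the rule $\impl$ to the two instances $\seq{F}{F}$ and $\seq{G}{G}$ of Proposition~\ref{prop:DLBI_identity} yields $\seq{F \rightarrow G}{\ainv F ; G}$; a $\cut$ on the cut-formula $F \rightarrow G$ against the hypothesis then gives $\seq{\aemp}{\ainv F ; G}$. This consecution is display-equivalent to $\seq{\aemp ; F}{G}$ via the display postulate $\adl{1}{a}$, so an application of the rule $\display$ followed by $\aunitl$ produces $\seq{F}{G}$.

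Next, the instances of Lemma~\ref{lem:embed_cancel_formula} at $\dlantform{X}$ and $\dlconform{Y}$ give the derivable equivalences $\drvequiv{\dlantform{X}}{F}$ and $\drvequiv{\dlconform{Y}}{G}$; in particular $\seq{\dlantform{X}}{F}$ and $\seq{G}{\dlconform{Y}}$ are $\displayCBI$-provable. Cutting $\seq{\dlantform{X}}{F}$ against $\seq{F}{G}$, and then the result against $\seq{G}{\dlconform{Y}}$, yields $\seq{\dlantform{X}}{\dlconform{Y}}$. Finally, Lemma~\ref{lem:drv_structure_formula} supplies $\seq{X}{\dlantform{X}}$ and $\seq{\dlconform{Y}}{Y}$, and two further cuts --- on the left with $\seq{X}{\dlantform{X}}$ and on the right with $\seq{\dlconform{Y}}{Y}$ --- deliver $\seq{X}{Y}$, as required.

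I do not anticipate any real difficulty: the substantive content has already been packaged into Lemmas~\ref{lem:embed_cancel_formula} and~\ref{lem:drv_structure_formula}, and what remains is a routine chain of cuts. The only point needing a moment's care is the very first step, since the rule $\impr$ is not bidirectional and so one cannot simply ``invert'' the hypothesis $\seq{\aemp}{F \rightarrow G}$ to obtain $\seq{F}{G}$ directly; the short detour through $\impl$, $\cut$, the display postulate $\adl{1}{a}$ and $\aunitl$ circumvents this.
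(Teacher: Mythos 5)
Your proof is correct and follows essentially the same route as the paper's: both rely on the homomorphic action of $\revembed{\embed{-}}$ on $\rightarrow$, Lemmas~\ref{lem:embed_cancel_formula} and~\ref{lem:drv_structure_formula}, the rule $\impl$, a display step via $\adl{1}{a}$, $\aunitl$, and a chain of cuts. The only (immaterial) difference is ordering --- the paper feeds the cut-composites $\seq{X}{\revembed{\embed{\dlantform{X}}}}$ and $\seq{\revembed{\embed{\dlconform{Y}}}}{Y}$ directly into $\impl$ before cutting against the hypothesis, whereas you first extract $\seq{F}{G}$ and then chain the remaining cuts.
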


\begin{proof}
We first note that $\revembed{\embed{\dlantform{X} \rightarrow
\dlconform{Y}}} = \revembed{\embed{\dlantform{X}}} \rightarrow
\revembed{\embed{\dlconform{Y}}}$, and then build a
$\displayCBI$ proof of $\seq{X}{Y}$ as follows:
{\small\[\begin{prooftree}
\[\[
\[\mbox{(assumption)} \leadsto \seq{\aemp}{\revembed{\embed{\dlantform{X}}} \rightarrow \revembed{\embed{\dlconform{Y}}}} \]
\hspace{-0.7cm}
\[\[
\[\mbox{(Lemma~\ref{lem:drv_structure_formula})} \leadsto \seq{X}{\dlantform{X}} \]
\[\mbox{(Lemma~\ref{lem:embed_cancel_formula})} \leadsto \seq{\dlantform{X}}{\revembed{\embed{\dlantform{X}}}} \]
\justifies
\seq{X}{\revembed{\embed{\dlantform{X}}}} \using \cut \]
\[
\[\mbox{(Lemma~\ref{lem:embed_cancel_formula})} \leadsto \seq{\revembed{\embed{\dlconform{Y}}}}{\dlconform{Y}} \]
\[\mbox{(Lemma~\ref{lem:drv_structure_formula})} \leadsto \seq{\dlconform{Y}}{Y} \]
\justifies
\seq{\revembed{\embed{\dlconform{Y}}}}{Y} \using \cut \]
\justifies
\seq{\revembed{\embed{\dlantform{X}}} \rightarrow \revembed{\embed{\dlconform{Y}}}}{\ainv X ; Y} \using \impl \]
\justifies
\seq{\aemp}{\ainv X ; Y} \using \cut \]
\justifies
\seq{\aemp ; X}{Y} \using \display \]
\justifies
\seq{X}{Y} \using \aunitl
\end{prooftree}\]}
\end{proof}

We can now prove the completeness of $\displayCBI$ with respect
to $\CBI$-validity.

\paragraph{\em Proof of Theorem~\ref{thm:DLBI_complete}.} Let
$\seq{X}{Y}$ be a valid consecution.
Then $\embed{\dlantform{X} \rightarrow \dlconform{Y}}$ is
$\LAX_{\CBI}$-provable by Corollary~\ref{cor:completeness2}. By
Corollary~\ref{cor:ML_to_DLCBI},
$\seq{\aemp}{\revembed{\embed{\dlantform{X} \rightarrow
\dlconform{Y}}}}$ is then provable in $\displayCBI$ and thus,
by Lemma~\ref{lem:embed_cancel}, $\seq{X}{Y}$ is
$\displayCBI$-provable as required.

\section{Examples of $\CBI$-models}
\label{sec:examples}
\renewcommand{\thetheorem}{\thesection.\arabic{theorem}}
\setcounter{theorem}{0}

In this section we give some concrete examples of
$\CBI$-models, and some general constructions for forming new
models.  In most of our examples the relational monoid
operation $\circ$ is actually a partial function, and in these
cases we treat it as such (e.g., by writing $x \circ y = z$
rather than $x \circ y = \{z\}$).

\begin{proposition}[Abelian groups as $\CBI$-models]
\label{prop:abelian_group_models} Any Abelian group $\abgroup$
can be understood as a $\CBI$-model $\langle
R,\circ,e,\inv,e\rangle$.  Conversely, if $\cbimodel$ is a
$\CBI$-model with $\circ$ a partial function, then imposing the
condition $\infty = e$ forces $\circ$ to be total, whence
$\abgroup$ is an Abelian group.
\end{proposition}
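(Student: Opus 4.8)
The plan is to handle the two implications separately; the second is where essentially all the content lies. For the forward direction I would take an Abelian group, write its underlying set as $R$, its unit as $e$ and its inverse operation as $\inv$, and turn the group multiplication into a relational operation $\circ : R \times R \to \pow{R}$ by declaring $x \circ y$ to be the singleton containing the product of $x$ and $y$. Since every set arising from $\circ$ (and from its pointwise extension to $\pow{R}$) is then a singleton, commutativity and associativity of $\circ$ reduce immediately to the corresponding group axioms, and $r \circ e = \{r\}$ holds because $e$ is the group unit; hence $\monoid$ is a $\BBI$-model in the sense of Definition~\ref{defn:BBI_model}. I would then set $\infty \defeq e$ and check the $\CBI$-model axiom of Definition~\ref{defn:CBI_model}: for each $x$ we have $x \circ \inv x = \{e\} = \{\infty\}$, so $\infty \in x \circ \inv x$, and if $\infty \in x \circ y$ then the product of $x$ and $y$ is $e$, forcing $y = \inv x$; so $\inv x$ is the unique witness. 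This makes $\langle R,\circ,e,\inv,e\rangle$ a $\CBI$-model.

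For the converse, let $M = \cbimodel$ be a $\CBI$-model in which $\circ$ is a partial function, and assume $\infty = e$. The first step is to note that, since the cardinality of $x \circ y$ is $\leq 1$ for all $x,y$ while $\infty = e \in x \circ \inv x$ by the $\CBI$-model axiom, we must have $x \circ \inv x = \{e\}$ for every $x \in R$; thus every element already has a two-sided $\circ$-inverse. The key step is then to deduce that $\circ$ is total: fixing $x, y \in R$ and using associativity of $\circ$ together with $y \circ \inv y = \{e\}$ and $x \circ e = \{x\}$, I would compute
\[
\{x\} \;=\; x \circ e \;=\; x \circ (y \circ \inv y) \;=\; (x \circ y) \circ \inv y,
\]
and observe that if $x \circ y$ were empty then $(x \circ y) \circ \inv y$ would be empty too, contradicting this identity; hence $x \circ y \neq \emptyset$, so $\circ$ is a total function $R \times R \to R$.

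Finally I would wrap up: once $\circ$ is total, the $\BBI$-model axioms say exactly that $\langle R,\circ,e \rangle$ is a commutative monoid, and we have just seen that every $x$ has inverse $\inv x$ with $x \circ \inv x = e$, so $\langle R,\circ,e,\inv \rangle$ is an Abelian group; moreover the $\CBI$-model axiom guarantees uniqueness of this inverse, so $\inv$ is genuinely the group-inverse operation and $\abgroup$ (reading $\inv$ in the role of $-$) is the required Abelian group. I expect the only real obstacle to be spotting the totality argument --- i.e.\ that associativity applied to $x \circ (y \circ \inv y)$ forces $x \circ y$ to be non-empty; everything else is routine verification against Definitions~\ref{defn:BBI_model} and~\ref{defn:CBI_model} and Proposition~\ref{prop:cbimodel_properties}.
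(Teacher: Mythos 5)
Your proposal is correct and follows essentially the same route as the paper: the forward direction is the same routine verification via uniqueness of group inverses, and for the converse both arguments use associativity applied to a product involving $y \circ \inv y = \{e\}$ (the paper computes $\inv x \circ (x \circ y) = \{y\}$, you compute $(x \circ y) \circ \inv y = \{x\}$ — the same non-emptiness argument) to force totality of $\circ$, after which the group axioms are immediate.
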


\begin{proof}
($\Rightarrow$) Let $\abgroup$ be an Abelian group.  To see that $\monoid$ is a $\BBI$-model, we just note that
$\circ$ is associative and commutative and that $e$ is the unit
of $\circ$ by the group axioms.  By the uniqueness of group
inverses, we then have that $\inv x$ is the unique $y$ such
that $e \in x \circ y$.  Thus $\langle R, \circ, e, \inv,
e\rangle$
is a $\CBI$-model, as required. \\

\noindent($\Leftarrow$) Let $\cbimodel$ be a $\CBI$-model with
$\infty = e$ and $\circ$ a partial function.  First note that,
by the latter two facts, we have $\inv x \circ x = \infty = e$
for all $x \in R$. Now for any $x,y \in R$ we observe that
$\inv x \circ (x \circ y) = (\inv x \circ x) \circ y = e \circ
y = y$. Thus $\inv x \circ (x \circ y)$ is defined, which can
only be the case if $x \circ y$ is defined. Thus $\circ$ is in
fact a total function.

To see that $\abgroup$ is an Abelian group, we first observe
that, since $\circ$ is a total function by the above, $\monoid$
is a total commutative monoid by the conditions imposed on
$\BBI$-models.  The uniqueness of group inverses then follows
immediately from the $\CBI$-model conditions and the fact that
$\infty = e$.
\end{proof}

The following example, which looks at some typical resource
interpretations of $\CBI$-formulas inside an Abelian group
model, builds on the ``vending machine'' model for $\BI$ given
by Pym, O'Hearn and Yang~\cite{Pym-OHearn-Yang:04}, which
itself was inspired by Girard's well-known ``Marlboro and
Camel'' illustration of linear logic~\cite{Girard:95}.

\begin{example}[Personal finance]
\label{ex:money} Let $\langle \integers, +, 0, \inv \rangle$ be
the Abelian group of integers under addition with identity $0$,
where $\inv$ is the usual unary minus.  This group can be
understood as a $\CBI$-model $\langle \integers, +, 0, \inv, 0
\rangle$ by Proposition~\ref{prop:abelian_group_models}. We
view the elements of this model as financial resources, i.e\
money (which we shall measure in pounds sterling, \pounds),
with positive and negative integers representing respectively
\emph{credit} and \emph{debt}.  We read the $\CBI$-satisfaction
relation $\sat{\pounds m}{F}$ informally as ``$\pounds m$ is
enough to make $F$ true'', and show how to read some example
$\CBI$-formulas according to this interpretation.

Let $C$ and $W$ be atomic formulas denoting respectively the
ability to buy cigarettes costing \pounds 5 and whisky costing
\pounds 20, so that we have $\sat{\pounds m}{C} \Leftrightarrow
m \geq 5$ and $\sat{\pounds m}{W} \Leftrightarrow m \geq 20$.
Then the formula $C \wedge W$ denotes the ability to buy
cigarettes and the ability to buy whisky (but not necessarily
to buy both together):
\[\begin{array}{rcl}
\sat{\pounds m}{C \wedge W} & \Leftrightarrow & \sat{\pounds m}{C} \mbox{ and } \sat{\pounds m}{W} \\
& \Leftrightarrow & m \geq 5 \mbox{ and } m \geq 20 \\
& \Leftrightarrow & m \geq 20
\end{array}\]
In contrast, the formula $C * W$ denotes the ability to buy
both cigarettes and whisky together:
\[\begin{array}{rcl}
\sat{\pounds m}{C * W} & \Leftrightarrow & \exists m_1, m_2 \in \integers.
\;\pounds m = \pounds m_1 + \pounds m_2
\mbox{ and } \sat{\pounds m_1}{C} \mbox { and } \sat{\pounds m_2}{W} \\
& \Leftrightarrow & \exists m_1, m_2 \in \integers.
\;m = m_1 + m_2
\mbox{ and } m_1 \geq 5 \mbox { and } m_2 \geq 20 \\
& \Leftrightarrow & m \geq 25
\end{array}\]
The multiplicative implication $C \wand W$ denotes the fact
that if one acquires enough money to buy cigarettes then the
resulting balance of funds is sufficient to buy whisky:
\[\begin{array}{rcl}
\sat{\pounds m}{C \wand W} & \Leftrightarrow & \forall m' \in \integers.\;
\sat{\pounds m'}{C} \mbox{ implies } \sat{\pounds m + \pounds m'}{W} \\
& \Leftrightarrow & \forall m' \in \integers.\;
m' \geq 5 \mbox{ implies }  m + m' \geq 20 \\
& \Leftrightarrow & m \geq 15
\end{array}\]
We remark that all of the above formulas are $\BBI$-formulas,
and so would be interpreted in exactly the same way in the
$\BBI$-model $\langle\integers,+,0\rangle$.  Let us examine the
 multiplicative connectives that are particular to $\CBI$.  We have
$\sat{\pounds m}{\mfalse} \Leftrightarrow m \neq 0$, so that
$\mfalse$ simply denotes the fact that one has either some
credit or some debt.  (This is exactly the interpretation of
the formula $\neg\mtrue$, a collapse induced by the fact that
$e$ and $\infty$ coincide in the Abelian group model.)  Now
consider the formula $\mneg C$. We have:
\[
\sat{\pounds m}{\mneg C} \;\Leftrightarrow\; \notsat{\inv\pounds m}{C} \;\Leftrightarrow\; -m < 5 \;\Leftrightarrow\; m > -5
\]
So $\mneg C$ denotes the fact that one's debt, if any, is
strictly less than the price of a pack of cigarettes. As for
the multiplicative disjunction, $C \mor W$, we have:
\[\begin{array}{rcl}
&& \sat{\pounds m}{C \mor W} \\
& \Leftrightarrow & \forall m_1,m_2 \in \integers.\;
\inv\pounds m = \pounds m_1 + \pounds m_2 \mbox{ implies } \sat{\inv\pounds m_1}{C} \mbox{ or } \sat{\inv\pounds m_2}{W} \\
 & \Leftrightarrow & \forall m_1,m_2 \in \integers.\;
\inv m = m_1 + m_2 \mbox{ implies } \inv m_1 \geq 5 \mbox{ or } \inv m_2 \geq 20 \\
 & \Leftrightarrow & \forall m_1,m_2 \in \integers.\;
m + m_1 + m_2 = 0 \mbox{ implies } m_1 \leq \inv 5 \mbox{ or } m_2 \leq \inv 20 \\
 & \Leftrightarrow & \forall m_1,m_2 \in \integers.\;
(m + m_1 + m_2 = 0 \mbox{ and } m_1 > \inv 5) \mbox{ implies } m_2 \leq \inv 20 \\
& \Leftrightarrow & m \geq 24
\end{array}\]
It is not immediately obvious how to read this formula
informally. However, observing that $C \mor W$ is semantically
equivalent to $\mneg C \wand W$ and to $\mneg W \wand C$, the
meaning becomes perfectly clear: if one spends strictly less than the
price of a pack of cigarettes, then one will still have enough
money to buy whisky, and vice versa.
\end{example}

We remark that, in fact, there is a logic in the relevantist
mould, called \emph{Abelian logic}, whose models are exactly the lattice-ordered Abelian
groups~\cite{Meyer-Slaney:89}.

\begin{proposition}[Effect algebras as $\CBI$-models]
\label{prop:effect_algebra_models} \emph{Effect algebras},
which arise in the mathematical foundations of
quantum-mechanical systems~\cite{Foulis-Bennett:94}, are
exactly $\CBI$-models $\cbimodel$ such that $\circ$ is a
partial function and $\infty$ is nonextensible (i.e.\ $x \circ
\infty$ is undefined for all $x \neq e$).
\end{proposition}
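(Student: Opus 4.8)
The plan is to show that the two classes coincide by unwinding both definitions and matching their axioms one by one; no real machinery is needed beyond Proposition~\ref{prop:cbimodel_properties}. Recall that an \emph{effect algebra} is a structure $(E,\oplus,0,1)$ in which $\oplus$ is a partial binary operation on $E$ such that: (i) if $b\oplus a$ is defined then so is $a\oplus b$, and $a\oplus b=b\oplus a$; (ii) if $(a\oplus b)\oplus c$ is defined then so is $a\oplus(b\oplus c)$, and the two are equal (and symmetrically); (iii) for each $a\in E$ there is a \emph{unique} $a'\in E$ with $a\oplus a'=1$; and (iv) if $a\oplus 1$ is defined then $a=0$. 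One checks from these that $a\oplus 0=a$ for all $a$, so that $0$ is a neutral element for $\oplus$.

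For the inclusion of effect algebras in $\CBI$-models, given $(E,\oplus,0,1)$ I would set $R=E$, let $\circ$ be $\oplus$ regarded as a partial function (so $x\circ y=\{x\oplus y\}$ when $x\oplus y$ is defined and $x\circ y=\emptyset$ otherwise), and take $e=0$, $\infty=1$, $\inv x=x'$. Then $\monoid$ is a $\BBI$-model: commutativity and $r\circ e=\{r\}$ follow from (i) and from $a\oplus 0=a$, while associativity of $\circ$ in the sense of Definition~\ref{defn:BBI_model} is exactly axiom (ii), since for a partial function the associativity of the pointwise extension to $\pow{R}\times\pow{R}$ says precisely that $x\circ(y\circ z)$ is defined iff $(x\circ y)\circ z$ is, and then they agree. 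Because $\circ$ is a partial function and the orthosupplement satisfies $x\oplus x'=1$, we get $x\circ\inv x=\{\infty\}$, and (iii) says $\inv x$ is the unique element with this property; hence $\cbimodel$ is a $\CBI$-model with $\circ$ a partial function. Finally $x\circ\infty=x\oplus 1$ is defined only for $x=0=e$ by (iv), so $\infty$ is nonextensible.

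For the converse, given a $\CBI$-model $\cbimodel$ with $\circ$ a partial function and $\infty$ nonextensible, I would set $\oplus=\circ$, $0=e$, $1=\infty$, $a'=\inv a$. Axioms (i) and (ii) are just commutativity and associativity of $\circ$ read back through the same partial-function/relational correspondence; axiom (iii) is the defining clause of Definition~\ref{defn:CBI_model} for $\inv$, together with the fact that $\circ$ being a partial function forces $x\circ\inv x=\{\infty\}$ (so that $\infty\in x\circ\inv x$ is equivalent to $x\oplus\inv x=1$); and axiom (iv) is immediate, since if $a\oplus 1=a\circ\infty$ is defined then nonextensibility of $\infty$ gives $a=e=0$. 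Thus $(R,\circ,e,\infty)$ is an effect algebra, and the two assignments are patently mutually inverse.

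The content of the argument is therefore purely definitional; the one place to take care is the translation between the relational notion of associativity used for $\BBI$-models and the partial-operation associativity built into the effect-algebra axioms, which coincide on partial functions once one unfolds the pointwise-extension definition. (As a minor bookkeeping point, nonextensibility of $\infty$ forces any such $\CBI$-model with $e=\infty$ to be the one-element model, in agreement with the degenerate effect algebra in which $0=1$.)
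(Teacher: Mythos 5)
Your proof is correct: the paper states this proposition without giving any proof, and the axiom-by-axiom matching you carry out (including the derived identity $a\oplus 0=a$ and the check that partial-operation associativity coincides with the relational associativity of Definition~\ref{defn:BBI_model} once the pointwise extension is unfolded) is exactly the routine verification the authors leave implicit.
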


The $\CBI$-models constructed in the next examples are all
effect algebras.

\begin{example}[Languages]
\label{ex:reg_langs} Let $\Sigma$ be an alphabet and let
$\langs{\Sigma}$ be any set of languages over $\Sigma$ that
is closed under union and complement and contains the empty
language $\epsilon$ (e.g., the set of regular languages over $\Sigma$). Write $\Sigma^*$ for the set of all words over $\Sigma$, and note that $\Sigma^* \in \langs{\Sigma}$. Let
$L_1 + L_2$ be the union of disjoint languages $L_1$ and $L_2$,
with $L_1 + L_2$ undefined if $L_1$ and $L_2$ are not disjoint.
Clearly $\langle\langs{\Sigma} , + , \epsilon \rangle$ is a
partial commutative monoid.  Furthermore, for any language $L$,
its complement $\overline{L} = \Sigma^* \setminus L$ is the
unique language such that $L + \overline{L} = \Sigma^*$. Thus
$\langle \langs{\Sigma} , + , \epsilon, \overline{\;\cdot\;},
\Sigma^* \rangle$ is a $\CBI$-model. To see that it is also
an effect algebra, just notice that $+$ is a partial function
and $\Sigma^*$ is nonextensible because $\Sigma^* + L$ is undefined
for any $L \neq \epsilon$.
\end{example}

\begin{example}[Action communication]
Let $A$ be any set of objects (to be understood as CCS-style
``actions''~\cite{Milner:89}), define the set $\overline{A} = \{\overline{a}
\mid a \in A\}$ to be disjoint from $A$, and let elements
$0,\tau \not\in A \cup \overline{A}$, whence we write $B
\defeq A \cup \overline{A} \cup \{0,\tau\}$.  We extend the operation
$\overline{\,\cdot\,}$ to $B \rightarrow B$ by $\overline{0}
\defeq \tau$ and $\overline{\overline{a}}
\defeq a$, and define a partial binary operation \mbox{$\cdot\mid\cdot$} with
type $B \times B \partialfn B$ as follows:
\[\begin{array}{ccl}
b \mid c & \defeq & \left\{\begin{array}{ll}
b & \mbox{if $c = 0$} \\
\tau & \mbox{if $c = \overline{b}$} \\
\mbox{undefined} & \mbox{otherwise}
\end{array}\right.
\end{array}\]
The operation \mbox{$\cdot\mid\cdot$} models a very simplistic
version of communication between actions: communication with
the empty action $0$ has no effect, communication between a
pair of dual actions $b$ and $\overline{b}$ (which may be read,
e.g., as ``send $b$'' and ``receive $b$'') results in the
``successful communication'' action $\tau$, and all other
communications are disallowed. It is easy to check that
$\langle B , \cdot | \cdot, 0\rangle$ is a partial commutative
monoid. Furthermore, for any $b \in B$, we clearly have
$\overline{b}$ the unique element with $b\mid \overline{b} =
\tau$. Thus \mbox{$\langle B , \cdot\mid\cdot\; , 0 ,
\overline{\,\cdot\,} , \tau \rangle$} is a $\CBI$-model.
Furthermore, it is clearly an effect algebra, because
\mbox{$\cdot\mid\cdot$} is a partial function and $\tau$ is
nonextensible.
\end{example}

\begin{example}[Generalised heaps]
\label{ex:gen_heaps} A natural question is whether the
\emph{heap models} of $\BBI$ employed in separation logic (see
e.g.~\cite{Calcagno-OHearn-Yang:07}) are also $\CBI$-models.
Consider the basic heap model given by the partial commutative
monoid $\langle H,\circ,e\rangle$, where $H\defeq \nat
\partialfn_{\mathrm{fin}} \mathbb{Z}$ is the set of \emph{heaps} (i.e.\
partial functions mapping finitely many natural numbers to
integers), $h_1 \circ h_2$ is the union of partial functions
$h_1$ and $h_2$ when their domains are disjoint (and undefined
otherwise), and $e$ is the function with empty domain.
Unfortunately, no choice of $\infty$ for $\langle
H,\circ,e\rangle$ gives rise to a $\CBI$-model.

However, it is possible to embed the set of heaps $H$ above into a
more general structure $\langle H',\circ',e'\rangle$, where $H'
\defeq \nat \rightarrow \pow{\integers}$ is the set of
(total) functions from natural numbers to \emph{sets} of
integers (we may additionally require that $h(n) \neq
\emptyset$ for finitely or cofinitely many $n \in \nat$). Then
$\circ': H' \times H'
\partialfn H'$ is defined by: if $h_1(n)$ and $h_2(n)$ are
disjoint for all $n$, then $(h_1 \circ' h_2)(n) = h_1(n) \cup
h_2(n)$, otherwise $h_1 \circ' h_2$ is undefined. The unit $e'$
is defined by $e'(n) = \emptyset$ for all $n \in \nat$.  A
$\CBI$-model $\langle H',\circ',e',\inv,\infty\rangle$ is then obtained by defining $\infty(n) =
\integers$ and $(\inv h)(n) = \integers \setminus h(n)$ for all
$n \in \nat$.

We note that this model behaves quite differently than $\langle
H, \circ, e\rangle$: generalised heaps with overlapping domains
can be composed providing that their \emph{contents} do not
overlap for any point in the domain.  We consider the
interpretation of some ``separation logic-like'' formulas
inside this model. Let $X \subseteq \integers$ be some fixed
set of integers and define the atomic formula $4 \mapsto X$ by
the following:
\[
\sat{h}{4 \mapsto X} \;\;\Leftrightarrow\;\; h(4) = X
\]
i.e., the formula $4 \mapsto X$ denotes those generalised heaps
with contents exactly $X$ at location $4$.  This can be seen as
the set-based analogue of the $\mapsto$ predicate in standard
separation logic~\cite{Reynolds:02} (with fixed arguments for
simplicity, as we are working in a propositional setting). Then
we have, for example:
\[\begin{array}{rcl}
\sat{h}{(4 \mapsto X) * \true} & \Leftrightarrow & h = h_1 \circ' h_2 \mbox{ and } \sat{h_1}{4 \mapsto X} \mbox{ and } \sat{h_2}{\true} \\
& \Leftrightarrow & (\forall n \in \nat.\ h(n) = h_1(n) \cup h_2(n)) \mbox{ and } h_1(4) = X \\
& \Leftrightarrow & X \subseteq h(4)
\end{array}\]
so that the formula $(4 \mapsto X) * \true$ denotes the general
heaps which contain every element of $X$ at location $4$.  If
we then take the multiplicative negation of this formula, we
have, using the above:
\[\begin{array}{rcl}
\sat{h}{\mneg((4 \mapsto X) * \true)} & \Leftrightarrow & \notsat{\inv h}{(4 \mapsto X) * \true} \\
& \Leftrightarrow & X \not\subseteq (\inv h)(4) \\
& \Leftrightarrow & X \not\subseteq (\integers \setminus h(4)) \\
& \Leftrightarrow & \exists x \in X.\ x \in h(4)
\end{array}\]
i.e., this formula denotes the general heaps containing
\emph{some} element from $X$ at location $4$.  So, in this
case, the multiplicative negation has the effect of changing a
universal quantifier to an existential one.  The meaning of
multiplicative disjunctions, however, is typically very
complicated.  For example, picking a second set $Y \subseteq
\integers$ and defining the atomic formula $4 \mapsto Y$ in the
same way as $4 \mapsto X$, we have, using previous derivations:
\[\begin{array}{cl}
& \sat{h}{((4 \mapsto X) * \true) \mor ((4 \mapsto Y) * \true)} \\
\Leftrightarrow & \sat{h}{\mneg((4 \mapsto X) * \true) \wand ((4 \mapsto Y) * \true)} \;\;\mbox{ (by Lemma~\ref{lem:cbieq})}\\
\Leftrightarrow & \forall h'.\ (h \circ' h' \mbox{ defined and } \sat{h'}{\mneg((4 \mapsto X) * \true)})
\mbox{ implies } \sat{h \circ h'}{(4 \mapsto Y) * \true} \\
\Leftrightarrow & \forall h'.\ (h \circ' h' \mbox{ defined and } \exists x \in X.\ x \in h'(4))
\mbox{ implies } Y \subseteq h(4) \cup h'(4) \\
\Leftrightarrow & X \subseteq h(4) \mbox{ or } Y \subseteq h(4) \mbox{ or }
\exists z \in \integers.\  (X \setminus h(4)) = (Y \setminus h(4)) = \{z\}
\end{array}\]
so that this disjunction denotes those general heaps that
either contain one of $X$ and $Y$, or are missing a single
common element from $X$ and $Y$, at location $4$. We give a
short proof of the last equivalence above, since it is not
especially obvious.

\paragraph{($\Leftarrow$)}
If $X \subseteq h(4)$ then the required implication holds
vacuously because $x \in X \cap h'(4)$ implies $h \circ' h'$ is
undefined.  If $Y \subseteq h(4)$ then the implication also
holds trivially because the consequent is immediately true.
Lastly, suppose $X \setminus h(4) = Y \setminus h(4) = \{z\}$.
Let $h'$ be any heap with $h \circ' h'$ defined and $x \in
h'(4)$ for some $x \in X$.  We must then have $x = z$ because
$h(4)$ and $h'(4)$ must be disjoint and $X \setminus h(4) =
\{z\}$. Then, since also $Y \setminus h(4) = \{z\}$,we have $Y
\subseteq h(4) \cup \{z\} \subseteq h(4) \cup h'(4)$ as
required.

\paragraph{($\Rightarrow$)} If $X \subseteq h(4)$ or $Y
\subseteq h(4)$ we are trivially done. Now suppose that $X
\not\subseteq h(4)$ and $Y \not\subseteq h(4)$, so that there
are $x \in X \setminus h(4)$ and $y \in Y \setminus h(4)$. Let
$h'$ be given by $h'(4) = \{x\}$ and $h'(n) = \emptyset$ for
all other $n$, and note that $h \circ' h'$ is defined.  By
assumption, we have $Y \subseteq h(4) \cup h'(4) = h(4) \cup
\{x\}$, and thus $Y \setminus h(4) = \{x\}$ because $Y
\setminus h(4)$ is nonempty by assumption. It follows that $Y
\setminus h(4) = \{x\}$ for \emph{any} $x \in X \setminus
h(4)$, and so also $X \setminus h(4) = \{x\}$, as required.
\end{example}

We note that a number of general categorical constructions for
effect algebras have recently appeared in~\cite{Jacobs:09}.

Our next examples differ both from Abelian groups in that $e$
and $\infty$ are non-identical, and from effect algebras in
that $\infty$ is extensible.  Indeed, as shown by our
Example~\ref{ex:int_modulo} below, fixing the monoidal
structure of a $\CBI$-model does not in general determine the
choice of $\infty$.

\begin{example}[Bit arithmetic]
\label{ex:bit_arith} Let $n \in \nat$ and observe that an
$n$-bit binary number can be represented as an element of the
set $\{0,1\}^n$.  Let XOR and NOT be the usual logical
operations on binary numbers.  Then the following is a
$\CBI$-model:
\[\langle\{0,1\}^n, \mathrm{XOR}, \{0\}^n, \mathrm{NOT}, \{1\}^n\rangle\]
In this model, the resources $e$ and $\infty$ are the $n$-bit
representations of $0$ and $2^n-1$ respectively.
\end{example}

\begin{example}[Integer modulo arithmetic]
\label{ex:int_modulo} Consider the monoid $\langle
\mathbb{Z}_n,+_n,0 \rangle$, where $\mathbb{Z}_n$ is the set of
integers modulo $n$, and $+_n$ is addition modulo $n$. We can
form a $\CBI$-model from this monoid by choosing, for any
$m\in\mathbb{Z}_n$, $\infty \defeq m$ and $\inv k \defeq m -_n
k$ (where $-_n$ is subtraction modulo $n$).
\end{example}

\begin{example}[Syntactic models]
\label{ex:syntactic_model} Given an arbitrary monoid $\monoid$,
we give a syntactic construction to generate a $\CBI$-model
$\langle R',\circ',e',-',\infty'\rangle$. Consider the set $T$
of terms given by the grammar:
\[t\in T ::= r\in R \mid \infty \mid t \cdot t \mid - t\]
and let $\approx$ be the least congruence such that:
\[\begin{array}{l}
r_1 \circ r_2 = r \mbox{ implies } r_1 \cdot r_2 \approx r; \\
t_1 \cdot t_2 \approx t_2 \cdot t_1; \\
t_1 \cdot (t_2 \cdot t_3) \approx (t_1\cdot t_2) \cdot t_3; \\
\inv\inv t \approx t;  \\
t \cdot (\inv t) \approx \infty; \\
t_1\circ t_2 \approx \infty \mbox{ implies } t_1\approx \inv t_2.
\end{array}\]
We write $T{/}{\approx}$ for the quotient of
$T$ by the relation $\approx$, and $[t]$ for the equivalence
class of $t$. The required $\CBI$-model $\langle
R',\circ',e',\inv',\infty'\rangle$ is obtained by defining
$R'\defeq T{/}{\approx}$, $\circ'([t_1],[t_2])\defeq [t_1 \circ
t_2]$, $e'\defeq [e]$, $\inv'(t)\defeq [\inv t]$, and $\infty'\defeq
[\infty]$.
\end{example}

We now consider some general ways of composing $\CBI$-models.

\begin{lemma}[Disjoint union of $\CBI$-models]
\label{lem:disjoint_union_model} Let $\langle R_1, \circ_1,
e_1, \inv_1, \infty_1\rangle$ and $\langle R_2, \circ_2, e_2,$
$\inv_2, \infty_2\rangle$ be $\CBI$-models such that $R_1$ and
$R_2$ are disjoint and either $\infty_1 = e_1$ and $\infty_2 =
e_2$ both hold or $\infty_1,\infty_2$ are both nonextensible, i.e.\ $\infty_1
\circ_1 x = \emptyset$ for all $x \neq e_1$ and $\infty_2
\circ_2 x = \emptyset$ for all $x \neq e_2$.

Now let $R$ be the set obtained by identifying $e_1$ with $e_2$ and
$\infty_1$ with $\infty_2$ in $R_1 \cup R_2$, and write \mbox{$e =
e_1 = e_2$} and \mbox{$\infty = \infty_1 = \infty_2$} for the elements
obtained by this identification.  Define $\inv = \inv_1 \cup
\inv_2$ and $\circ = \circ_1 \cup \circ_2$.  Then $\cbimodel$
is a $\CBI$-model.
\end{lemma}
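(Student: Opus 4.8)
The plan is to verify the two clauses of Definition~\ref{defn:CBI_model} directly: that $\monoid$ is a $\BBI$-model (Definition~\ref{defn:BBI_model}), and that for every $x \in R$ the element $\inv x$ is the unique $y \in R$ with $\infty \in x \circ y$. Throughout one argues by cases on the two alternatives in the hypothesis (``$\infty_i = e_i$'' versus ``$\infty_i$ nonextensible''), since they behave differently with respect to associativity. It is convenient to write $R_i^\circ = R_i \setminus \{e,\infty\}$, so that $R = R_1^\circ \cup R_2^\circ \cup \{e,\infty\}$ with $R_1^\circ \cap R_2^\circ = \emptyset$; note that by Proposition~\ref{prop:cbimodel_properties}, parts~\ref{propitem:doubleinv} and~\ref{propitem:infty}, each $\inv_i$ swaps $e$ and $\infty$ and restricts to an involution of $R_i^\circ$.

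First I would check that $\circ$ and $\inv$ descend to well-defined operations of the right type on $R$. Since $R_1$ and $R_2$ overlap, after the identification, only in $\{e,\infty\}$, and both $\inv_1,\inv_2$ send $e \mapsto \infty$ and $\infty \mapsto e$, the only thing to check is that $\circ_1$ and $\circ_2$ agree on the pairs drawn from $\{e,\infty\}^2$: one has $e \circ_i e = \{e\}$, $e \circ_i \infty = \infty \circ_i e = \{\infty\}$, and $\infty \circ_i \infty$ is $\{e\}$ in the ``$\infty = e$'' case and $\emptyset$ in the nonextensible case (where $\infty \ne e$), so the two components always agree. Commutativity and the unit law $r \circ e = \{r\}$ then follow componentwise when both arguments lie in one $R_i$, and trivially otherwise --- $x \circ y = \emptyset = y \circ x$ when $x,y$ lie in distinct $R_j^\circ$, and $e$ lies in the component of $x$, so $x \circ e = x \circ_i e = \{x\}$.

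The substantial step, which I expect to be the main obstacle, is associativity: $x \circ (y \circ z) = (x \circ y) \circ z$. If $x,y,z$ all lie in a single $R_i$ this is inherited from $\circ_i$ (the pointwise products stay inside $R_i$), so assume some element of $\{x,y,z\}$ lies in $R_1^\circ$ and some in $R_2^\circ$; the claim is that both sides are then $\emptyset$. This rests on two observations: (i)~$u \circ v = \emptyset$ whenever $u,v$ lie in distinct $R_j^\circ$; and (ii)~in the nonextensible case $\infty \circ_i w = \emptyset$ for $w \ne e$, while by Proposition~\ref{prop:cbimodel_properties}, part~\ref{propitem:cancel}, $e \in u \circ_i v$ forces $\inv u \in v \circ_i \infty$ and hence $u = v = e$ (so $e$ occurs in a $\circ_i$-product only as $e \circ_i e$); the ``$\infty = e$'' alternative is handled by the same bookkeeping with $\infty = e$ substituted for nonextensibility. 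The proof then proceeds by a finite case analysis on which of $x,y,z$ lie in $R_1^\circ$, in $R_2^\circ$, or in $\{e,\infty\}$: for instance, if $z \in R_2^\circ$ then $x \circ y$ lies in the component of $x$ and each of its elements either lies in the side of $\{x\}$'s component outside $\{e,\infty\}$ (so annihilates against $z$ by (i)) or equals $\infty$ (so annihilates by nonextensibility), giving $(x\circ y)\circ z = \emptyset$, while $y \circ z$ is either empty by (i) or a singleton for which $x \circ (y \circ z)$ is again a product of the shape in (i), hence $\emptyset$; the remaining placements are symmetric. Each sub-case is routine, but there are enough of them that this is where care is needed.

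Finally, for the involution clause, fix $x \in R$ and choose $i$ with $x \in R_i$. Then $\inv x = \inv_i x \in R_i$ and $\infty \in x \circ_i \inv_i x = x \circ \inv x$ by the $\CBI$-model axiom for $\langle R_i, \circ_i, e, \inv_i, \infty\rangle$. For uniqueness, suppose $\infty \in x \circ y$ for some $y \in R$; by (i), $y$ cannot lie in the $R_j^\circ$ opposite to $x$ (a short additional check disposes of the sub-case $x \in \{e,\infty\}$, using $e,\infty \in R_i$ for either $i$), so $y \in R_i$ and $\infty \in x \circ_i y$, whence $y = \inv_i x = \inv x$ by uniqueness in $\langle R_i, \circ_i, e, \inv_i, \infty\rangle$. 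Together these establish both clauses of Definition~\ref{defn:CBI_model}, so $\cbimodel$ is a $\CBI$-model.
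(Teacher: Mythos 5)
Your overall strategy is the same as the paper's: verify the $\BBI$-model conditions directly, with associativity handled by a case split on which components $x,y,z$ lie in, and then check the involution clause componentwise (your care over the agreement of $\circ_1$ and $\circ_2$ on $\{e,\infty\}^2$ and over the well-definedness of $\inv$ is a nice touch). In the nonextensible case your argument is sound, and is in fact more explicit than the paper's own proof on the one point that really matters: your observation~(ii), that $e \in u \circ_i v$ forces $u = v = e$ (via part~\ref{propitem:cancel} of Proposition~\ref{prop:cbimodel_properties} together with nonextensibility of $\infty_i$), is exactly what guarantees that a product of non-unit elements inside one component never yields $e$ and hence never recombines non-trivially with an element of the other component. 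The paper's ``none of the above'' case relies on this silently and never states it.

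The gap is the sentence ``the $\infty = e$ alternative is handled by the same bookkeeping with $\infty = e$ substituted for nonextensibility.'' It is not: when $\infty_i = e_i$, observation~(ii) fails outright, since $e = \infty \in u \circ_i \inv_i u$ for \emph{every} $u \in R_i$, so $e$ does arise in non-trivial products and the annihilation argument collapses. Indeed the conclusion itself fails under the first alternative: take $R_1$ and $R_2$ to be two disjoint copies of the Abelian-group model $\langle \integers, +, 0, \inv, 0\rangle$ glued at $0$; writing $1'$ for the element $1$ of the second copy, one gets $(1 \circ (-1)) \circ 1' = 0 \circ 1' = \{1'\}$ while $1 \circ ((-1) \circ 1') = 1 \circ \emptyset = \emptyset$. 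So no amount of bookkeeping can close this case. You should not feel singled out here --- the paper's own proof has the same lacuna, as its ``none of the above'' case only observes that cross-component products are empty and ignores the possibility of $e$ appearing inside a single-component product --- but a careful write-up of your proof should either restrict to the nonextensibility hypothesis (the one actually used in Example~\ref{ex:deny_guarantee}) or flag that the $\infty_i = e_i$ alternative is not provable as stated.
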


\begin{proof}
We start by observing that $\inv$ is indeed a function from $R$
to $R$ because $R_1$ and $R_2$ are assumed disjoint and, using
Proposition~\ref{prop:cbimodel_properties}, $\inv_1 e_1 =
\infty_1 = \infty_2 = \inv_2 e_2$, and similarly $\inv_1 \infty_1 =
\inv_2 \infty_2$. Thus $\inv e$ and $\inv \infty$ are well-defined.

We need to check that $\monoid$ is a $\BBI$-model.  The
commutativity of $\circ$ is immediate by the commutativity of
$\circ_1$ and $\circ_2$.  Similarly, $x \circ e =
\{x\}$ for all $x \in R$ because $e = e_1$ is a unit of
$\circ_1$ and $e = e_2$ is a unit of $\circ_2$.  To see that
$\circ$ is associative, we let $x,y,z \in R$ and show that $x
\circ (y \circ z) = (x \circ y) \circ z$ by case analysis.

\paragraph{\em Case: at least one of $x,y,z$ is $e$.} We are immediately done by the fact that
$e$ is a unit for $\circ$.

\paragraph{\em Case: at least one of $x,y,z$ is $\infty$.} We may assume
that none of $x,y,z$ is $e$, since these possibilities are
covered by the previous case, and so it follows by assumption
that $\infty_1$ and $\infty_2$ are nonextensible.  Consequently
$\infty \circ x = \emptyset$ for all $x \neq e$, so $x \circ (y
\circ z) = \emptyset = (x \circ y) \circ z$.

\paragraph{\em Case: all of $x,y,z \in R_1$.} We may assume by the previous cases that
none of $x,y,z$ is either $e$ or $\infty$, so we have $x \circ
(y \circ z) = x \circ_1 (y \circ_1 z)$ and $(x \circ y) \circ z
= (x \circ_1 y) \circ_1 z$, whence we are done by the
associativity of $\circ_1$.

\paragraph{\em Case: all of $x,y,z \in R_2$.}  Similar to the
case above.

\paragraph{\em Case: none of the above.}  We have $x \circ (y
\circ z) = \emptyset = (x \circ y) \circ z$ since $x \circ y =
\emptyset$ whenever $x \in R_1$, $y \in R_2$ and neither $x$
nor $y$ is $e$ or $\infty$.  This covers all the cases, so
$\circ$ is indeed associative.

Now to see that $\cbimodel$ is a $\CBI$-model, given $x \in R$
we need to show that $\inv x$ is the unique $y \in R$ such that
$\infty \in x \circ y$.  It is easily verified that $\infty \in
x \circ \inv x$ for all $x \in R$.  Now suppose that $\infty
\in x \circ y = x \circ_1 y \cup x \circ_2 y$ for some $y \in
R$. If $\infty \in x \circ_1 y$ then $y = \inv_1 x = \inv x$ as
required. Similarly, if $\infty \in x \circ_2 y$ then $y =
\inv_2 x = \inv x$.
\end{proof}

We remark that the restrictions on $\infty_1$ and $\infty_2$ in
Lemma~\ref{lem:disjoint_union_model} are needed in order to
ensure the associativity of $\circ$.  For example, if $x \in
R_1$ and $y \in R_2$ and $x,y \neq e$ then $(x \circ y) \circ
\inv y = \emptyset \circ \inv y = \emptyset$ while $x \circ (y
\circ \inv y) \supseteq x \circ \infty = x \circ_1 \infty_1$,
which is not empty in general.

\newcommand{\setprod}[1]{\mathord{\textstyle\bigotimes_{a \in A}}{#1}}
\newcommand{\elemprod}[1]{\mathord{\otimes_{a \in A}}\,{#1}}

\begin{lemma}[Generalised Cartesian product of $\CBI$-models]
\label{lem:product_model} Let $A$ be an ordered set and write
$\elemprod{x_a}$ for an ordered tuple indexed by the elements
of $A$.  Suppose that $M_a = \langle R_a, \circ_a, e_a, \inv_a,
\infty_a\rangle$ is a $\CBI$-model for each $a \in A$.  Then
$\langle R, \circ, \elemprod{e_a}, \inv,
\elemprod{\infty_a}\rangle$ is a $\CBI$-model, where $R$
denotes the $A$-ordered Cartesian product of the sets $R_a$,
and the operations $\circ: R \times R \rightarrow \pow{R}$ and
$\inv: R \rightarrow R$ are defined as follows:
\[\begin{array}{rcl}
\inv(\elemprod{x_a}) & = & \elemprod{(\inv_a x_a)} \\
\elemprod{x_a} \circ \elemprod{y_a} & = & \bigcup_{a \in A, w_a \in x_a \circ_a y_a} \{\elemprod{w_a}\}
\end{array}\]
\end{lemma}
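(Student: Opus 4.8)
The plan is to check the two defining requirements — that $\langle R,\circ,\elemprod{e_a}\rangle$ is a $\BBI$-model (Definition~\ref{defn:BBI_model}) and that it satisfies the $\CBI$-model axiom (Definition~\ref{defn:CBI_model}) — coordinatewise, reducing each to the corresponding property of the factors $M_a$. Spelling out the definition, $\elemprod{x_a}\circ\elemprod{y_a}$ is the set of all tuples $\elemprod{w_a}$ with $w_a\in x_a\circ_a y_a$ for every $a\in A$. Commutativity of $\circ$ is then immediate from commutativity of each $\circ_a$, and the unit law holds since $\elemprod{x_a}\circ\elemprod{e_a}$ consists of the tuples $\elemprod{w_a}$ with $w_a\in x_a\circ_a e_a=\{x_a\}$, i.e.\ just $\{\elemprod{x_a}\}$. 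Similarly $\inv$ is manifestly a total function $R\to R$.

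The one slightly delicate point — and the only place I expect any obstacle — is associativity, once $\circ$ is extended pointwise to $\pow{R}\times\pow{R}$ as in Definition~\ref{defn:BBI_model}. Unwinding, a tuple $\elemprod{u_a}$ lies in $(\elemprod{x_a}\circ\elemprod{y_a})\circ\elemprod{z_a}$ iff there is a tuple $\elemprod{w_a}$ with $w_a\in x_a\circ_a y_a$ and $u_a\in w_a\circ_a z_a$ for all $a$; since a witness at coordinate $a$ may be chosen independently of the witnesses at the other coordinates (assembling them into a tuple, which for infinite $A$ is a mild appeal to choice), this is equivalent to $\forall a.\,\exists w_a.\,(w_a\in x_a\circ_a y_a \wedge u_a\in w_a\circ_a z_a)$, i.e.\ to $\forall a.\,u_a\in(x_a\circ_a y_a)\circ_a z_a$. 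By associativity of each $\circ_a$ this equals $\forall a.\,u_a\in x_a\circ_a(y_a\circ_a z_a)$, and running the same unwinding in reverse yields $\elemprod{u_a}\in\elemprod{x_a}\circ(\elemprod{y_a}\circ\elemprod{z_a})$.

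Finally I would verify the $\CBI$-model axiom: for each $\elemprod{x_a}\in R$, the element $\elemprod{(\inv_a x_a)}$ should be the unique $\elemprod{y_a}$ with $\elemprod{\infty_a}\in\elemprod{x_a}\circ\elemprod{y_a}$. By the description of $\circ$ this membership holds iff $\infty_a\in x_a\circ_a y_a$ for every $a$, which by the $\CBI$-model axiom for each $M_a$ is equivalent to $y_a=\inv_a x_a$ for every $a$, i.e.\ to $\elemprod{y_a}=\elemprod{(\inv_a x_a)}$. This simultaneously exhibits $\elemprod{(\inv_a x_a)}$ as a solution and shows it is the only one, so $\langle R,\circ,\elemprod{e_a},\inv,\elemprod{\infty_a}\rangle$ is a $\CBI$-model. (One could alternatively route the well-definedness of $\inv$ and $\elemprod{\infty_a}$, and the axiom itself, through Proposition~\ref{prop:cbimodel_properties}, but the direct coordinatewise argument is the most economical.)
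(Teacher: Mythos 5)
Your proposal is correct and follows essentially the same route as the paper: a coordinatewise verification of commutativity, the unit law, associativity (via unwinding the pointwise extension of $\circ$ to $\pow{R}\times\pow{R}$), and then the $\CBI$-model axiom, reducing existence and uniqueness of $\inv(\elemprod{x_a})$ to the corresponding property in each factor $M_a$. Your explicit remark that the associativity witnesses may be assembled independently per coordinate is exactly the step the paper performs silently when it rewrites the iterated union $\bigcup_{w_a\in x_a\circ_a y_a}\bigl(\bigcup_{v_a\in w_a\circ_a z_a}\{\elemprod{v_a}\}\bigr)$ as $\bigcup_{v_a\in(x_a\circ_a y_a)\circ_a z_a}\{\elemprod{v_a}\}$.
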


\begin{proof}
In the following, all uses of $\bigcup$ notation should be understood as ranging over all $a \in A$ (we suppress the explicit subscript for legibility).
First, we need to check that $\langle R, \circ, \elemprod{e_a}
\rangle$ is a $\BBI$-model.  The commutativity of $\circ$
follows immediately from its definition and the commutativity
of each $\circ_a$. To see that $\elemprod{e_a}$ is a unit for
$\circ$ we observe:
\[\begin{array}{l}
\elemprod{x_a} \circ \elemprod{e_a} = \bigcup_{w_a \in x_a \circ_a e_a} \{\elemprod{w_a}\}
= \bigcup_{w_a \in \{x_a\}} \{\elemprod{w_a}\} = \{\elemprod{x_a}\}
\end{array}\]
Next we need to check that $\circ$ is associative.  Using the
the standard extension of $\circ$ to $\pow{R} \times \pow{R}
\rightarrow \pow{R}$ we have:
\[\begin{array}{rcl}
(\elemprod{x_a} \circ \elemprod{y_a}) \circ \elemprod{z_a} & = &
(\bigcup_{w_a \in x_a \circ_a y_a} \{\elemprod{w_a}\}) \circ \elemprod{z_a} \\
& = & \bigcup_{w_a \in x_a \circ_a y_a} (\elemprod{w_a} \circ \elemprod{z_a}) \\
& = & \bigcup_{w_a \in x_a \circ_a y_a} (\bigcup_{v_a \in w_a \circ_a z_a} \{\elemprod{v_a}\}) \\
& = & \bigcup_{v_a \in (x_a \circ_a y_a) \circ_a z_a} \{\elemprod{v_a}\}
\end{array}\]
Similarly, we have:
\[
\elemprod{x_a} \circ (\elemprod{y_a} \circ \elemprod{z_a}) =
\textstyle\bigcup_{v_a \in x_a \circ_a (y_a \circ_a z_a)} \{\elemprod{v_a}\}
\]
whence $(\elemprod{x_a} \circ \elemprod{y_a}) \circ
\elemprod{z_a} = \elemprod{x_a} \circ (\elemprod{y_a} \circ
\elemprod{z_a})$ as required by the associativity of each
$\circ_a$.

Now, to see that $\langle R, \circ, \elemprod{e_a}, \inv,
\elemprod{\infty_a}\rangle$ is a $\CBI$-model, it just remains
to check that the required conditions on $\inv$ and
$\elemprod{\infty_a}$ hold.  We have by definition:
\[\begin{array}{rcl}
\elemprod{x_a} \circ \inv(\elemprod{x_a}) & = & \elemprod{x_a} \circ \elemprod{(\inv_a x_a)} \\
& = & \bigcup_{w_a \in x_a \circ_a (\inv_a x_a)} \{\elemprod{w_a}\}
\end{array}\]
Then, since $\infty_a \in x_a \circ_a (\inv_a x_a)$ for all $a
\in A$ we have $\elemprod{\infty_a} \in \elemprod{x_a} \circ
\inv(\elemprod{x_a})$ as required.  To see that
$\inv(\elemprod{x_a})$ is the unique element of $R$ satisfying
this condition, suppose $\elemprod{\infty_a} \in \elemprod{x_a}
\circ \elemprod{y_a}$.  Then for each $a \in A$ we would have
$\infty_a \in x_a \circ_a y_a$, which implies $y_a =
\inv_a x_a$ for each $a \in A$ and thus $\elemprod{y_a} =
\elemprod{\infty_a}$ as required.  This completes the
verification.
\end{proof}

We remark that, as well as standard Cartesian product
constructions, Lemma~\ref{lem:product_model} gives a canonical
way of extending $\CBI$-models to heap-like structures mapping
elements of an ordered set $A$ into model values by taking
$M_a$ to be the same $\CBI$-model for each $a \in A$. For
example, our ``money'' model of Example~\ref{ex:money} extends
via Lemma~\ref{lem:product_model} to a model of maps from a set of identifiers
to the integers, which can be understood as financial ``asset
portfolios'' mapping identifiers (commodities) to integers
(assets or liabilities).  Such a model might potentially form
the basis of a Hoare logic for financial transactions in the
same way that the heap model of $\BBI$ underpins separation
logic.  The following example shows another application.

\begin{example}[Deny-guarantee model]
\label{ex:deny_guarantee} The \emph{deny-guarantee} permissions
employed by Dodds et al.~\cite{Dodds-etal:09} are elements of
$\mathrm{PermDG} = \mathrm{Actions} \rightarrow
\mathrm{FractionDG}$, where $\mathrm{Actions}$ is a set of
``actions'' and: \[\mathrm{FractionDG} = \{(deny,\pi) \mid \pi
\in (0,1)\} \cup \{(guar,\pi) \mid \pi \in (0,1)\} \cup
\{0,1\}\]
A partial binary function $\oplus$ is defined on
$\mathrm{FractionDG}$ by:
\[\begin{array}{rcl}
0 \oplus x = x \oplus 0 & = & x \\
(deny,\pi_1) \oplus (deny,\pi_2)& = & \left\{
\begin{array}{ll}
(deny,\pi_1 + \pi_2) & \mbox{if $\pi_1 + \pi_2 < 1$} \\
1 & \mbox{if $\pi_1 + \pi_2 = 1$} \\
\mbox{undefined} & \mbox{otherwise}
\end{array}\right.         \\
(guar,\pi_1) \oplus (guar,\pi_2) & = & \left\{
\begin{array}{ll}
(guar,\pi_1 + \pi_2) & \mbox{if $\pi_1 + \pi_2 < 1$} \\
1 & \mbox{if $\pi_1 + \pi_2 = 1$} \\
\mbox{undefined} & \mbox{otherwise}
\end{array}\right.        \\
1 \oplus x = x \oplus 1 & = & \mbox{undefined for $x \neq 0$}
\end{array}\]
The operation $\oplus$ is lifted to $\mathrm{PermDG}$ by $(p_1
\oplus p_2)(a) = p_1(a) \oplus p_2(a)$.  Next, define the
involution $\inv$ on $\mathrm{FractionDG}$ by:
\[
\inv 0 = 1 \quad \inv(deny,\pi) = (deny,1-\pi) \quad \inv(guar,\pi) = (guar,1-\pi) \quad \inv 1 = 0
\]
and lift $\inv$ to $\mathrm{PermDG}$ by $(\inv p)(a) = \inv
p(a)$.  Finally, we lift $0$ and $1$ to $\mathrm{PermDG}$ by
$0(a) = 0$ and $1(a) = 1$.

Then $\langle \mathrm{PermDG},\oplus,0,\inv,1\rangle$ is a
$\CBI$-model.  One can check this directly, but we can also
reconstruct the model using our general constructions.  First,
one verifies easily that both the ``deny fragment'' and the
``guarantee fragment'' of $\mathrm{FractionDG}$ given by the tuples:
\[\begin{array}{c}
\langle \{(deny,\pi) \mid \pi \in (0,1)\} \cup
\{0,1\},\oplus,0,\inv,1\rangle \\ \langle \{(guar,\pi) \mid
\pi \in (0,1)\} \cup \{0,1\},\oplus,0,\inv,1\rangle
\end{array}\] are
$\CBI$-models. Noting that $1$ is nonextensible in both models,
we can apply Lemma~\ref{lem:disjoint_union_model} to obtain the
disjoint union of these models, which is exactly
$\langle\mathrm{FractionDG},\oplus,0,\inv,1\rangle$.  By
applying Lemma~\ref{lem:product_model} (taking $A$ to be $\mathrm{Actions}$ and $M_a$ to be
$\langle\mathrm{FractionDG},\oplus,0,\inv,1\rangle$ for all $a
\in \mathrm{Actions}$) we then obtain the $\CBI$-model
$\langle\mathrm{PermDG},\oplus,0,\inv,1\rangle$.
\end{example}

We end this section by addressing the general question of
whether there are embeddings of arbitrary $\BBI$-models into
$\CBI$-models.  This is not trivial for the following reason.
Consider a $\BBI$-model $\monoid$ with $\circ$ a function and
\mbox{$z = x_1 \circ y = x_2 \circ y$} with $x_1 \neq x_2$.  In
any simple extension of this model into a $\CBI$-model $\langle
R',\circ',e',\inv,\infty\rangle$ with $R \subseteq R'$ and $\circ \subseteq \circ'$, we are forced to have both
\mbox{$\inv x_1 \in y \circ' \inv z$} and \mbox{$\inv x_2 \in y
\circ' \inv z$} by the $\CBI$-model conditions (see
Proposition~\ref{prop:cbimodel_properties},
part~\ref{propitem:cancel}), while $\inv x_1 \neq \inv x_2$.
Thus any such extension of the \emph{functional} $\BBI$-model
$\monoid$ into a $\CBI$-model is forced to be
\emph{relational}.  Our construction below shows how a general
embedding  from $\BBI$-models to $\CBI$-models may be obtained,
which can be viewed as being weakly canonical in the sense that
it is an injection.

\begin{proposition}[$\CBI$-extension of $\BBI$-models]
\label{prop:model_extension} Let $\monoid$ be a $\BBI$-model
and define a second, disjoint copy $\overline{R}$ of $R$ by
$\overline{R} \defeq \{\overline{r} \mid r \in R\}$.  Define
$\inv x = \overline{x}$ for all $x \in R$ and $\inv\overline{x}
= x$ for all $x \in \overline{R}$.  Finally, define the binary
relation $\oplus$ over $R \cup \overline{R}$ by the following:
\[\begin{array}{c@{\hspace{1cm}}r@{\hspace{0.4cm}}c@{\hspace{0.4cm}}l}
(\oplus 1) & z \in x \circ y & \Rightarrow & z \in x \oplus y \\
(\oplus 2) & z \in x \circ y & \Rightarrow & \overline{y} \in (x \oplus \overline{z})
\cap (\overline{z} \oplus x)
\end{array}\]
Then $\langle R \cup \overline{R}, \oplus, e , \inv ,
\overline{e} \rangle$ is a $\CBI$-model.  Moreover, the construction of
$\langle R \cup \overline{R}, \oplus, e , \inv ,
\overline{e} \rangle$ from $\monoid$ is injective.
\end{proposition}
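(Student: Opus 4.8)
The plan is to verify directly that $N \defeq \langle R \cup \overline{R}, \oplus, e, \inv, \overline{e}\rangle$ meets Definitions~\ref{defn:BBI_model} and~\ref{defn:CBI_model}, and then to note that $R$, $\circ$ and $e$ can all be read back off $N$. The first step is to record a three-line description of $\oplus$ sorted by how many of its arguments lie in $\overline{R}$: since $(\oplus 1)$ is the only clause touching a product of two unbarred elements, $(\oplus 2)$ the only one touching a mixed product, and neither clause ever places anything into a product of two barred elements, we get
\[\begin{array}{l}
x \oplus y = x \circ y \subseteq R \quad (x,y \in R),\\
x \oplus \overline{z} = \overline{z} \oplus x = \{\overline{y} \mid z \in x \circ y\} \subseteq \overline{R} \quad (x \in R,\ \overline{z} \in \overline{R}),\\
\overline{x} \oplus \overline{y} = \emptyset \quad (\overline{x},\overline{y} \in \overline{R}).
\end{array}\]

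From this table the $\BBI$-model conditions on $\langle R \cup \overline{R},\oplus,e\rangle$ follow. Commutativity is immediate: it reduces to commutativity of $\circ$ on all-unbarred products and is built into the intersection in $(\oplus 2)$ for mixed ones. For the unit law $w \oplus e = \{w\}$: when $w \in R$ this is $w \circ e = \{w\}$, and when $w = \overline{x}$ it reads $\{\overline{y} \mid x \in e \circ y\} = \{\overline{x}\}$, using $e \circ y = \{y\}$. Associativity I would prove by case analysis on the number of barred arguments among $w_1,w_2,w_3$: with none barred it reduces, via the pointwise extension, to associativity of $\circ$; with exactly one barred there are three positional sub-cases, each of which, after unfolding the mixed-product formula above, reduces to associativity \emph{and} commutativity of $\circ$ in $\monoid$; and with two or three barred arguments both $(w_1 \oplus w_2)\oplus w_3$ and $w_1 \oplus (w_2 \oplus w_3)$ collapse to $\emptyset$, since any product involving two barred elements — directly, or after one mixed step that lands back in $\overline{R}$ — is empty.

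For the $\CBI$-model condition I would fix $w$ and show $\inv w$ is the unique element with $\overline{e} \in w \oplus \inv w$. Existence: for $w = x \in R$, $x \oplus \overline{x} = \{\overline{y} \mid x \in x \circ y\}$ contains $\overline{e}$ because $x \in x \circ e = \{x\}$, and the case $w \in \overline{R}$ follows by commutativity. Uniqueness: if $\overline{e} \in w \oplus w'$ then $w \oplus w'$ contains a barred element, so by the table above exactly one of $w,w'$ is unbarred; writing the unbarred one as $u$ and the barred one as $\overline{v}$, $\overline{e} \in u \oplus \overline{v} = \{\overline{y} \mid v \in u \circ y\}$ forces $v \in u \circ e = \{u\}$, i.e.\ $v = u$, i.e.\ $w' = \inv w$. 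Hence $N$ is a $\CBI$-model. For injectivity: $e$ and $\inv$ are literally part of $N$; $R$ is recovered as the set of unbarred elements of $R \cup \overline{R}$ (which are formally distinct from the elements of the disjoint copy $\overline{R}$); and $\circ$ is recovered as $\oplus$ restricted to $R \times R$, which equals $\circ$ precisely because $(\oplus 2)$ never outputs an unbarred element. Thus distinct $\BBI$-models yield distinct extensions.

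The main obstacle is the associativity verification. It is routine, but it is the one place where one must be careful to invoke the correct instances of the $\BBI$-model axioms for $\monoid$ and to keep track of the barred/unbarred type of each intermediate product; everything else is bookkeeping around the three-line description of $\oplus$.
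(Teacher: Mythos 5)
Your proposal is correct and takes essentially the same route as the paper: your three-line closed-form description of $\oplus$ is exactly the paper's ``elimination principle'' in explicit form, and the subsequent verifications (unit law, commutativity, associativity by case analysis on how many arguments are barred, then existence and uniqueness of duals) mirror the paper's proof step for step. The only difference is that you spell out the injectivity claim explicitly, which the paper leaves implicit.
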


\begin{proof}
We start by stating the following \emph{elimination principle}
for $\oplus$ which follows directly from its introduction rules
$(\oplus 1)$ and $(\oplus 2)$.

\paragraph{\em Elimination principle.}  If $z \in x \oplus y$ then the
following hold:
\begin{enumerate}
\item $z \in R$ iff $x,y \in R$, and if $x,y,z \in R$ then
    $z \in x \circ y$.
\item $z \in \overline{R}$ iff either $x \in R$ and $y \in
    \overline{R}$, or $x \in \overline{R}$ and $y \in R$.
    Furthermore:
    \begin{itemize}
    \item if $x \in R$ and $y,z \in \overline{R}$ then
        $y' \in x \circ z'$, where $\overline{y'} = y$
        and $\overline{z'} = z$;
    \item if $y \in R$ and $x,z \in \overline{R}$ then
        $x' \in z' \circ y$, where $\overline{x'} = x$
        and $\overline{z'} = z$.
    \end{itemize}
\end{enumerate}

With this principle in place we carry out the main proof.
First, we need to check that $\langle R \cup \overline{R},
\oplus, e\rangle$ is a $\BBI$-model, i.e., that $\oplus$ is
commutative and associative, and satisfies $x \circ e = \{x\}$
for all $x \in R \cup \overline{R}$.

We tackle the last of these requirements first.  Since
$\monoid$ is a $\BBI$-model we have $x \in x \circ e = e \circ
x = \{x\}$ for all $x \in R$. Thus, for all $x \in R$, we have
$x \in x \oplus e$ by ($\oplus$1) and $\overline{x} \in
\overline{x} \oplus e$ by ($\oplus$2). That is, $x \in x \oplus
e$ for all $x \in R \cup \overline{R}$.  Now suppose $y \in x
\oplus e$.  Since $e \in R$, there are two cases to consider by
the elimination principle. If both $x,y \in R$ then we have $y
\in x \circ e = \{x\}$, thus $y = x$. Otherwise, both $x,y \in
\overline{R}$ and $x' \in y' \circ e = \{y'\}$, where
$\overline{x'} = x$ and $\overline{y'} = y$. Thus $x' = y'$
and, since $\overline{\,\cdot\,}$ is injective, $x = y$.  So $x
\oplus e = \{x\}$ for all $x \in R \cup \overline{R}$ as
required.

To see that $\oplus$ is commutative, let $z \in x \oplus y$,
and consider the cases given by the elimination principle.
First, suppose that all of $x,y,z \in R$ and $z \in x \circ y$.
Since $\monoid$ is a $\BBI$-model, $\circ$ is commutative, so
$z \in y \circ x$.  Thus by ($\oplus$1) we have $z \in y \oplus
x$. Next, suppose that $x \in R$, $y,z \in \overline{R}$ and
$y' \in x \circ z'$, where $\overline{y'} = y$ and
$\overline{z'} = z$. By ($\oplus$2) we then have $z \in y
\oplus x$. The case where $y \in R$ and $x,z \in \overline{R}$
is symmetric. Thus $z \in x \oplus y$ implies $z \in y \oplus
x$, so $x \oplus y = y \oplus x$ for any $x,y \in R \cup
\overline{R}$, i.e.\ $\oplus$ is commutative.

It remains to show that $\oplus$ is associative, i.e.\ that $(x
\oplus y) \oplus z = x \oplus (y \oplus z)$ for any $x,y,z \in
R \cup \overline{R}$.  We divide into cases as follows:

\paragraph{\em Case: at least two of $x,y,z$ are in
$\overline{R}$.}  The elimination principle implies that $x
\oplus y = \emptyset$ whenever both $x,y \in \overline{R}$ and,
furthermore, $z \in \overline{R}$ whenever $z \in x \oplus y$
and either $x \in \overline{R}$ or $y \in \overline{R}$.
Combined with the pointwise extension of $\oplus$ to sets of
elements, this implies that $(x \oplus y) \oplus z = x \oplus
(y \oplus z) =\emptyset$.

\paragraph{\em Case: none of $x,y,z$ are in $\overline{R}$.}  The
elimination principle implies that $(x \oplus y) \oplus z = (x
\circ y) \circ z$ and $x \oplus (y \oplus z) = x \circ (y \circ
z)$.  We are then done since $\circ$ is associative by assumption.

\paragraph{\em Case: exactly one of $x,y,z$ is in $\overline{R}$.}
We show how to treat the case where $x \in \overline{R}$; the
other cases are similar.  We write $x = \overline{x'}$.  Let $w
\in (\overline{x'} \oplus y) \oplus z = \bigcup_{v \in
\overline{x'} \oplus y} v \oplus z$.  Thus $w \in v \oplus z$
for some $v \in \overline{x'} \oplus y$.  By part 2 of the
elimination principle, $v \in \overline{R}$ and $x' \in v'
\circ y$, where $v = \overline{v'}$.  Applying the same
elimination principle to $w \in \overline{v'} \oplus z$, we
obtain that $w \in \overline{R}$ and $v' \in w' \circ z$, where
$w = \overline{w'}$.  Thus $x' \in \bigcup_{v' \in w' \circ z}
v' \circ y = (w' \circ z) \circ y$.  Since $\circ$ is
associative and commutative, $x' \in w' \circ (y \circ z)$.  By
($\oplus$1), it is certainly the case that $y \circ z \subseteq y
\oplus z$, whence we obtain $x' \in w' \circ (y \oplus z) = (y
\oplus z) \circ w'$. Thus, by ($\oplus$2), we obtain $w \in x
\oplus (y \oplus z)$.

As we have shown $w \in (x \oplus y) \oplus z)$ implies $w \in
x \oplus (y \oplus z)$, we conclude $(x \oplus y) \oplus z = x
\oplus (y \oplus z)$, i.e.\ $\oplus$ is associative as
required. Thus $\langle R \cup \overline{R}, \oplus, e\rangle$
is indeed a $\BBI$-model.

To see that $\langle R \cup \overline{R}, \oplus, e , \inv ,
\infty\rangle$ is a $\CBI$-model, we just need to check that
for any $x \in R \cup \overline{R}$, $\overline{x}$ is the
unique element such that $\infty = \overline{e} \in x \oplus
\overline{x}$. Suppose first that $x \in R$. Since $x \circ e =
\{x\}$, we have
$\overline{e} \in x \oplus \overline{x}$ by ($\oplus$2).  To see that $\overline{x}$ is
unique, suppose that $\overline{e} \in x \oplus y$. By part 2
of the elimination principle, we must have $y = \overline{y'}$
and $y' \in x \circ e = \{x\}$. Thus $y' = x$ so $y =
\overline{x}$ as required. When $x \in \overline{R}$, we have
$x = \overline{y}$ for some $y \in R$ and the reasoning is
exactly dual to the case above, since $\circ$ is commutative.
This completes the proof.
\end{proof}

Another interesting possibility for obtaining $\CBI$-models
from arbitrary $\BBI$-models would be to extend the well-known
Grothendieck completion --- which constructs the canonical Abelian
group corresponding to a total commutative monoid --- to the relational setting.  From a
category-theoretic perspective, it would be interesting to see
whether the obvious forgetful functor from $\CBI$-models to
$\BBI$-models has a left-adjoint, which would give the truly
canonical $\CBI$-model corresponding to any $\BBI$-model.

\section{Related and future work}
\label{sec:conclusion}

We consider related work, and directions for future work, from
several perspectives.

\paragraph{\em Bunched logics:}
In his monograph on $\BI$~\cite{Pym:02}, Pym observed that it
made sense to think not of one bunched logic but rather a
family of bunched logics, characterised by the strengths of
their additive and multiplicative components. We reprise his
diagram of the bunched logic family, suitably updated, in
Figure~\ref{fig:bunched_logics}.  $\CBI$ is the strongest
member of this family, boasting two classical negations and
being characterised by an underlying Boolean algebra in its
additive component and a de Morgan algebra in the
multiplicative component.  Indeed, Pym anticipated the
formulation of $\CBI$ as presented here in at least two
important respects: he observed that a relevantist
approach to multiplicative negation (which we take by using the
involution operation `$\inv$' in our models in place of the
Routley star) is classically compatible with the other
multiplicative connectives; and he noted the
problems with cut-elimination seemingly inherent in a two-sided
sequent calculus for bunched logic.  In this paper,
we provide two key missing links.  First, our display calculus
$\displayCBI$ and its cut-elimination theorem, obtained by
following Belnap's original methodology for display logic~\cite{Belnap:82}, provides a well-behaved proof theory for
$\CBI$.  (Subsequently, the first author has given in~\cite{Brotherston:10} a unified
display calculus proof theory for all four bunched logics in
Figure~\ref{fig:bunched_logics}.)
Second, and perhaps more importantly, we also provide the
connection to Kripke-style resource models with
precisely the structure necessary to interpret $\CBI$. Our
soundness and completeness results establishing the
correspondence between validity and provability, plus cut-elimination for
$\displayCBI$, can be taken as strong evidence that the
formulation of $\CBI$ we present here may be considered
canonical.  We also establish nonconservativity of $\CBI$ over $\BBI$, and its incompleteness with respect to partial functional models.

We remark that the bunched logic $\mathrm{dMBI}$ (standing for
``de Morgan $\BI$'') in the diagram, which combines
intuitionistic additives with classical multiplicatives, has
not been investigated in any great detail, to our knowledge,
but it is closely related to the relevant logic $\mathrm{RW}$.
See the section on relevant logics below for a comparison.

\begin{figure}
\begin{center}
\begin{picture}(100,80)(0,0)

  \gasset{Nframe=n,Nadjust=w,Nh=15,Nmr=2,AHLength=3,ATLength=3}

  \node(BI)(50,5){\begin{tabular}{c}$\mathbf{BI}$\\(Heyting, Lambek) \\ \emph{decidable}~\cite{Galmiche-Mery-Pym:05} \end{tabular}}
  \node(BBI)(90,40){\begin{tabular}{c}$\mathbf{BBI}$ \\ (Boolean, Lambek) \\ \emph{undecidable}~\cite{Brotherston-Kanovich:10,Larchey-Wendling-Galmiche:10} \end{tabular}}
  \node(CBI)(50,75){\begin{tabular}{c}$\mathbf{CBI}$ \\ (Boolean, de Morgan) \\ \emph{undecidable}~\cite{Brotherston-Kanovich:10} \end{tabular}}
  \node(IBI)(10,40){\begin{tabular}{c}{$\mathbf{dMBI}$} \\ (Heyting, de Morgan) \\ \end{tabular}}

  \drawedge[ELside=r,dash={0.7}0,exo=5](BI,BBI){$\neg$}
  \drawedge[dash={0.7}0,exo=-5](BI,IBI){$\mneg$}
  \drawedge[ELside=r,dash={0.7}0,sxo=5](BBI,CBI){$\mneg$}
  \drawedge[dash={0.7}0,sxo=-5](IBI,CBI){$\neg$}
\end{picture}
\end{center}

\caption{The bunched logic family. The (additive,
multiplicative) subtitles denote the strength of the underlying
additive and multiplicative algebras.  The arrows denote the
addition of either additive ($\neg)$ or multiplicative
($\mneg$) classical negation. \label{fig:bunched_logics}}
\end{figure}

\paragraph{\em Relevant logics:}
$\CBI$, like its bunched logic predecessors, owes a historical
debt to the extensive work on relevant logics and takes many of
its mathematical cues from the development of these logics, as
described in the case of $\BI$ by O'Hearn and
Pym~\cite{OHearn-Pym:99}.  Indeed, as they point out, if one
understands by ``relevant logics'' nothing but logics whose
logical connectives are understood primarily in terms of the
structural rules which they must respect (cf.~\cite{Read:87}),
then bunched logics \emph{are} relevant logics.  However, in
bunched logics, the philosophical ideal of
relevance has been entirely sacrificed in favour of full-strength
additives as equal partners alongside the multiplicatives.
The justification for doing so is semantic; in the Kripke models of bunched logics, one has
a simple truth reading of formulas in terms of resources, in which the additives have their standard meanings.  In other words, while relevant logic
seeks to exclude the paradoxes of material implication, in the setting of bunched logic we regard these paradoxes as being perfectly justifiable in terms of our resource models.

Retrospectively, $\CBI$ can be obtained in terms of relevant logics
by a series of surgeries on the axiomatisation of the full
system $\mathbf{R}$ and its corresponding class of Kripke
models (see e.g.~\cite{Restall:00}) in the following way.
First, drop the axiom of multiplicative contraction from
$\mathbf{R}$ (so that the corresponding condition $Rxxx$ on the
ternary relation $R$ in the Kripke models of $\mathbf{R}$ does
not necessarily hold) to obtain the well-known relevant logic
$\mathbf{RW}$ (a.k.a.\ $\mathbf{C}$). Then one can add both the
additive \emph{intuitionistic} implication $\rightarrow$ and
falsum $\false$, which are barred from relevant logics in order
to exclude various logical principles which contravene the
philosophical notion of relevance (e.g.\ the classical
tautology $A \wedge B \rightarrow A$).  This addition is
\emph{conservative} over the language of $\mathbf{RW}$ because
$\rightarrow$ and $\false$ can already be interpreted in its
Kripke models using the ordering $\leq$ on points in the model
in the usual intuitionistic way (a fact exploited by Restall in
order to formulate display calculi for $\mathbf{RW}$ and other
relevant logics~\cite{Restall:98}). At this point we have
obtained a characterisation of the bunched logic
$\mathrm{dMBI}$, whence to obtain $\CBI$ we strengthen the
implication $\rightarrow$ into the (additive) \emph{classical}
implication, which corresponds to taking $\leq$ in the
corresponding Kripke models to be the identity ordering.  The
situation is also similar to that for the \emph{classical
relevant logics} introduced by Meyer and
Routley~\cite{Meyer-Routley:73,Meyer-Routley:74}, which feature
traditional Boolean negation alongside the relevantist negation
employing the Routley star --- though, again, multiplicative
contraction must be removed and the additives given their full
classical strength.

Similarly, it would not surprise a relevantist that $\CBI$ can be given a display calculus presentation, as display logic historically served as one of the main
proof-theoretic tools in formulating sensible proof systems for
relevant and other substructural logics.  Indeed, one might
deduce that this was the main intention behind Belnap's
original formulation of display logic~\cite{Belnap:82}, in
which the choice of structural rules for a particular logic are
identified as the principal factor affecting cut-elimination.
We note that Gor\'e has shown how to automatically generate
display calculi for a general class of substructural logics
based on Dunn's \emph{gaggle theory}~\cite{Gore:98-2}, and it
seems more than likely that his techniques could equally well
be used to obtain $\displayCBI$.  Similarly, the correct
formulation of $\displayCBI$ could have been deduced from
Restall's display calculi for the relevant logic $\mathbf{DW}$
and its various extensions including
$\mathbf{RW}$~\cite{Restall:98}.  In both the aforementioned
cases, however, the modelling power is in considerable excess
of what is needed to obtain our display calculus for $\CBI$,
which falls directly under Belnap's original description of
displayable logics in~\cite{Belnap:82} because it features
classical negation in both its additive and multiplicative
connective families.

\paragraph{\em Linear logic:} Readers may wonder
about the relationship between $\CBI$ and classical linear
logic ($\CLL$), which also features a full set of propositional
multiplicative connectives, and is a nonconservative extension
of intuitionistic linear logic ($\ILL$)~\cite{Schellinx:91}.
The differences between the two are intuitively obvious when
comparing our money model of $\CBI$ (Example~\ref{ex:money})
alongside Girard's corresponding Marlboro / Camel
example~\cite{Girard:95}.  In particular, formulas in our model
are read as declarative statements about resources (i.e.\
money), whereas linear logic formulas in Girard's model are
typically read as procedural statements about actions. Compared
to $\CLL$, $\CBI$ has the advantage of a simple, declarative
notion of truth relative to resource, but this advantage
appears to come at the expense of $\CLL$'s constructive
interpretation of proofs.

Of course, the typical reading of $\BI$ departs from that of
$\ILL$ in a similar way (see~\cite{OHearn-Pym:99} for a
discussion), and indeed it seems that the main differences
between $\CBI$ and $\CLL$ are inherited from the wider
differences between bunched logic and linear logic in general.
These differences are not merely conceptual, but are also
manifested at the technical level of logical consequence. For
example, $\seq{P \linimp Q}{P \rightarrow Q}$ is a theorem of
linear logic for any propositions $P$ and $Q$, via the encoding
of additive implication $P \rightarrow Q$ as $!P \linimp Q$,
but $\seq{P \wand Q}{P \rightarrow Q}$ is not a theorem of
bunched logic.  Similarly, distributivity of additive
conjunction $\wedge$ over additive disjunction $\vee$ holds in
bunched logics, but fails in linear logics.  Further differences
are highlighted in~\cite{Brotherston-Kanovich:10}.

Interestingly, however, there is an intersection between
$\CBI$-models and the $\CLL$-models obtained from the
\emph{phase semantics} of classical linear
logic~\cite{Girard:95}. A $\CBI$-model $\cbimodel$ in which the
monoid operation $\circ$ is a total function, rather than a
relation, is a special instance of a phase space, used to
provide a phase model of $\CLL$.  This can be seen by taking
the linear logic ``perp'' $\false$ to be the set $R \setminus
\{\infty\}$, whence the linear negation $X^\perp$ on sets $X
\subseteq R$ becomes $\inv X$.  In the linear logic
terminology, every subset $X$ of $R$ is then a ``fact''  in the
sense that $(X^\false)^\false = \inv\inv X = X$. It seems
somewhat curious that there is a subclass of models where
$\CBI$ and $\CLL$ agree, since known interesting phase models
of linear logic are relatively few whereas there appear to be
many interesting $\CBI$-models (cf.\
Section~\ref{sec:examples}).  However, one can argue that this
subclass is faithful to the spirit of neither logic. On the one
hand, the restriction to a total monoid operation in
$\CBI$-models rules out many natural examples where resource
combination is partial (or indeed relational).  On the other
hand, it seems certain that the induced subclass of $\CLL$
phase models will be at odds with the coherence semantics of
$\CLL$ proofs.

\paragraph{\em Applications:}
The main application of $\BBI$ so far has been the use of
separation logic in program analysis. There are now several
program analysis
tools~\cite{Calcagno-Parkinson-Vafeiadis:07,Chang-Rival:08,Distefano-Parkinson:08,Yang-etal:08,Nguyen-Chin:08}
which use logical and semantic properties of the heap model of
$\BBI$ at their core. These tools typically define a suitable
fragment of separation logic with convenient algebraic
properties, and use it in custom lightweight theorem provers
and abstract domains. We suggest that our work on $\CBI$ could
be relevant in this area as a foundation for richer resource
models.  In this paper we have already given several new models
and model constructions which, though relatively simple in
their present form, are suggestive of the applicability of
$\CBI$ to more complex domains (cf.\
Section~\ref{sec:examples}).  In particular, we have observed
that several models introduced recently for reasoning about
concurrent access to resources are $\CBI$ models, e.g.\
fractional permissions as used in deny-guarantee reasoning
(cf.\ Example~\ref{ex:deny_guarantee}).

More speculatively, our display calculus $\displayCBI$ might
form a basis for the design of new theorem provers, which could
easily employ the powerful (and historically difficult to use)
implication $\wand$ since, in $\CBI$, it can be reexpressed
using more primitive connectives. Moreover, the notion of
dual or negative resource might be employed in extended theorem proving
questions, such as the frame inference problem $F \vdash G * X$
where the frame $X$ is computed essentially by subtracting $G$
from $F$.  A similar problem is the bi-abduction question,
which forms the basis of the compositional shape analysis
in~\cite{Calcagno-etal:09} and has the form $\seq{F * X}{G
* Y}$, interpreted as an obligation to find formulae to
instantiate $X$ and $Y$ such that the implication holds.  This
question arises at program procedure call sites, where $F$ is
the procedure's precondition, $G$ is the current precondition
at the call point, $X$ is the resource missing, and $Y$ is the
leftover resource. We speculate that such inferences could be
explained in terms of an ordinary proof theory, providing that
multiplicative negation is supported, as in $\CBI$.

Finally, $\CBI$ could be applied to the study of other logics.
For example, Kleene's 3-valued logic~\cite{Kleene:87} can be
modelled using a subset of $\CBI$'s connectives. Consider the
two-element $\CBI$ model given by
$\langle\{e,\infty\},\circ,e,\inv,\infty\rangle$, where $\infty
\circ \infty = \emptyset$ (note that $\circ$ and $\inv$ are
then determined by the $\CBI$-model axioms). There are
$\CBI$-formulas denoting each of the subsets of $\{e,\infty\}$:
$\true$, $\false$, $\mtrue$, $\infty$ (where $\infty$ is used
as an abbreviation for $\neg \mfalse$). To model 3-valued logic
we focus on $\true$, $\false$, $\infty$, with $\infty$ playing
the role of the third logical value, ``unknown''. A direct
calculation shows that the connectives $\wedge$, $\vee$, and
$\mneg$ indeed generate the truth tables required by 3-valued
logic. For example, we have $\infty \vee \mneg \infty = \infty
\vee \infty = \infty$. We speculate that $\CBI$ could be
applied to other situations in logic where a non-standard
notion of negation is used.

We believe that, aside from its intrinsic technical interest,
our development of $\CBI$ contributes to the picture of bunched
logic and its connections to computer science as a whole, as
well as to the broader area of substructural logics in general.
Although our suggestions regarding specific applications of
$\CBI$ are necessarily still somewhat speculative at this early
stage in its existence, we hope that the foundations
established in this paper will provide a solid platform upon
which such applications can, in time, be constructed.

\subsubsection*{Acknowledgements}
We extend special thanks to Peter O'Hearn and David Pym for
many interesting and enlightening discussions which informed
the present paper. We also thank Byron Cook, Ross Duncan,
Philippa Gardner, Greg Restall, Sam Staton, Alex Simpson, and
the members of the East London Massive for useful discussions
and feedback.  Finally, thanks to the three anonymous referees
for their useful suggestions on improving the paper.

\appendix

\section{Cut-elimination for $\displayCBI$ \texorpdfstring{(Theorem~\ref{thm:DLBI_cut_elim})}{(Theorem 3.8)}}
\label{app:DLBI_cut_elim}
\renewcommand{\thetheorem}{\thesection.\arabic{theorem}}
\setcounter{theorem}{0}

The following definition is taken from Belnap~\cite{Belnap:82}. By a
\emph{constituent} of a structure or consecution we mean an
occurrence of one of its substructures.

\begin{defn}[Parameters / congruence]
\label{defn:display_congr} Let $I$ be an instance of a rule $R$ of
$\displayCBI$.  Note that $I$ is obtained by assigning structures to
the structure variables occurring in $R$ and formulas to the formula
variables occurring in $R$.

Any constituent of the consecutions in $I$ occurring as part of
structures assigned to structure variables in $I$ are defined to be
\emph{parameters} of $I$.  All other constituents are defined to be
\emph{non-parametric} in $I$, including those assigned to formula
variables.

Constituents occupying similar positions in occurrences of
structures assigned to the same structure variable are defined to be
\emph{congruent} in $I$.
\end{defn}

We remark that congruence as defined above is an equivalence
relation.

Belnap's analysis guarantees cut-elimination for $\displayCBI$
(Theorem~\ref{thm:DLBI_cut_elim}) provided its proof rules (cf.\ Figure~\ref{fig:logical_rules}) satisfy the
following conditions, which are stated with reference to an
instance $I$ of a $\displayCBI$ rule $R$. (Here, following
Kracht~\cite{Kracht:96}, we state a stronger, combined version
of Belnap's original conditions C6 and C7, since our rules
satisfy this stronger condition.)  In each case, we indicate
how to verify that the condition holds for our rules.
\begin{description}
\item[C1] \emph{Preservation of formulas.}  Each formula
    which is a constituent of some premise of $I$ is a
    subformula of some formula in the conclusion of $I$. \\

    \emph{Verification.} One observes that, in each rule,
    no formula variable or structure variable is lost when
    passing from the premises to the conclusions. \\

\item[C2] \emph{Shape-alikeness of parameters.}  Congruent
    parameters are occurrences of the same structure.\\

    \emph{Verification.} Immediate from the definition of
    congruence. \\

\item[C3] \emph{Non-proliferation of parameters.}  No two
    constituents in the conclusion of $I$ are congruent to
    each other.\\

    \emph{Verification.} One just observes that, for each
    rule, each structure variable occurs exactly once in
    the conclusion. \\

\item[C4] \emph{Position-alikeness of parameters.}
    Congruent parameters are either all antecedent or all
    consequent parts of their respective consecutions.\\

    \emph{Verification.} One observes that, in each rule,
    no structure variable occurs both as an antecedent part
    and a consequent part. \\

\item[C5] \emph{Display of principal constituents.}  If a
    formula is nonparametric in the conclusion of $I$, it
    is either the entire antecedent or the entire
    consequent of that conclusion.  Such a formula is said
    to be \emph{principal} in $I$.\\

    \emph{Verification.} It is easy to verify that the only
    non-parametric formulas in the conclusions of our rules
    are the two occurrences of P in $\ax$ and those
    occurring in the introduction rules for the logical
    connectives, which
    obviously satisfy the condition. \\

\item[C6/7] \emph{Closure under substitution for
    parameters.}  Each rule is closed under simultaneous
    substitution of arbitrary structures for congruent
    formulas which are parameters.\\

    \emph{Verification.} This condition is satisfied
    because no restrictions are placed on the structural
    variables used in our rules. \\

\item[C8] \emph{Eliminability of matching principal
    formulas.}  If there are inferences $I_1$ and $I_2$
    with respective conclusions $\seq{X}{F}$ and
    $\seq{F}{Y}$ and with $F$ principal in both inferences,
    then either $\seq{X}{Y}$ is equal to one of
    $\seq{X}{F}$ and $\seq{F}{Y}$, or there is a derivation
    of $\seq{X}{Y}$ from the premises of $I_1$ and $I_2$ in
    which every instance of cut has a cut-formula which is
    a proper subformula of $F$.\\

    \emph{Verification.} There are only two cases to
    consider.  If $F$ is atomic then $\seq{X}{F}$ and
    $\seq{F}{Y}$ are both instances of $\ax$.  Thus we must
    have $\seq{X}{F} = \seq{F}{Y} = \seq{X}{Y}$, and are
    done. Otherwise $F$ is non-atomic and introduced in
    $I_1$ and $I_2$ respectively by the right and left
    introduction rule for the main connective of $F$.  In
    this case, a derivation of the desired form can be
    obtained using only the display rule $\display$ and
    cuts on subformulas of $F$.  For example, if the
    considered cut is of the form:

\[\begin{prooftree}
\[
\[\leadsto \seq{X}{F , G} \]
\justifies \seq{X}{F \mor G} \using \morr \]
\[
\[\leadsto \seq{F}{Y} \]
\phantom{w}
\[\leadsto \seq{G}{Z} \]
\justifies
\seq{F \mor G}{Y,Z} \using \morl \]
\justifies
\seq{X}{Y,Z} \using \cut
\end{prooftree}\]\vspace{0.2cm}

\noindent then we can reduce this cut to cuts on $F$ and
$G$ in the following manner:

\[\begin{prooftree}
\[\[\[\[
\[\leadsto \seq{X}{F,G} \]
\justifies
\seq{X,\minv G}{F} \using \display \]
\[\leadsto \seq{F}{Y} \]
\justifies \seq{X,\minv G}{Y}
\using \cut \]
\justifies
\seq{X,\minv Y}{G} \using \display \]
\[\leadsto \seq{G}{Z}\]
\justifies
\seq{X,\minv Y}{Z} \using \cut \]
\justifies \seq{X}{Y,Z} \using \display
\end{prooftree}\] \vspace{0.2cm}

The cases for the other connectives are similarly
straightforward.  This completes the verification of the
conditions, and thus the proof. \qed
\end{description}


\begin{thebibliography}{10}

\bibitem{Abramsky:93}
Samson Abramsky.
\newblock Computational interpretations of linear logic.
\newblock {\em Theoretical Computer Science}, 111(1-2):3--57, 1993.

\bibitem{Belnap:82}
Nuel~D. {Belnap, Jr.}
\newblock Display logic.
\newblock {\em Journal of Philosophical Logic}, 11:375--417, 1982.

\bibitem{Berdine-OHearn:06}
Josh Berdine and Peter O'Hearn.
\newblock Strong update, disposal and encapsulation in bunched typing.
\newblock In {\em Proceedings of {MFPS}}, ENTCS. Elsevier, 2006.

\bibitem{Blackburn-deRijke-Venema:01}
Patrick Blackburn, Maarten de~Rijke, and Yde Venema.
\newblock {\em Modal Logic}.
\newblock Camb.~Univ.~Press, 2001.

\bibitem{Brotherston:10}
James Brotherston.
\newblock A unified display proof theory for bunched logic.
\newblock In {\em Proceedings of {MFPS-26}}, 2010.

\bibitem{Brotherston-Calcagno:09}
James Brotherston and Cristiano Calcagno.
\newblock Classical {BI} ({A} logic for reasoning about dualising resource).
\newblock In {\em Proceedings of {POPL}-36}, pages 328--339, 2009.

\bibitem{Brotherston-Kanovich:10}
James Brotherston and Max Kanovich.
\newblock Undecidability of propositional separation logic and its neighbours.
\newblock In {\em Proceedings of {LICS-25}}, 2010.

\bibitem{Brunnler:06}
Kai Br{\"u}nnler.
\newblock Deep inference and its normal form of derivations.
\newblock In {\em Proceedings of {CiE}}, volume 3988 of {\em LNCS}, pages
  65--74, 2006.

\bibitem{Calcagno-etal:09}
Cristiano Calcagno, Dino Distefano, Peter O'Hearn, and Hongseok Yang.
\newblock Compositional shape analysis by means of bi-abduction.
\newblock In {\em Proceedings of {POPL}-36}, pages 289--300, 2009.

\bibitem{Calcagno-Gardner-Zarfaty:07}
Cristiano Calcagno, Philippa Gardner, and Uri Zarfaty.
\newblock Context logic as modal logic: Completeness and parametric
  inexpressivity.
\newblock In {\em Proceedings of {POPL}-34}, 2007.

\bibitem{Calcagno-OHearn-Yang:07}
Cristiano Calcagno, Peter O'Hearn, and Hongseok Yang.
\newblock Local action and abstract separation logic.
\newblock In {\em Proceedings of {LICS}-22}, pages 366--378. IEEE Computer
  Society, 2007.

\bibitem{Calcagno-Parkinson-Vafeiadis:07}
Cristiano Calcagno, Matthew Parkinson, and Viktor Vafeiadis.
\newblock Modular safety checking for fine-grained concurrency.
\newblock In {\em Proceedings of {SAS}-14}, LNCS. Springer, 2007.

\bibitem{Chang-Rival:08}
Bor-Yuh~Evan Chang and Xavier Rival.
\newblock Relational inductive shape analysis.
\newblock In {\em Proc.~{POPL}-35}, 2008.

\bibitem{Chin-etal:08}
Wei-Ngan Chin, Cristina David, Huu~Hai Nguyen, and Shengchao Qin.
\newblock Enhancing modular {OO} verification with separation logic.
\newblock In {\em Proceedings of {POPL}-35}, 2008.

\bibitem{Collinson-Pym-Robinson:08}
Matthew Collinson, David Pym, and Edmund Robinson.
\newblock Bunched polymorphism.
\newblock {\em Mathematical Structures in Computer Science}, 18(6):1091--1132,
  2008.

\bibitem{Distefano-Parkinson:08}
Dino Distefano and Matthew Parkinson.
\newblock j{S}tar: Towards practical verification for {J}ava.
\newblock In {\em Proceedings of {OOPSLA}}, pages 213--226. ACM, 2008.

\bibitem{Dockins-etal:09}
Robert Dockins, Aquinas Hobor, and Andrew~W. Appel.
\newblock A fresh look at separation algebras and share accounting.
\newblock In {\em Proceedings of {APLAS}}, volume 5904 of {\em LNCS}, pages
  161--177. Springer-Verlag, 2009.

\bibitem{Dodds-etal:09}
Mike Dodds, Xinyu Feng, Matthew Parkinson, and Viktor Vafeiadis.
\newblock Deny-guarantee reasoning.
\newblock In {\em Proceedings of 18th {ESOP}}, volume 5502 of {\em LNCS}, pages
  363--377. Springer-Verlag, 2009.

\bibitem{Dunn:93}
Michael Dunn.
\newblock Star and perp: Two treatments of negation.
\newblock {\em Philosophical Perspectives}, 7:331--357, 1993.

\bibitem{Foulis-Bennett:94}
D.J. Foulis and M.K. Bennett.
\newblock Effect algebras and unsharp quantum logics.
\newblock {\em Foundations of Physics}, 24:1331--1352, 1994.

\bibitem{Galmiche-Mery-Pym:05}
D.~Galmiche, D.~Mery, and D.~Pym.
\newblock The semantics of {BI} and resource tableaux.
\newblock {\em Mathematical Structures in Computer Science}, 15:1033--1088,
  2005.

\bibitem{Galmiche-Larchey-Wendling:06}
Didier Galmiche and Dominique Larchey-Wendling.
\newblock Expressivity properties of {B}oolean {BI} through relational models.
\newblock In {\em Proceedings of {FSTTCS}}, 2006.

\bibitem{Girard:95}
Jean-Yves Girard.
\newblock Linear logic: Its syntax and semantics.
\newblock In J.-Y. Girard, Y.~Lafont, and L.~Regnier, editors, {\em Advances in
  Linear Logic}, pages 1--42. Cambridge University Press, 1995.

\bibitem{Gore:96-2}
Rajeev Gor\'e.
\newblock On the completeness of classical modal display logic.
\newblock In Heinrich Wansing, editor, {\em Proof Theory of Modal Logic}, pages
  137--140. Kluwer Academic Publishers, 1996.

\bibitem{Gore:98-2}
Rajeev Gor\'e.
\newblock Gaggles, {G}entzen and {G}alois: {H}ow to display your favourite
  substructural logic.
\newblock {\em Logic Journal of the {IGPL}}, 6(5):669--694, 1998.

\bibitem{Ishtiaq-OHearn:01}
Samin Ishtiaq and Peter~W. O'Hearn.
\newblock {BI} as an assertion language for mutable data structures.
\newblock In {\em Proceedings of POPL-28}, 2001.

\bibitem{Jacobs:09}
Bart Jacobs.
\newblock Duality for convexity.
\newblock Pre-print available as \url{arXiv:0911.3834v1 [math.LO]}, 2009.

\bibitem{Kleene:87}
Stephen~Cole Kleene.
\newblock {\em Introduction to Metamathematics, 2nd edn.}
\newblock North-Holland, 1987.
\newblock Amsterdam.

\bibitem{Kracht:96}
Marcus Kracht.
\newblock Power and weakness of the modal display calculus.
\newblock In Heinrich Wansing, editor, {\em Proof Theory of Modal Logic}, pages
  93--121. Kluwer Academic Publishers, 1996.

\bibitem{Larchey-Wendling-Galmiche:10}
Dominique Larchey-Wendling and Didier Galmiche.
\newblock The undecidability of {B}oolean {BI} through phase semantics.
\newblock In {\em Proceedings of {LICS-25}}, 2010.
\newblock To appear.

\bibitem{Meyer-Routley:73}
Robert~K. Meyer and Richard Routley.
\newblock Classical relevant logics {I}.
\newblock {\em Studia Logica}, 32:51--68, 1973.

\bibitem{Meyer-Routley:74}
Robert~K. Meyer and Richard Routley.
\newblock Classical relevant logics {II}.
\newblock {\em Studia Logica}, 33:183--194, 1974.

\bibitem{Meyer-Slaney:89}
Robert~K. Meyer and John Slaney.
\newblock Abelian logic (from {A} to {Z}).
\newblock In Priest, Routley, and Norman, editors, {\em Paraconsistent Logic:
  Essays on the Inconsistent}, pages 245--288. Philosophia-Verlag, 1989.

\bibitem{Milner:89}
Robin Milner.
\newblock {\em Communication and Concurrency}.
\newblock Prentice-Hall International, 1989.

\bibitem{Nguyen-Chin:08}
Huu~Hai Nguyen and Wei-Ngan Chin.
\newblock Enhancing program verification with lemmas.
\newblock In {\em Proceedings of {CAV} 2008}, volume 5123 of {\em LNCS}, pages
  355--369. Springer-Verlag, 2008.

\bibitem{OHearn-Pym:99}
Peter~W. {O'H}earn and David~J. Pym.
\newblock The logic of bunched implications.
\newblock {\em Bulletin of Symbolic Logic}, 5(2):215--244, June 1999.

\bibitem{Parkinson-Bierman:08}
Matthew Parkinson and Gavin Bierman.
\newblock Separation logic, abstraction and inheritance.
\newblock In {\em Proceedings of {POPL}-35}, 2008.

\bibitem{Prawitz:65}
Dag Prawitz.
\newblock {\em Natural Deduction: A Proof-Theoretical Study}.
\newblock Almqvist \& Wiksell, 1965.

\bibitem{Pym:02}
David Pym.
\newblock {\em The Semantics and Proof Theory of the Logic of Bunched
  Implications}.
\newblock Applied Logic Series. Kluwer, 2002.
\newblock Errata and remarks ({Pym 2004}) maintained at
  \url{http://www.cs.bath.ac.uk/~pym/reductive-logic-errata.html}.

\bibitem{Pym-OHearn-Yang:04}
David Pym, Peter {O'H}earn, and Hongseok Yang.
\newblock Possible worlds and resources: The semantics of {BI}.
\newblock {\em Theoretical Computer Science}, 315(1):257--305, 2004.

\bibitem{Read:87}
Stephen Read.
\newblock {\em Relevant Logic: A Philosophical Examination}.
\newblock Basil Blackwell, 1987.

\bibitem{Restall:98}
Greg Restall.
\newblock Displaying and deciding substructural logics 1: Logics with
  contraposition.
\newblock {\em Journal of Philosophical Logic}, 27:179--216, 1998.

\bibitem{Restall:99}
Greg Restall.
\newblock {\em Negation in Relevant Logics ({H}ow {I} stopped worrying and
  learned to love the {R}outley {S}tar)}, volume~13 of {\em Applied Logic
  Series}, pages 53--67.
\newblock Kluwer Academic Publishers, 1999.

\bibitem{Restall:00}
Greg Restall.
\newblock {\em An Introduction to Substructural Logics}.
\newblock Routledge, 2000.

\bibitem{Reynolds:02}
John~C. Reynolds.
\newblock Separation logic: A logic for shared mutable data structures.
\newblock In {\em Proceedings of 17th {LICS}}, 2002.

\bibitem{Routley-Routley:72}
Richard Routley and Valerie Routley.
\newblock Semantics of first-degree entailment.
\newblock {\em No\^us}, 3:335--359, 1972.

\bibitem{Schellinx:91}
Harold Schellinx.
\newblock Some syntactical observations on linear logic.
\newblock {\em Journal of Logic and Computation}, 1(4):537--559, 1991.

\bibitem{Urquhart:84}
Alasdair Urquhart.
\newblock The undecidability of entailment and relevant implication.
\newblock {\em Journal of Symbolic Logic}, 49(4):1059--1073, 1984.

\bibitem{Yang-etal:08}
Hongseok Yang, Oukseh Lee, Josh Berdine, Cristiano Calcagno, Byron Cook, Dino
  Distefano, and Peter~W. O'Hearn.
\newblock Scalable shape analysis for systems code.
\newblock In {\em Proceedings of {CAV}}, 2008.

\end{thebibliography}
\end{document}